\newcommand\x{\vec x}
\newcommand{\f}{f}
\newcommand\opt{\mathsf{opt}}
\newtheorem{theorem}{Theorem}[section]
\newtheorem{lemma}[theorem]{Lemma}
\newtheorem*{theorem*}{Theorem}
\newtheorem*{lemma*}{Lemma}
\newtheorem{definition}[theorem]{Definition}
\newtheorem{corollary}[theorem]{Corollary}
\newtheorem{observation}[theorem]{Observation}
\newtheorem{claim}[theorem]{Claim}
\newtheorem{optimization}[theorem]{Optimization Problem}
\newcommand{\eps}{\epsilon}
\newcommand{\alg}{\textsc{ALG}}
\newcommand{\E}{\mathbb{E}}
\renewcommand{\opt}{\textsc{OPT}}
\newcommand{\ub}{\textsc{UB}}
\newcommand{\lb}{\textsc{LB}}
\newcommand{\ratio}{\textsc{R}}
\newcommand{\si}{SR }
\newcommand{\dist}{D}
\newcommand{\adx}{AdEx }
\newcommand{\gam}{\gamma}
\begin{document}

\title{Online Allocation and Display Ads Optimization with Surplus Supply}
\author{
Melika Abolhassani \\ Google \\ {\tt melikaa@google.com} \and 
Hossein Esfandiari \\ Google \\ {\tt esfandiari@google.com} \and 
Yasamin Nazari \\ Johns Hopkins University \\ {\tt ynazari@jhu.edu} \and 
Balasubramanian Sivan \\ Google \\ {\tt balusivan@google.com} \and
Yifeng Teng \\ UW-Madison \\ {\tt yifengt@cs.wisc.edu} \and 
Creighton Thomas \\ Google  \\ {\tt creighton@google.com}
}
\date{}

\maketitle
\thispagestyle{empty}

\begin{abstract}
In this work, we study a scenario where a publisher seeks to maximize its total revenue across two sales channels: guaranteed contracts that promise to deliver a certain number of impressions to the advertisers, and spot demands through an Ad Exchange. On the one hand, if a guaranteed contract is not fully delivered, it incurs a penalty for the publisher. On the other hand, the publisher might be able to sell an impression at a high price in the Ad Exchange. How does a publisher maximize its total revenue as a sum of the revenue from the Ad Exchange and the loss from the under-delivery penalty? We study this problem parameterized by \emph{supply factor $f$}: a notion we introduce that, intuitively, captures the number of times a publisher can satisfy all its guaranteed contracts given its inventory supply. In this work we present a fast simple deterministic algorithm with the optimal competitive ratio. The algorithm and the optimal competitive ratio are a function of the supply factor, penalty, and the distribution of the bids in the Ad Exchange. 

Beyond the yield optimization problem, classic online allocation problems such as online bipartite matching of Karp-Vazirani-Vazirani~\cite{KVV90} and its vertex-weighted variant of Aggarwal et al.~\cite{AGKM11} can be studied in the presence of the additional supply guaranteed by the supply factor. We show that a supply factor of $f$ improves the approximation factors from $1-1/e$ to $f-fe^{-1/f}$. Our approximation factor is tight and approaches $1$ as $f \to \infty$.
\end{abstract}

\newpage
\setcounter{page}{1}

\section{Introduction}
\label{sec:intro}
An overwhelming majority of publishers on the web monetize their service by displaying ads alongside their content. The revenue stream of such publishers typically comes from two key channels, often referred to as direct sales and indirect sales. In the direct sales channel the publisher strikes several contracts with some major advertisers. The price of such contracts are often negotiated and decided on a per-impression basis before the serving begins. 
In the indirect sales channel, the ad is selected by seeking, in real-time, bids in an Ad Exchange platform (\adx for short). In this case an auction is conducted to select the winner and decide how much they pay. 
A comprehensive yield optimization consists of jointly optimizing the publisher's revenue across both channels. In fact, revenue optimization in this context is significantly important since the display ads industry represents a giant ($>$ \$50B) marketplace and is fast growing even at its current mammoth size.

\paragraph{Basic setting and preliminaries.} We begin by formally describing our setting. The joint yield optimization problem can be modeled as an online edge-weighted and vertex-capacitated bipartite matching problem. There is a set $A$ of offline vertices that correspond to the advertisers with contracts (direct sales), and there is an additional special offline vertex $a_d$ representing \adx (indirect sales). Advertiser $a \in A$ has capacity $n_a$ and we have $n_{a_d} = \infty$. The capacity $n_a$ represents the number of impressions demanded by contractual advertiser $a$. Let $N = \sum_{a \in A} n_a$. There is a penalty $c$ that the publisher pays an advertiser for every undelivered impression\footnote{We later discuss relaxing the penalty $c$ to depend on the advertiser $a$.}: i.e., if at the end of the algorithm we assign $k_a < n_a$ impressions to $a \in A$, the publisher pays $c(n_a-k_a)$ to $a$ (there is no benefit to the publisher for delivering beyond $n_a$ impressions). The publisher is not obligated to deliver any impression to \adx, and thus doesn't incur any penalty from $a_d$. Advertisers are represented as \textit{offline vertices}. Users/queries, arrive \textit{online} in an adversarial manner, and they constitute the online vertex set. When an online vertex (query) arrives, the set of its incident edges to offline vertices (representing the offline nodes that are eligible to be assigned this query) becomes known to the algorithm. \emph{Every} arriving query has an edge to the \adx node $a_d$, i.e., every query can be sent to an exchange seeking a bid. All edges incident on any node $a \in A$ have the same weight\footnote{Unweighted edges for contractual advertisers is fine because these contracts are mostly based on the number of impressions delivered. In a few cases the contracts are based on the number of clicks or conversions, in which case the edges will be weighted based on the probability of click or conversion. Contracts based on impressions form such a large majority, that having unweighted edges, is almost wlog.} and the edges incident on the \adx node $a_d$ could have an arbitrary weight depending on the highest bid from the Exchange. \adx is modeled by the distribution $D$ of highest bids in the exchange: i.e., regardless of the query that arrives, when it is assigned to $a_d$, the publisher accrues a profit that is equal to a draw from $D$. The publisher's basic problem is to decide, on a per-query basis, whether to assign the query to a contract advertiser (if so, whom) or to \adx. 

Publisher's goal is to maximize its overall revenue. Publishers typically have pre-negotiated prices $p_a$ for each contractual advertiser $a$. The total revenue of the publisher will be the sum of three parts (i) the revenue from \adx (i.e., the sum of edge weights of queries assigned to \adx), (ii) the revenue from contracts: $\sum_{a\in A}n_a \cdot p_a$, and (iii) the revenue lost due to under-delivery, i.e., the negative of the penalty paid. Note that (ii) is a constant, and is unaffected by the allocation algorithm. Thus, while computing competitive ratio, we compute it w.r.t. the sum of (i) and (iii).  

\paragraph{Supply factor.} An important concept that we introduce is what we call a \textit{supply factor} of an instance, which captures the (potentially fractional) number of times that a publisher will be able to satisfy their contractual advertisers' demands. Formally, let a complete matching be defined as one where all contractual advertisers' demands $n_a$ are fully satisfied, i.e., all the offline vertices are fully saturated. The supply factor of an instance is defined as the largest positive real number $f$ s.t., there exists an \emph{offline solution} with $f$ complete matchings. If there are many such matchings, we pick one to be the supply-factor-determining-offline-solution. In this work, we assume that the number of arriving online queries is exactly $fN = f\sum_{a \in A} n_a$. The algorithm designer is aware of $f$, the $n_a$'s, and the highest bid distribution from \adx.


There are several important practical aspects of the yield optimization problem that previous work do not capture that we aim to address:
\begin{enumerate}[align=left]
\item The first aspect is that publishers typically have more inventory than they are able to sell via the direct sales channel (contracts), and indeed that is the main reason that most publishers are selling through the indirect sales channel of \adx as well. 
Most previous works on joint yield optimization either address the objectives of the two channels separately (bi-criteria objective), or study them in the absence of supply factor/penalties/\adx bid distribution. Studying the yield optimization problem with a single unified objective (\adx revenue - penalty) in the presence of supply factor and \adx bid distribution surfaces the nature of the optimal tradeoff between the supply factor and how on-track a contract is towards hitting its goals. Clearly, when a contract is lagging behind, we should allocate a query to \adx only when the \adx bid is high enough. But how does this ``high enough" vary as we increase/decrease the publisher's supply, captured by the supply factor $f$? This is explicitly answered in our work. Similarly the dependence on the penalty and \adx distribution are also explicitly revealed.


\item Even in classic online allocation problems like the online bipartite matching of Karp et al.~\cite{KVV90} and the online vertex-weighted bipartite matching of Aggarwal et al.~\cite{AGKM11}, it is interesting to inquire what happens to the competitive ratio when there is a supply factor $f \geq 1$.

\item Prior works mostly studied the problem in a fully stochastic model or a fully adversarial model. In reality, while user browsing patterns might have significant variations across days, in response to events, state-of-mind etc. (and hence an adversarial arrival of queries is reasonable), advertiser bidding/spending patterns are far more predictable because advertisers have daily and hourly spending budgets. We incorporate this in our model by having a distribution $D$ over the highest bids from \adx, even though query arrival is adversarial. The inclusion of \adx bid distribution, not only represents reality better, but also leads to a crisp algorithm that sheds ample light on the role of the distribution in the joint yield optimization problem. 

\end{enumerate}



\subsection{Our Results} 
One of our contributions, as just discussed, is to present an economical model that crisply captures the reality of display ads monetization. Our main result is a fast simple deterministic algorithm that obtains the optimal competitive ratio as $n_a$  values grow large. 
The algorithm is as follows: let $0=r_1 < \dots < r_d$ be the points in the support of the distribution $D$ of highest bid in \adx (highest bid is often referred to as reward for short). As a pre-processing step, compute $d$ thresholds $s_1 < \dots < s_d$ as a function of $f$ (we define $s_0 = 0$ and $s_{d+1} = 1$), $c$ and the \adx bid distribution. Let the satisfaction-ratio $\si(a)$ of a contractual advertiser $a$ be the ratio of the number of impressions delivered to the contract thus far, to the number of impressions $n_a$ requested by the contract. For each arriving query, the algorithm picks the contract with the lowest satisfaction ratio, call it $s$. Find $u$ such that $s \in [s_{u-1}, s_u)$. Assign the query to \adx if the highest bid $r$ in the exchange exceeds $r_{d+1-u}$. And if not, assign the query to the contract with the lowest satisfaction ratio. Algorithm~\ref{alg:general} summarizes this. We highlight a few important aspects of this algorithm.

\begin{algorithm}[htb]
\caption{Optimal algorithm for general \adx distribution}
\label{alg:general} \small
\textbf{Input:}{ \adx distribution $\dist$ with support $0 =r_1 < ... < r_d$, penalty $c$, and supply factor $f$.}\\
\textbf{Preprocessing:} Compute thresholds $s_1,...,s_d$ (we discuss how in Optimization Problem \ref{def:lb} ).\\
 \For{each query arriving online }{
    Let $r$ be the highest \adx bid for this query.\\
    Let $a$ be the advertiser, with an edge to this query, and with the lowest satisfaction ratio $\si(a)$.\\
    \If{$\si(a)=1$}{
        Assign the impression to \adx.
      }
    \Else{
    Find $u$ such that $\si(a) \in [s_{u-1}, s_u)$.\\
    \If{$r \leq r_{d+1-u}$}{
        Assign the impression to advertiser $a$.
      }
    \Else{Assign the impression to \adx.}
    }
    }
\end{algorithm}

\begin{enumerate}
\item Once the pre-processing step is over (which is a one-time computation), the algorithm is very simple to implement in real time while serving queries, even in a distributed fashion. Each relevant advertiser $a$ for the current query (i.e., each offline node $a$ with a matching edge to the current online node) just responds with its satisfaction ratio $\si(a)$. From there on, the algorithm simply computes the smallest satisfaction ratio, and do a simple lookup over the thresholds that are pre-computed, and decide the allocation based on how big the \adx bid is.
\item The algorithm is quite intuitive. As the satisfaction ratio of the most needy contract gets lower, the \adx bid has to be correspondingly higher to merit snatching this impression from the contract. This tradeoff happens to take such a simple symmetric form, where one looks for the mirror image in $\vec{r}$, namely $r_{d+1-u}$, of the index $u$ to which the satisfaction ratio gets mapped is quite surprising. Importantly, the supply factor and penalty are used only in the pre-processing step to compute the thresholds, and don’t appear in serving time at all.
\item The algorithm need not fully know the highest bid $r$ from \adx. It just needs to be able to compare the highest bid against a reserve price of $r_{d+1-u}$. Further, extending the algorithm to deal with multiple Ad Exchanges is simple: broadcast the same reserve to all exchanges, and pick the highest bidding exchange that clears the reserve (we just need to know which exchange is the highest bidder, and whether they clear the reserve, not the exact value of the bid). If no exchange clears the reserve, allocate to the advertiser $a$ with the lowest $\si(a)$.
\item While the algorithm is intuitive in hindsight, it is far from obvious that it obtains the optimal competitive ratio. 
\end{enumerate}

As mentioned earlier, apart from analyzing the joint yield optimization problem, we also show the benefits that a supply factor can bring in classic online algorithmic problems. For the seminal online bipartite matching problem of ~\cite{KVV90}, we show that the same RANKING algorithm of ~\cite{KVV90} with a supply factor of $f$ yields a tight competitive ratio of $f - fe^{-1/f}$, which increases with $f$, and approaches $1$ as $f\to\infty$. Likewise for the vertex-weighted generalization of this problem studied by~\cite{AGKM11}, the same generalized vertex-weighted RANKING algorithm of ~\cite{AGKM11} (a.k.a \textsc{Perturbed Greedy}) yields a competitive ratio of $f - fe^{-1/f}$. 
We defer these analyses to the Appendix~\ref{app:uniform}, and include them primarily to show how supply factor influences the competitive ratio of some well known problems.

\paragraph{Overview of analysis techniques.} We use a max-min approach to analyze the performance of our algorithm. Given the thresholds $s_1 < \dots < s_d$, our algorithm is completely defined. Therefore the adversary can compute the instance that minimizes the optimal objective of our algorithm given the thresholds, and the algorithm can optimize the thresholds $s_1 < \dots s_d$ knowing the best response of the adversary. The minimization problem of the adversary can be captured by a succinct LP, and we reason about the structure of the optimal solution to this LP. This sets up the maximization problem of the algorithm, which turns out to be a non-linear, non-convex optimization problem. Nevertheless, we develop a simple poly-time dynamic programming algorithm that obtains the optimal solution (optimal thresholds $s_1,\dots,s_d$) up to a small additive error. For tightness, we construct an example which is a modified version of the ``upper triangular graph'' of Karp et al.~\cite{KVV90}, and show that no algorithm can obtain an objective value larger than the objective value achieved as the optimal solution to the max-min problem described above. This establishes that the class of threshold-based algorithms is optimal. To act as a warm up to ease into the general distribution section, we begin with the special case of distributons with support size two. In this case, the maximization problem of the algorithm in the max-min problem above is a single-variable concave maximization problem, and already yields clear insights on how the optimal threshold computed by the algorithm depends on the supply factor $f$ and the penalty $c$.

\paragraph{Bid-to-budget ratio vs supply factor.} On the surface level, it might appear that the notion of supply factor is just like the ``large budgets'' assumption, where it is assumed that the budget (in our case the number of impressions $n_a$ demanded by each advertiser $a$) is much larger than the bid (i.e., the value of an edge). However these two concepts are quite different. In particular, even with the large budgets assumption, without a supply factor larger than $1$, any algorithm will be very conservative and will essentially always allocate to the contracts (assuming the penalty is larger than the \adx reward). The supply factor is a property of the entire setup of the publisher: the demands of the contracts and the nature of traffic (set of online nodes arriving, i.e., users/queries that visit their website).

\paragraph{Extensions.} A natural question to ask is what happens if the publishers have different under-delivery penalties $c_a$ for different advertisers. To show a proof of concept extension of our results to this setting, we consider the simpler setting of our problem where the \adx rewards are equal to $r$ for every query (i.e., a deterministic distribution $D$), and show how the technique and results extend to handle different $c_a$'s. We conjecture that the same approach extends to the general \adx distributions as well, and leave it as an open problem. In a different direction, in this work, we focus on a deterministic algorithm because of its many virtues when deployed in a production system: the ability to replay and hence debug easily, ex-post fairness, etc. While we show that it achieves the optimal competitive ratio (i.e., even randomized algorithms cannot improve further), this necessarily requires $n_a$ values being large (for a deterministic algorithm to be optimal, large budgets are necessary even for the much simpler $B$-matching problem~\cite{KP00}). In practice, however, large budget assumption essentially always holds, as advertiser contractual demands are much larger than the edge weight of $1$. Nevertheless, one could ask whether one could use randomized algorithms to remove the dependence of $n_a$'s being large. Again, as a proof of concept extension of our results, we show that for the special case where \adx rewards are uniformly equal to $r$ for every query, randomized algorithms can get the same competitive ratio as deterministic ones for any value of $n_a$, not just large ones. 

\paragraph{Comparison to closely related work.} In terms of works that consider joint optimization across the two channels, the closest to ours is that of Dvor{\'a}k and Henzinger~\cite{DH14}, who also consider the objective of maximizing revenue across two channels: the fundamental differences are (a) the absence of a supply factor in their work, (b) they model adversarially both the arrivals and the \adx bids, and (c) they achieve separate approximation factors for each channel as opposed to our approximating the joint unified objective. Equally close is the work of Balseiro et al.~\cite{BFMM14}, who study the same problem, with the differences being (a) the absence of a supply factor, (b) they model stochastically both the arrivals and \adx bids.

Another closely related work is by Devanur and Jain~\cite{DJ12}  in which they consider the adwords problem with concave returns in the objective: while their model can capture penalties, it does not handle the AdEx reward distribution. Our model takes the reward distribution and penalties into account simultaneously. Additionally, the supply factor notion is absent in~\cite{DJ12}.

A number of works consider the optimization problem without the presence of \adx. Feldman et al.~\cite{FKMM09} study the problem with worst case arrivals and achieve a $1-1/e$ competitive ratio as the $n_a$'s grow large. Feldman et al.~\cite{FHKMS10} study the general packing LPs in a random permutation arrival model and show how to achieve $1-\eps$ approximation as the $n_a$'s grow large, and Devanur and Hayes~\cite{DH09} study the related Adwords problem in the same random permutation model to achieve a $1-\eps$ approximation. Agrawal et al.~\cite{AWY14} show how to attain $1-\eps$ for general packing LPs with better convergence rates on how fast $n_a$'s need to go to $\infty$. Devanur et al.~\cite{DJSW19} consider general packing and covering LPs in an i.i.d. model with unknown distribution and achieve even better convergence rates. Agrawal and Devanur~\cite{AD15} study online stochastic convex programming. Mirrokni et al.~\cite{MGZ12} study the Adwords problem and design algorithms that simultaneously perform well for both stochastic and adversarial settings, and Balseiro et al.~\cite{BLM20} do this for generalized allocation problems with non-linear objectives using dual mirror descent. We refer the reader to Choi et al.~\cite{CMBL20} for a literature review on the display ads market as it is too vast to cover in entirety here. The differentiating factors of all these works from ours is that even if these works were to add an \adx node with infinite capacity, (a) they do not consider the supply factor, (b) and they do not have a unified objective.
Another related work by Esfandiari et al.~\cite{EKM2018} considers the allocation problem in a mixed setting, where a fraction of queries arriving are adversarial, and a fraction are stochastic. They then characterize their competitive ratio, by this \textit{prediction fraction}. The setting we consider is different, as we allow fully adversarial queries. We only assume a known \adx distribution, which we argued is often more predictable than the user traffic.

Karp et al.~\cite{KVV90} wrote the seminal paper on online bipartite matching, and Aggarwal et al.~\cite{AGKM11} consider the generalization of it to vertex weighted settings. Mehta et al.~\cite{MSVV07} introduced the influential Adwords problem and gave a $1-1/e$ approximation for it, with a recent breakthrough result by Huang et al.~\cite{HZZ20} showing how to beat a $1/2$ approximation for this problem even with small budgets. Devanur et al.~\cite{DJK13} give a randomized primal dual algorithm that gives a unified analysis of~\cite{KVV90,AGKM11,MSVV07}. We refer the reader to~\cite{Mehta13} for a survey on the online matching literature.

\section{Optimal Algorithm for Binary Ad Exchange Distribution} \label{sec:binary}

In this section, we consider a special case where the highest \adx bid (referred to as \adx reward often) of each query is drawn from a distribution $\dist$ of support size two. We consider the general distribution in Section~\ref{sec:general_alg}. We first provide an algorithm, and later show that this algorithm is optimal. 
Formally we consider the following setting:

\begin{definition}[Binary reward distribution with parameters $q$ and $r$]
We consider the setting where \adx reward distribution $\dist$ is $0$ with probability $q$, and is $r$ with probability $1-q$.
\end{definition} 

Without loss of generality we assume that the two support points are $0$ and $r$, rather than $r_1$ and $r_2$ for $0 < r_1 < r_2$. This is because, in the latter case, we can subtract $r_1$ from each support point, and also from the penalty, and it yields the distribution in the format we need. Also, without loss of generality we assume that the support point $r$ in the distribution is such that $r < c$ where $c$ is the penalty. Note that if $r \geq c$, then clearly whenever the \adx reward is $r$ (i.e., non-zero), an optimal algorithm can always allocate the query to \adx, so there is nothing to study here.

\subsection{An Optimal Algorithm}
\label{subsec:binary-optimal}

Now we propose a simple greedy algorithm (we basically specialize Algorithm~\ref{alg:general} for binary distributions), analyze its performance and establish its optimality. The analysis can be extended to the more general distributions of \adx rewards, but with more involved techniques. We do this in section~\ref{sec:general_alg}. 


Algorithm \ref{alg:binary} is our algorithm for binary reward distributions. Here, we compute an appropriate threshold $s$ as a pre-processing step. At arrival of a query, let $a$ be the available advertiser (i.e., an advertiser with an edge to the incoming vertex) with the lowest satisfaction ratio $\si(a)$. The algorithm allocates the impression to \adx if and only if $\si(a) \geq s$ and the query has non-zero \adx reward of $r$. I.e., the algorithm first greedily allocates queries to available advertisers that are furthest from being satisfied, no matter how large the \adx weight of arriving queries. However, when the advertisers are satisfied to some extent (i.e., their $\si \geq s$), satisfying contracts becomes less of a priority, and \adx is preferred when it offers non-zero reward.

\begin{algorithm}[htb]
\caption{Optimal algorithm for a binary \adx bid distribution}
\label{alg:binary} \small
\textbf{Input:}{ Binary \adx distribution with parameter $q$ and $r$, penalty $c$, and supply factor $f$.}\\
\textbf{Preprocessing:} Set the threshold $s=\max\left(0, 1+fq\ln (1-\frac{r}{c})\right)$ (see Claim \ref{claim:threshold}).\\
 \For{each query arriving online}{
    Let $a$ be a matching advertiser with the lowest satisfaction ratio.\\
    \If{$\si(a)=1$}{
        Assign the impression to \adx.
    }
    \ElseIf{$\si(a) \geq s$ and \adx reward is $r$}{
        Assign the impression to \adx.
      }
    \Else{Assign the impression to $a$.}
    }
\end{algorithm}
Before proving the competitive ratio, we set some notation that we use in our analysis throughout the paper. These concepts are also demonstrated in Figure \ref{fig:binary}.
\begin{figure}[htb]
    \includegraphics[width=14cm,height=6.5cm]{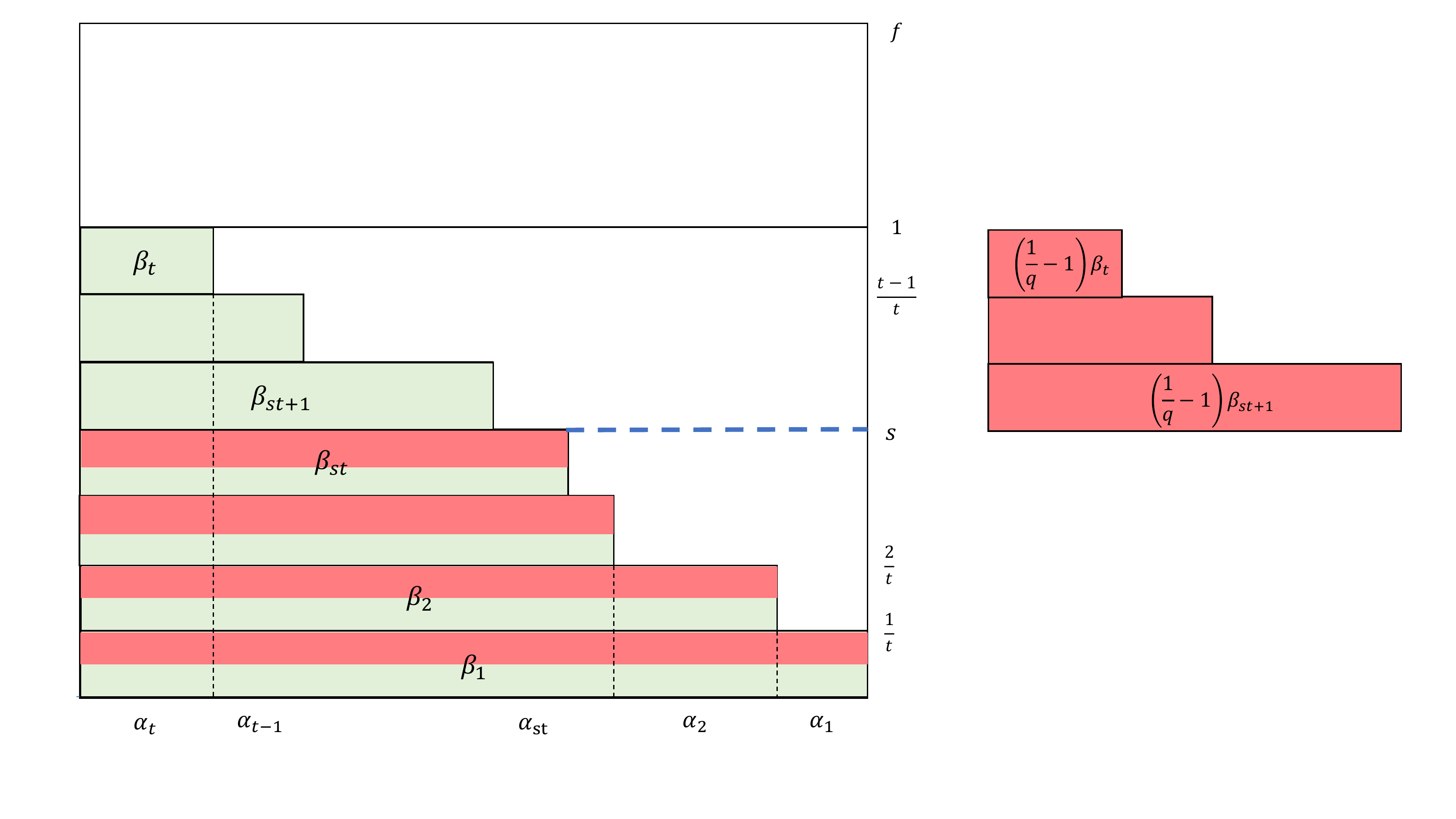}
    \caption{\footnotesize Analysis of Algorithm \ref{alg:binary}: The figure can be viewed from the POV of a single advertiser, as well as from the POV of all advertisers. For a single advertiser $a$, the demand $n_a$ is divided into $t$ intervals. Red colored rectangles represent queries with \adx reward $r$, and green rectangles represent queries with \adx reward 0. The threshold is represented by $s$. The figure on the LHS is for allocation to the contract advertiser $a$, while the figure on the RHS is for allocation to \adx. In the LHS figure, each horizontal rectangle of height $1/t$ represents a value of $\beta_j$. When smallest \si is below the threshold, all the queries (regardless of \adx reward) are allocated to the contract, so each height $1/t$ rectangle below $s$ is constituted by both red and green queries. Above the threshold $s$, all rectangles are only colored green since only queries with \adx reward 0 are allocated to a contract. In the RHS figure, since \adx gets allocated only queries of non-zero value, all rectangles are red. Also, for each query with \adx reward 0 allocated to an advertiser with \si greater than $s$, in expectation $(1/q-1)$ queries get allocated to AdEx (because in expectation for every query with \adx reward $0$, $1/q-1$ queries have reward $r$). The $\alpha$'s in the bottom of the figure come into picture when all the advertisers are taken together. We will soon show that $\alpha_j - t(\beta_j - \beta_{j+1})$. Thus the bottom right corner piece rectangle in the LHS figure represents $\alpha_1/t$ etc.}
    \label{fig:binary}
    \vspace{-0.1cm}
\end{figure} 
Let $t$ be a sufficiently large integer used to discretize the total demand of each advertiser into equal intervals of length $1/t$. The right picture to have in mind is $n_a \gg t \gg 1$. We call any given advertiser $a$ to be of \textit{type} $j$, if at the \emph{end of the algorithm}, $\si(a)\in(\frac{j-1}{t},\frac{j}{t}]$. For type $1$ alone we let the $\si$ interval be closed on both sides, namely $[0,\frac{1}{t}]$. Let $A_j$ be the set\footnote{Note that $A_j$ is a random set depending on the realization of \adx rewards over all queries.} of advertisers of type $j$; and let $\alpha_j=\E[\sum_{a\in A_j}n_a]$ be the total demand of advertisers in $A_j$. For simplicity we assume that an advertiser $a \in A_j$ gets allocated exactly $\frac{j}{t}n_a$ impressions: this lead to an additive error $O(\frac{1}{t})$ in analysis, which is negligible when $t\to\infty$. Let $\beta_j$ be the expected total number (across all advertisers) of allocated impressions s.t., at the time of allocation the assigned advertiser had satisfaction ratio $\in [\frac{j-1}{t},\frac{j}{t})$.
Finally, let $N=\sum_{a\in A} n_a$ be the total demand of all advertisers. 




By definition of $\alpha, \beta$ we get the following (see also Figure \ref{fig:binary}):
\begin{equation}\label{eqn:defofbeta}
    \beta_j=\sum_{a\in\cup_{\ell\geq j}A_{\ell}}\frac{1}{t}n_a=\frac{1}{t}(N-\sum_{\ell<j}\alpha_\ell).
\end{equation}

Thus 
\begin{equation}\label{eqn:alphafrombeta}
    \alpha_j=t(\beta_{j}-\beta_{j+1}).
\end{equation}


\begin{lemma}\label{lem:alphaBeta}
Based on definition of $\alpha, \beta$ as described, for any $j\leq t-1$,
\begin{equation}\label{eqn:casestudybetageneral}
\sum_{\ell\leq j}f\alpha_\ell\leq  \begin{cases} \sum_{\ell\leq st}\beta_\ell+\sum_{st<\ell\leq j}\frac{1}{q}\beta_\ell, &\mbox{if } j\geq st; \\
\sum_{\ell\leq j}\beta_\ell, & \mbox{if } j< st. \end{cases} \end{equation}
\end{lemma}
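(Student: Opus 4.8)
The plan is to run the standard ``charge the optimum against the algorithm's allocations, level by level'' argument, but with the twist that allocations above the threshold $s$ are inflated by a factor $1/q$. Let $\opt$ denote the supply‑factor‑determining offline solution: by definition it consists of $f$ complete matchings, hence it routes exactly $fn_a$ queries to each advertiser $a\in A$ and none to \adx. Fix $j\le t-1$ and condition on a realization of the \adx rewards, so that the run of Algorithm~\ref{alg:binary}, and in particular each advertiser's type, is fully determined. I would look at the set $O_{\le j}$ of queries that $\opt$ routes to advertisers whose (algorithm‑)type is at most $j$. Since each such advertiser absorbs $fn_a$ queries in $\opt$, in this realization $|O_{\le j}|=f\sum_{a:\,\text{type}(a)\le j}n_a$, and averaging over the reward realizations gives $\E|O_{\le j}|=f\sum_{\ell\le j}\alpha_\ell$ — exactly the left‑hand side of \eqref{eqn:casestudybetageneral} we must bound.

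The first key step is a monotonicity observation: satisfaction ratios only increase over time, so an advertiser that \emph{ends} with type $\le j$ has $\si\le j/t<1$ at \emph{every} moment of the run (here $j\le t-1$ is used). Consequently, whenever $\opt$ routes some query $q$ to an advertiser $a^*$ of type $\le j$, at the arrival of $q$ that $a^*$ is eligible (it has an edge to $q$) and unsaturated, so the advertiser $a$ actually picked by Algorithm~\ref{alg:binary} satisfies $\si(a)\le\si(a^*)\le j/t$. Thus every query in $O_{\le j}$ is a query whose lowest‑\si\ eligible advertiser, at arrival time, sits at some level $\ell\le j$ (there is an $O(1/t)$ boundary ambiguity coming from the mismatch between the half‑open interval conventions used for $\alpha$ and for $\beta$, which I would simply absorb into the $O(1/t)$ additive error the analysis already tolerates).

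Next I would split $O_{\le j}$ by whether this selected advertiser $a$ has $\si(a)<s$ or $\si(a)\ge s$. If $\si(a)<s$, the algorithm allocates $q$ to the contract $a$, and that allocation is counted by the realized $\beta_\ell$ for some $\ell\le\min(st,j)$; distinct queries give distinct such allocations, so the number of queries of this type is at most $\sum_{\ell\le\min(st,j)}(\text{realized }\beta_\ell)$, which in expectation is $\sum_{\ell\le\min(st,j)}\beta_\ell$. This already proves the case $j<st$, since then the second alternative cannot occur. For $j\ge st$ I must additionally bound the queries $q\in O_{\le j}$ whose selected advertiser has $\si(a)\in[s,j/t)$, i.e. sits at a level in $(st,j]$. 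Let $Z$ be the total number of queries (not just those in $O_{\le j}$) whose lowest‑\si\ eligible advertiser, at arrival, lies at a level in $(st,j]$; the queries in question number at most $Z$. For such a query, the algorithm sends it to that contract iff its \adx reward is $0$, and since levels above $st$ receive contract allocations \emph{only} when the reward is $0$, we have $\sum_{st<\ell\le j}(\text{realized }\beta_\ell)=\#\{\text{reward-}0\text{ queries whose lowest-\si\ advertiser is at a level in }(st,j]\}$.

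The decisive point is independence: whether a query's lowest‑\si\ eligible advertiser lies at a level in $(st,j]$ depends only on the \adx rewards of the \emph{earlier} queries, hence is independent of that query's own reward, which is $0$ with probability $q$. Summing over queries, $\E\big[\#\{\text{reward-}0\text{ queries with advertiser at a level in }(st,j]\}\big]=q\,\E[Z]$, i.e. $\E[Z]=\tfrac1q\sum_{st<\ell\le j}\beta_\ell$. Adding the two contributions and taking expectations yields $f\sum_{\ell\le j}\alpha_\ell=\E|O_{\le j}|\le\sum_{\ell\le st}\beta_\ell+\tfrac1q\sum_{st<\ell\le j}\beta_\ell$ for $j\ge st$ and $\le\sum_{\ell\le j}\beta_\ell$ for $j<st$, which is \eqref{eqn:casestudybetageneral}. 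I expect the main obstacle to be exactly this last independence/charging argument: one has to express the above‑threshold \adx diversions as a $1/q$‑inflation of the below‑threshold contract allocations and be careful that the relevant conditioning event is measurable with respect to the past only. The remaining fiddly points — the possibly fractional nature of $\opt$ (so the ``distinct queries give distinct allocations'' bookkeeping is really a statement about query mass) and the interval‑convention mismatch — I would handle, as the paper does elsewhere, by charging them to the $O(1/t)$ slack.
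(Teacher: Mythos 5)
Your proposal is correct and follows essentially the same charging argument as the paper: interpret $f\sum_{\ell\le j}\alpha_\ell$ as the offline mass routed to advertisers of final type at most $j$, use monotonicity of satisfaction ratios to argue each such query, at arrival, has its lowest-$\si$ eligible advertiser at level at most $j$, and then count these queries via the realized $\beta_\ell$'s, with the $1/q$ inflation above the threshold. The paper states the above-threshold ``$q$ fraction'' step without elaboration; your independence/measurability justification is exactly the intended reasoning, just made explicit.
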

\begin{proof}



The RHS represents the set of queries that, when they arrived, the most deserving (lowest $\si$) contractual advertiser that was eligible was of type at most $j$. To see this note that when the lowest $\si$ is $\frac{j}{t} < s$, every arriving query is allocated to the contract (hence the second line of RHS). When the lowest $\si$ is at least $s$, only a $q$ fraction of the considered queries are allocated to the contract --- thus the considered queries = allocated queries / q, which is the first line of RHS. 

The LHS represents the number of queries that were allocated to an advertiser of type at most $j$ in the supply-factor-determining-offline-solution.

It is immediate that LHS is at most RHS because every query counted in the LHS will count for RHS when it arrives.
\end{proof}

Notice that the total expected reward of the algorithm can be divided into the following parts: 
\begin{itemize}
    \item The baseline penalty is if no impression is allocated to contracts, the total such penalty is $-Nc$. The total \adx reward that may be obtained by assigning everything to \adx is $Nf(1-q)r$. The next points capture the change to the objective when we move away from this extreme solution of giving everything to \adx.
    \item Any impression that is allocated to an advertiser with satisfaction ratio $\frac{j}{t} < s$ (which is the set of impressions counted in $\beta_j$ for $j \leq st$), with probability $(1-q)$, loses a reward of $r$ from \adx. Thus in expectation each impression has reward $c-(1-q)r$ added to the objective;
    \item Each time an impression is allocated to an advertiser with satisfaction ratio $\frac{j}{t} \geq s$ (which is the set of impressions counted in $\beta_j$ for $j > st$), the impression always has reward $0$ for \adx, but adds $c$ to the objective.
\end{itemize}
Therefore the expected total reward $\alg$ of the algorithm is
\begin{equation}\label{eqn:alg-binary}
\alg=Nf(1-q)r-Nc+\sum_{j\leq st}(c-(1-q)r)\beta_j+\sum_{st < j\leq t}c\beta_j.
\end{equation}
We can add \eqref{eqn:alphafrombeta} and \eqref{eqn:casestudybetageneral} as constraints, to get a linear program that lower bounds the reward of the algorithm as follows:

\begin{align}
    \textrm{minimize}\quad &Nf(1-q)r-Nc+\sum_{j\leq st}(c-(1-q)r)\beta_j+\sum_{st<j\leq t}c\beta_j& \label{eqn:lp-binary}\\
    s.t.    \quad&ft\beta_1-ft\beta_{j+1}\leq \sum_{\ell\leq j}\beta_\ell,&\forall j, 1\leq j\leq st; \nonumber\\
    \quad&ft\beta_1-ft\beta_{j+1}\leq \sum_{\ell\leq st}\beta_\ell+\sum_{st<\ell\leq j}\frac{1}{q}\beta_\ell,&\forall j, st<j\leq t; \nonumber\\
    \quad&\beta_1=\frac{N}{t};& \nonumber\\
    \quad& \beta_j\geq 0,&\forall j, 1\leq j\leq t. \nonumber
\end{align}

The constraints are explained immediately by expanding and doing a telescopic summation using \eqref{eqn:alphafrombeta} and \eqref{eqn:casestudybetageneral}. We set $\beta_1 = N/t$ because in all but pathological instances we have that every advertiser ends up with at least $/1t$ fraction of their demand satisfied (note that $t$ is large, just that $n_a \gg t \gg 1$). Even in the pathological instances where this is not true, i.e., only $\beta_1 < N/t$ holds, by setting $\beta_1 = N/t$, there is just a $O(1/t)$ additive error we have introduced. Namely, when proving optimality of our algorithm, we will just have proved it up to additive $O(1/t)$ terms. From now on, we take $\beta_1 = N/t$.

\begin{claim}\label{claim:binary_LP}  By setting $\beta$ values as follows we get an optimal solution to the linear program \eqref{eqn:lp-binary}:
\[\beta^*_j= \begin{cases} \frac{N}{t}\left(1-\frac{1}{tf}\right)^{j-1}, &\mbox{if } j\leq st+1; \\
\frac{N}{t}\left(1-\frac{1}{tf}\right)^{st}\left(1-\frac{1/q}{tf}\right)^{j-st-1}, & \mbox{if } j>st+1. \end{cases}\]
\end{claim}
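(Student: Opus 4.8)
The plan is to verify that the proposed $\beta^*$ is feasible for the LP \eqref{eqn:lp-binary} and that it is optimal, via the standard argument that an optimal solution can be taken to make all nontrivial inequality constraints tight. First I would show feasibility: plug $\beta^*$ into the two families of constraints. For $j \le st$, the claim is $ft\beta^*_1 - ft\beta^*_{j+1} = \sum_{\ell \le j}\beta^*_\ell$, i.e. these hold with equality; since $\beta^*_\ell = \frac{N}{t}(1-\frac{1}{tf})^{\ell-1}$ in this range, the RHS is a geometric sum equal to $\frac{N}{t}\cdot\frac{1-(1-1/(tf))^j}{1/(tf)} = Nf(1-(1-\frac{1}{tf})^j) = ft\beta^*_1 - ft\beta^*_{j+1}$, using $ft\beta^*_1 = ft\cdot\frac{N}{t} = Nf$ and $ft\beta^*_{j+1} = Nf(1-\frac{1}{tf})^j$. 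For $j > st$, I would similarly check $ft\beta^*_1 - ft\beta^*_{j+1} = \sum_{\ell\le st}\beta^*_\ell + \frac{1}{q}\sum_{st<\ell\le j}\beta^*_\ell$, now splitting the geometric sums at the ``kink'' $\ell = st$ where the ratio changes from $(1-\frac{1}{tf})$ to $(1-\frac{1/q}{tf})$; the $\frac{1}{q}$ factor in front of the second sum is exactly what is needed for the telescoping to close, since $\frac{1}{q}\cdot\frac{1}{tf}$ is the new per-step decrement. Nonnegativity of all $\beta^*_j$ is immediate as long as $1/(qtf) < 1$, which holds for $t$ large.

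Next I would argue optimality. The argument mirrors the one already sketched in the commented-out section of the paper for $f=2$: I claim there is an optimal LP solution in which every inequality constraint is tight. Suppose in some optimal solution the $j$-th constraint is slack. If $j \ne st$, perturb $\beta_{j+1} \leftarrow \beta_{j+1} - \epsilon$, $\beta_{j+2} \leftarrow \beta_{j+2} + \epsilon$: this keeps all other constraints satisfied (the sums $\sum_{\ell \le k}\beta_\ell$ are unchanged for $k \ne j+1$, and for $k = j+1$ the sum decreases, which only helps its constraint; the constraints directly involving $\beta_{j+1}$ or $\beta_{j+2}$ on the left are handled by choosing $\epsilon$ small), leaves the objective unchanged (coefficients of $\beta_{j+1}$ and $\beta_{j+2}$ are equal within a block), and can be pushed until the $j$-th constraint becomes tight. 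If $j = st$ (the block boundary), the matching perturbation is $\beta_{j+1} \leftarrow \beta_{j+1} - \epsilon$, $\beta_{j+2} \leftarrow \beta_{j+2} + q\epsilon$, which changes the objective by $(c-(1-q)r)\epsilon - c\cdot q\epsilon = \epsilon(c - (1-q)r - qc) = \epsilon(1-q)(c - r) \ge 0$ using $r < c$, so the objective does not increase, i.e. stays optimal for a minimization problem. Iterating over all slack constraints yields an optimal solution with all inequalities tight, and that system has the unique solution $\beta^*$ (solving the recursion $\beta_{j+1} = \beta_j(1 - \frac{1}{tf})$ below the kink and $\beta_{j+1} = \beta_j(1 - \frac{1/q}{tf})$ above it, anchored at $\beta_1 = N/t$). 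Hence $\beta^*$ is optimal.

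The main obstacle I anticipate is not any single computation but getting the tightening argument fully rigorous at the block boundary and at the endpoints: one must be careful that the perturbation used to tighten constraint $j$ does not violate some other already-tight constraint, and that the indices $j+1, j+2$ stay within range $[1,t]$ (the last constraint, $j = t-1$ or the handling of $\beta_{t+1}$, needs a small separate look, and the commented-out version's convention $\beta_{t+1}$ appearing in the $j=t$ telescoping should be pinned down — most cleanly by noting the LP as written only has constraints for $j \le t$ and treating the final step directly). A secondary point to state carefully is that since $\beta_1 = N/t$ is fixed, the tight system is genuinely determined, so ``all inequalities tight'' pins down a single point rather than a face; this is what lets us conclude the explicit formula rather than merely the optimal value. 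Once these bookkeeping points are settled, substituting $\beta^*$ back and evaluating the geometric sums (as in the $f=2$ computation already displayed) gives the closed form, and in the large-$t$ limit $(1-\frac{1}{tf})^{st} \to e^{-s/f}$ etc., which is what feeds into the competitive-ratio claim downstream.
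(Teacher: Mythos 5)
Your proposal is correct and takes essentially the same route as the paper's proof: argue via the same local exchange steps ($\beta_{j+1}\leftarrow\beta_{j+1}-\epsilon,\ \beta_{j+2}\leftarrow\beta_{j+2}+\epsilon$ within a block, and $\beta_{j+2}\leftarrow\beta_{j+2}+q\epsilon$ at the block boundary, where the quantity $(c-(1-q)r)\epsilon-cq\epsilon=(1-q)(c-r)\epsilon\geq 0$ is, as in the paper, the \emph{decrease} of the objective) that some optimal solution has all nontrivial constraints tight, and then solve the tight system anchored at $\beta_1=N/t$, whose recursion $\beta_{j+1}=\beta_j(1-\tfrac{1}{tf})$ below the kink and $\beta_{j+1}=\beta_j(1-\tfrac{1/q}{tf})$ above it yields exactly $\beta^*$. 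The only additions are your explicit feasibility check by geometric summation and the bookkeeping caveats, which the paper leaves as ``straightforward.''
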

\begin{proof}
We prove that there exists an optimal solution of the LP \eqref{eqn:lp-binary} such that all non-trivial constraints are tight. Then the claim follows by observing that $\beta^*$ as defined satisfy this tightness property. To show that the $\beta^*$ leads to tightness, start with a simple assignment of $\beta^*_1=\frac{N}{t}$ and iteratively find the solution to the system of linear equations formed by replacing the inequalities with equalities. This is straightforward. 

To show why tightness is wlog, for any $\beta$ being an optimal solution to the above LP, let $j$ be the smallest index such that the corresponding constraint (of either type) is not tight. If $j\neq st$, we can see that there exists $\epsilon$ such that $\beta_{j+1}\leftarrow\beta_{j+1}-\epsilon$, $\beta_{j+2}\leftarrow\beta_{j+2}+\epsilon$ is a new feasible solution with the objective staying the same, while the $j-$th constraint becomes tight.
Otherwise, if $j=st$, then $\beta_{j+1}\leftarrow\beta_{j+1}-\epsilon$, $\beta_{j+2}\leftarrow\beta_{j+2}+q\epsilon$ is a new feasible solution with the objective decrease by $(c-(1-q)r)\epsilon-cq\epsilon=(1-q)(c-r)\geq 0$, while the $j-$th constraint can become tight. By repeating this process we can construct a solution, with at least the same objective, in which all non-trivial constraints becoming tight.
\end{proof}
We can use the above observations on structure of $\alg$ to compute the appropriate threshold in the following claim:
\begin{claim} \label{claim:threshold}
The objective of the algorithm is maximized when the threshold is set to $s=\max\left(0, 1+fq\ln (1-\frac{r}{c})\right)$.
\end{claim}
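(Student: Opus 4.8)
The plan is to regard $s$ as a tunable parameter of Algorithm~\ref{alg:binary}, substitute the optimal LP solution $\beta^*$ of Claim~\ref{claim:binary_LP} into the objective~\eqref{eqn:alg-binary} to obtain a closed form for the guaranteed value of $\alg$ as a function of $s$, and then maximize this univariate function over $s\in[0,1]$ by elementary calculus. Since Claim~\ref{claim:binary_LP} identifies the minimizing $\beta$, the resulting expression is exactly the worst-case reward of the algorithm (up to the additive $O(1/t)$ already incurred), so ``maximizing the objective'' means maximizing this expression.

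\textbf{Step 1 (closed form).} Plug $\beta^*$ into~\eqref{eqn:alg-binary}. The partial sums $\sum_{j\le st}\beta^*_j$ and $\sum_{st<j\le t}\beta^*_j$ are geometric with ratios $1-\tfrac{1}{tf}$ and $1-\tfrac{1/q}{tf}$. Summing them and letting $t\to\infty$ (using $(1-\tfrac{1}{tf})^{tf}\to e^{-1}$ and $(1-\tfrac{1/q}{tf})^{tf}\to e^{-1/q}$), I expect
\[
\sum_{j\le st}\beta^*_j\ \to\ Nf\bigl(1-e^{-s/f}\bigr),\qquad \sum_{st<j\le t}\beta^*_j\ \to\ Nfq\,e^{-s/f}\bigl(1-e^{-(1-s)/(fq)}\bigr).
\]
Feeding these back into~\eqref{eqn:alg-binary}, the $Nf(1-q)r$ contributions cancel and the remaining constants collapse to $Nc(f-1)$, leaving, per unit total demand,
\[
g(s)\ :=\ \frac{\alg}{N}\ =\ c(f-1)+(1-q)f(r-c)\,e^{-s/f}-cfq\,e^{-s/f-(1-s)/(fq)} .
\]

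\textbf{Step 2 (optimize).} Differentiating, the factor $-1/f$ from the first exponential and the factor $\tfrac{1-q}{fq}$ from the exponent $-\tfrac{s}{f}-\tfrac{1-s}{fq}$ combine so that
\[
g'(s)\ =\ -(1-q)\,e^{-s/f}\Bigl[(r-c)+c\,e^{-(1-s)/(fq)}\Bigr].
\]
The bracket is strictly increasing in $s$, is negative for small $s$ and positive near $s=1$, and vanishes precisely when $e^{-(1-s)/(fq)}=1-\tfrac{r}{c}$, i.e.\ at $s^\star=1+fq\ln(1-\tfrac{r}{c})$, which is well-defined because $0<r<c$ and always satisfies $s^\star<1$. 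Hence $g$ increases and then decreases on $[0,1]$: its maximum is attained at $s^\star$ when $s^\star\ge 0$ and at the endpoint $s=0$ otherwise, i.e.\ at $s=\max\bigl(0,\,1+fq\ln(1-\tfrac{r}{c})\bigr)$, as claimed.

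\textbf{Main obstacle.} The conceptual content is slim; the real work is Step~1 — evaluating the two geometric sums correctly, carrying the exponents $s/f$ and $(1-s)/(fq)$ faithfully through the $t\to\infty$ limit, and checking that the constant terms indeed cancel down to $c(f-1)$. One should also note explicitly that the $O(1/t)$ discretization error perturbs $g$ uniformly and hence does not move the optimizer. After that, the derivative computation and the monotonicity of the bracket are immediate.
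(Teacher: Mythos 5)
Your proposal is correct and follows essentially the same route as the paper's proof: substitute the minimizing $\beta^*$ from Claim~\ref{claim:binary_LP} into \eqref{eqn:alg-binary}, take the $t\to\infty$ limit of the two geometric sums to obtain the closed form $Nc(f-1)+(1-q)(r-c)fNe^{-s/f}-qfNce^{-s/f-(1-s)/(fq)}$ (the paper writes this in the variable $x=s/f$), and then maximize the univariate expression by setting the derivative to zero and falling back to the endpoint $s=0$ when the critical point is negative, noting $s^\star<1$ always since $r<c$. The only cosmetic difference is that you argue unimodality via the increasing bracket in $g'(s)$ while the paper invokes concavity of $RHS(x)$; the conclusion is identical.
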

\begin{proof}
Using Claim \ref{claim:binary_LP} we have,
\begin{eqnarray}
\alg&\geq&Nf(1-q)r-Nc+\sum_{j\leq st}(c-(1-q)r)\beta^*_j+\sum_{st<j\leq t}c\beta^*_j\nonumber\\
&=&Nf(1-q)r-Nc+\left(c-(1-q)r\right)Nf\left(1-\left(1-\frac{1}{tf}\right)^{st}\right)+cqfN\left(1-\frac{1}{tf}\right)^{st}\left(1-\left(1-\frac{1/q}{tf}\right)^{t-st}\right)\nonumber\\
&=&Nf(1-q)r-Nc+\left(c-(1-q)r\right)Nf(1-e^{-\frac{s}{f}})+cqfNe^{-\frac{s}{f}}(1-e^{-\frac{1-s}{qf}})\nonumber\\
&=&Nc(f-1)+(1-q)(r-c)fNe^{-x}-qfNce^{\frac{1-q}{q}x-\frac{1}{qf}},\nonumber
\end{eqnarray}

where $x=\frac{s}{f}\in[0,\frac{1}{f}]$. Then to maximizes the reward, we consider the following expression in the right hand side:
\begin{equation}\label{eqn:alglb-binary}
RHS(x) = Nc(f-1)+(1-q)(r-c)fNe^{-x}-qfNce^{\frac{1-q}{q}x-\frac{1}{qf}}.
\end{equation}
For optimizing this threshold we take the derivative over $x$, and compare the obtained value with the boundary values for $x$. We have,
\begin{equation*}
    RHS'(x)=(1-q)(c-r)fNe^{-x}-(1-q)fNce^{\frac{1-q}{q}x-\frac{1}{qf}}.
\end{equation*}
We have the unique zero point of $RHS'(x)$ is
$x^*= q(\ln(1-r/c)+\frac{1}{qf})$. Since the allowed range of $x^*$ is $[0,\frac{1}{f}]$, we need to consider the following two cases. When $x^*\in[0,\frac{1}{f}]$, $RHS(x)$ is maximized at $x^*$. Then we set $s^*=fx^*=1+fq\ln(1-r/c)$ in the algorithm, with 
\begin{equation*}
\alg\geq RHS(x^*)=Ncf\left((1-1/f)-(1-r/c) ^{1-q}e^{-1/f}\right).
\end{equation*}
When the solution found is not in range, note that the $x^*$ can only be less than $0$ and never greater than $1/f$. This is because $r < c$, and thus clearly $fx^*=1+fq\ln(1-r/c) < 1$. Given the concavity of the objective, this means that in such a case optimality is achieved at $x=0$. Then we set $s^*=0$ in the algorithm, with
\begin{equation*}
\alg\geq RHS(0)=Ncf\left((1-1/f)+(1-q)(1-r/c)-qe^{-\frac{1}{qf}}\right).
\end{equation*}
\end{proof}

\paragraph{Useful insights.} Interesting insights already flow out of this binary support distribution case. It shows that the optimal threshold $s^*$ that we set is an affine function of the supply factor $f$. Higher the supply factor, lower the threshold we set (note that the coefficient of $f$ in $s^*$, namely $q\ln(1-r/c)$ is negative). Also, the dependence on the penalty $c$ and \adx reward $r$ are quite non-trivial and intriguing. The binary support is often a good first-order approximation of reality when we bucket bids into ``high'' and ``low'' types.

\subsection{Optimality of Algorithm~\ref{alg:binary}}\label{sec:binary-hardness}
We now prove the optimality of the algorithm in the previous section by showing an example for which no algorithm can perform better. Consider a binary distribution with parameter $q$ and $r$ as defined earlier. We use a modification of the ``upper triangular graph'' instance of~\cite{KVV90} as follows:

\begin{restatable}{example}{kvv}\label{ex:kvv-general}
Suppose that there are $m$ advertisers, and each advertiser demands $n$ impressions. There are $fmn=fN$ queries arriving in $m$ groups  $G_1,\cdots,G_m$, with queries in group $G_i$ have an edge to the same $m-i+1$ advertisers determined as follows: consider a random permutation $\pi: [m]\to[m]$, then the queries in group $G_i$ are available to advertisers $j$ with $\pi(j)\geq i$.
\end{restatable}
At a high-level, in this instance, all advertisers are available to the first group of queries arriving. Then with each group one random advertiser is removed from the set of available advertisers to the group. We next argue that Algorithm \ref{alg:binary} is optimal for this instance by showing that any online algorithm will not lead to a better reward.

\begin{theorem}\label{thm:binary-hardness}
For Example~\ref{ex:kvv-general}, the competitive ratio of any randomized online algorithm matches the competitive ratio obtained by Algorithm \ref{alg:binary} up to a small additive factor $O\left(\frac{1}{m}\right)$.
\end{theorem}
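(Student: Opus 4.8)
The plan is to show that on Example~\ref{ex:kvv-general}, \emph{no} randomized online algorithm can beat the bound $\alg \ge RHS(x^*)$ established in Claim~\ref{claim:threshold}, up to $O(1/m)$ additive error. The first step is to reduce from arbitrary randomized algorithms to a clean combinatorial quantity. Since the online vertices arrive in $m$ groups $G_1,\dots,G_m$ with group $G_i$ eligible for exactly the advertisers $j$ with $\pi(j)\ge i$, the only decision an algorithm makes for a query in $G_i$ with \adx reward $r$ is whether to take the reward or push the impression onto one of the surviving advertisers. Because $n \gg m$, it is without loss of generality (up to $O(1/m)$) to think of each group as a continuum of mass $fn$ per group, and to track, for each advertiser $j$, the total mass $k_j$ it ends up receiving; its contribution to the objective is $c\cdot\min(k_j,n) - c\cdot n$ plus whatever \adx reward was sacrificed. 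Summing over $j$ and over groups, the algorithm's expected objective is $Nf(1-q)r - Nc$ plus the expected net gain from the mass it diverts to contracts, exactly paralleling the decomposition \eqref{eqn:alg-binary}.

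The key structural step is to argue that on this instance the constraints \eqref{eqn:casestudybetageneral} hold for \emph{any} online algorithm, not just Algorithm~\ref{alg:binary}, with the $\beta_j$'s redefined as the expected mass allocated to an advertiser whose current satisfaction ratio lies in $[\tfrac{j-1}{t},\tfrac{j}{t})$. For this I would use the random-permutation symmetry: when group $G_i$ arrives, the $m-i+1$ surviving advertisers are a uniformly random subset conditioned on their count, so the advertiser an algorithm ``should'' pick — the least-satisfied surviving one — has, in expectation, satisfaction ratio controlled by how much total mass has already been placed. Concretely, an impression of $G_i$ that \opt\ (the supply-factor-determining offline solution, which here saturates every advertiser $f$ times over by routing group $G_i$'s mass onto advertiser $\pi^{-1}(i)$) assigns to a type-$\le j$ advertiser must, when it arrived online, have seen a surviving advertiser of type $\le j$; and that impression is allocated to a contract with probability $1$ if the best surviving type is below the algorithm's effective threshold and probability $q$ (the non-zero-reward probability) otherwise — \emph{regardless of the algorithm}, because an algorithm can only gain by taking the $r$-reward impression, it cannot manufacture extra contract mass. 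This yields exactly the LP \eqref{eqn:lp-binary} as a lower bound on $-\alg$ plus constants, i.e.\ an \emph{upper} bound on $\alg$ matching the claimed value; the only subtlety is that a general algorithm might threshold non-monotonically or use a fractional/randomized threshold, which I would handle by noting the LP's optimum is unchanged under replacing any feasible allocation profile by its ``sorted'' counterpart, exactly as in the proof of Claim~\ref{claim:binary_LP}.

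With the LP upper bound in hand, the finish is immediate: Claim~\ref{claim:binary_LP} identifies the LP optimum as $\beta^*$, and Claim~\ref{claim:threshold} evaluates the resulting objective to $RHS(x^*)$, which is precisely the competitive ratio Algorithm~\ref{alg:binary} achieves; hence any randomized algorithm's expected reward on Example~\ref{ex:kvv-general} is at most $\alg_{\text{Alg.\ref{alg:binary}}} + O(N/m)$ relative to the same \opt, giving the matching competitive ratios up to $O(1/m)$. The main obstacle I anticipate is the second step: rigorously showing that the inequalities \eqref{eqn:casestudybetageneral} are forced for an \emph{arbitrary} (adaptive, randomized) online algorithm. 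The clean way is to fix the realization of \adx rewards and the permutation $\pi$, define $\beta_j$ and $\alpha_j$ pathwise, observe the LHS-$\le$-RHS inequality holds pathwise by the same ``every \opt-counted query was seen online'' argument as in Lemma~\ref{lem:alphaBeta}, and then take expectations — the randomness of $\pi$ only enters in guaranteeing the offline solution exists and in making the per-group eligibility sets well-behaved, not in the inequality itself. A secondary nuisance is bounding all the discretization and $n\gg t\gg m$ approximation errors uniformly so they collapse into the single $O(1/m)$ term; this is routine but needs to be stated.
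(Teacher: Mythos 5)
Your plan has a genuine gap at its central step. The constraints \eqref{eqn:casestudybetageneral} do \emph{not} hold for an arbitrary online algorithm: they encode the specific behavior of Algorithm~\ref{alg:binary}, namely that a query arriving when the least-satisfied eligible advertiser has ratio below $s$ is \emph{always} allocated to that contract, and above $s$ is allocated exactly when its reward is $0$ (probability $q$). That is what lets every query counted on the left-hand side of Lemma~\ref{lem:alphaBeta} be charged to some $\beta_\ell$ on the right. A general algorithm can simply decline to allocate a query that ``sees'' a needy advertiser (or allocate reward-$r$ queries at high satisfaction ratios), so your pathwise ``every \opt-counted query was seen online'' argument does not force the inequality. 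There is also a directional problem: LP~\eqref{eqn:lp-binary} is a \emph{minimization} whose value lower-bounds the reward of the threshold algorithm; its constraints only bound the $\beta_{j+1}$'s from below, and its objective is increasing in the $\beta_j$'s, so even if an arbitrary algorithm's $\beta$'s were feasible you would obtain $\alg\geq$ (LP value), not the upper bound $\alg\leq\max_x RHS(x)$ that the theorem needs. The informal appeal to ``an algorithm can only gain by taking the $r$-reward impression'' is exactly the kind of exchange argument that must be made on the \emph{upper-bound} side, and it is not supplied by Claim~\ref{claim:binary_LP}, which is an exchange argument for the adversary's minimization, a different object.

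The paper's proof supplies the missing mechanism, and your proposal does not contain a substitute for it. One fixes a deterministic algorithm (Yao), lets $q_{iju}$ be the fraction of group-$G_i$ queries of each reward type allocated to advertiser $\pi^{-1}(j)$, and uses the random-permutation symmetry to get $\E_\pi[q_{iju}]=\E_\pi[q_{imu}]\leq\frac{1}{m-i+1}$. The single binding packing constraint is then the capacity of the longest-surviving advertiser $\pi^{-1}(m)$: $\sum_i \frac{fN}{m}\bigl(q\,y_{i1}+(1-q)\,y_{i2}\bigr)\leq \frac{N}{m}$, which yields the maximization LP~\eqref{lp:negative-binary} as an upper bound on any algorithm's reward. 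An exchange argument (Lemma~\ref{lem:negativelp_thresholds_binary}) shows the optimal $y$ is threshold-structured with $z_2\leq z_1$, and a change of variables turns the resulting two-threshold problem into exactly $\max_{x\in[0,1/f]}RHS(x)$ from \eqref{eqn:alglb-binary}, after which one compares against \opt\ (which, by Theorem~\ref{thm:opt-general}, is the same for all instances with total demand $N$) to match competitive ratios up to $O(1/m)$. Your proposal gestures at the right ingredients (permutation symmetry, per-advertiser caps $\min(k_j,n)$, an exchange step) but never assembles them into a valid upper-bound LP; as written, the reduction to LP~\eqref{eqn:lp-binary} would not establish the hardness direction.
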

\begin{proof}

First we have the following observation about deterministic algorithms.
By Yao's minimax principle, we only need to consider the performance of any deterministic algorithm over the randomness of the instance.

Fix any deterministic algorithm. Let $q_{ij1}$ be the fraction of queries in $G_i$ with \adx reward $0$ that is allocated to advertiser $\pi^{-1}(j)$, and $q_{ij2}$ be the fraction of queries in $G_i$ with \adx reward $r$ that is allocated to advertiser $\pi^{-1}(j)$. Then for $u=1$ and 2,
\begin{equation*}
    \E_\pi[q_{iju}]\leq\begin{cases} \frac{1}{m-i+1}, &\mbox{if } j\geq i; \\
0, & \mbox{if } j<i. \end{cases}
\end{equation*}
Also later we use $\E_\pi[q_{iju}]=\E_\pi[q_{imu}]$.
This is because for each $i$, there are $m-i+1$ random advertisers that have an edge connected to impressions in $G_i$. If $j\geq i$, then $\pi^{-1}(j)$ is a uniformly at random advertiser among this group of $m-i+1$ advertisers. Thus $\E_{\pi}[q_{iju}]\leq \frac{1}{m-i+1}$ and for any $j,j'\geq i$ it holds $\E_{\pi}[q_{iju}]=\E_{\pi}[q_{ij'u}]$. If $j<i$, then advertiser $\pi^{-1}(j)$ does not have an edge to impressions in $G_i$. Then the expected reward  we get from the algorithm, using the same reasoning from the previous section, is

\begin{eqnarray*}
    -Nc+fN(1-q)r+\sum_{i=1}^{m}\sum_{j=i}^{m}\left(\frac{fN}{m}q\E_\pi[q_{ij1}]c+\frac{fN}{m}(1-q)\E_\pi[q_{ij2}](c-r)\right),
\end{eqnarray*}
Here the first term and the second term are the total reward from not allocating anything to the contract advertisers, while the third term is the total reward gain from the allocation of the algorithm: there are in expectation $\frac{fN}{m}q$ queries with \adx reward $0$ (or $\frac{fN}{m}(1-q)$ with reward $r$) from group $G_i$ and $\E_\pi[q_{ij1}]$ (or $\E_\pi[q_{ij2}]$) fraction of them are allocated to advertiser $\pi^{-1}(j)$, with each impression contributing to a reward gain $c$ (or $c-r$) compared to being allocated to \adx.

As we discussed $\E_{\pi}[q_{iju}]=\E_{\pi}[q_{imu}]$ for any $j\geq i, u=1,2$. Hence we can simplify the overall expectation for all $j \geq i$:

\begin{eqnarray*}
-cN+fN(1-q)r+\sum_{i=1}^{m}(m-i+1)\left(\frac{fN}{m}q\E_\pi[q_{im1}]c+\frac{fN}{m}(1-q)\E_\pi[q_{im2}](c-r)\right).
\end{eqnarray*}
Then the reward of the algorithm is upper bounded by the solution of the following linear program, where $y_{iu}$ variables represent the expected value $\E_{\pi}[q_{imu}]$. 
\begin{gather}\textrm{maximize}\quad \displaystyle-cN+fN(1-q)r+\frac{fN}{m}\sum_{i=1}^{m}(m-i+1)\left(qy_{i1}c+(1-q)y_{i2}(c-r)\right)\nonumber\\
\begin{array}{rrlll}\label{lp:negative-binary}
    s.t.    \quad&\displaystyle\sum_{i=1}^{m}\left(\frac{fN}{m}qy_{i1}+\frac{fN}{m}(1-q)y_{i2}\right)&\leq& \displaystyle\frac{N}{m};&\\
    \quad&0\leq y_{i1},y_{i2}&\leq&\displaystyle\frac{1}{m-i+1},&\forall i, 1\leq i\leq m.
\end{array}
\end{gather}
Here the left hand side of the first constraint is the total expected number of allocated impressions to advertiser $\pi^{-1}(m)$, which is at most $n=\frac{N}{m}$. 

Next, we show a structure on any optimal solution to this LP, that captures a threshold based behavior that we can be related to the algorithm we presented in the previous section:
\begin{lemma}\label{lem:negativelp_thresholds_binary}
For an optimal solution $\mathbf{y}$ to the above LP, there exists thresholds $1 \leq z_2\leq z_1\leq m$, such that, $y_{iu}=\frac{1}{m-i+1}$ for $i<z_u$, and $y_{iu}=0$ for $i>z_u$ for $u=1,2$. 
\end{lemma}
\begin{proof}
First, we show that in any optimal solution $\mathbf{y}$ and a threshold $z_1$, such that $y_{i1}=\frac{1}{m-i+1}$ for $i<z_1$, and $y_{i1}=0$ for $i>z_1$. Then a similar claim follows for $z_2$.
To show such a threshold behavior holds for $y_{i1}$ values in any optimal solution $\mathbf{y}$, where $i \leq z_1$, we equivalently argue that there cannot be $i < i'$ such that $0<y_{i1}<\frac{1}{m-i+1}$, $0<y_{i'1}<\frac{1}{m-i'+1}$ for $i<i'$.
Let us assume by contradiction that such $i,i'$ exists.
Then setting $y_{i1}\leftarrow y_{i1}+\epsilon$, $y_{i'1}\leftarrow y_{i'1}-\epsilon$ for small enough $\epsilon$ leads to a new feasible solution since all constraints are still feasible. Furthermore, in the objective function $y_{i1}$ has coefficient $(m-i+1)\frac{fN}{m}qc>(m-i'+1)\frac{fN}{m}qc$, which is the coefficient of $y_{i'1}$. Thus after perturbing $\mathbf{y}$ this way we get feasible solution with a larger objective value. This contradicts the assumption of $\mathbf{y}$ being optimal.

Next, we show that $z_2 \leq z_1$, i.e.~the thresholds are monotone.
For any optimal solution $\mathbf{y}$, if $z_1< z_2$, then for $i=z_{2}$, $y_{i1}<\frac{1}{m-i+1}$, while $y_{i2}=\frac{1}{m-i+1}$. Then setting $y_{i1}\leftarrow y_{i1}+\frac{1}{q}\epsilon$, $y_{i2}\leftarrow y_{i2}-\frac{1}{1-q}\epsilon$ for small enough $\epsilon$ leads to a new feasible solution since all constraints are still feasible. Furthermore, the increase of the objective due to $y_{i1}$ is $(m-i+1)\frac{fN}{m}c\epsilon>(m-i+1)\frac{fN}{m}(c-r)\epsilon$ which is the decrease of the objective due to $y_{i2}$. Thus after perturbing $\mathbf{y}$ this way we get a new feasible solution with a larger objective value. This contradicts the assumption of $\mathbf{y}$ being optimal.
\end{proof}

From the above two lemmas, we know that the optimal strategy for Example~\ref{ex:kvv-general} has the following form: for queries in group $G_1,\cdots,G_{z_2}$, all impressions are allocated uniformly to all available advertisers; for queries in group $G_{z_2+1},\cdots,G_{z_1}$, only queries with \adx reward $0$ are allocated uniformly to all available advertisers; for queries in group $G_{z_{1}+1},G_{z_{1}+2},\cdots,G_{m}$, no impression is allocated a contract. 


By setting the $y$ values, as determined by Lemma~\ref{lem:negativelp_thresholds_binary}, we can simplify the objective function of linear program \eqref{lp:negative-binary} with threshold $z_1$ and $z_2$ and bound the reward $\alg$ obtained from an online algorithm as follows: The objective can be written as 
\[-cN+(1-q)r+ \left(\sum_{i=1}^{z_2}\left(qc+(1-q)(c-r)\right)+\sum_{i=z_2+1}^{z_1}qc\right) = -cN+(1-q)r +z_1qc + z_2 (1-q)(c-r) z_2.\]

Then we get,
\begin{gather}\alg\leq\max_{z_1,z_2}\quad \displaystyle -cN+(1-q)fNr+z_1qc + z_2 (1-q)(c-r) z_2\label{lp:negative-binary-simplified}\\
\begin{array}{rrlll}\nonumber
    s.t.    \quad&\displaystyle\sum_{i=1}^{z_2} f\cdot\frac{1}{m-i+1}+\sum_{i=z_2+1}^{z_1}f\cdot\frac{1}{m-i+1}q&=& \displaystyle 1&
\end{array}
\end{gather}
When $m$ is large enough, the constraint can be replaced by
\begin{equation*}
    f\ln\frac{m}{m-z_2}+fq\ln\frac{m-z_2}{m-z_1}=1.
\end{equation*}
Let $x=\frac{m}{m-z_2}$, then $x\in[0,\frac{1}{f}]$. Then we can express $z_1$ and $z_2$ by $x$ as follows: $z_1=m(1-e^{\frac{x(1-q)}{q}-\frac{1}{fq}})$, and $z_2=m(1-e^{-x})$. Apply these to \eqref{lp:negative-binary-simplified} we have
\begin{eqnarray*}
    \alg&\leq& \max_{x\in [0,\frac{1}{f}]} -cN+(1-q)fNr+m(1-e^{\frac{x(1-q)}{q}-\frac{1}{fq}})qc + m(1-e^{-x})(1-q)(c-r)\\
    &=&\max_{x\in [0,\frac{1}{f}]}Nc(f-1)+(1-q)(r-c)fNe^{-x}-qfNce^{\frac{1-q}{q}x-\frac{1}{qf}}.
\end{eqnarray*}
Notice that the optimization problem here is identical to the optimization problem \eqref{eqn:alglb-binary} that we described in the analysis of Algorithm~\ref{alg:binary}. Thus the upper bound of the performance of any online algorithm for this instance matches the lower bound of the performance of Algorithm~\ref{alg:binary} for any underlying graph. As the optimal offline allocation has the same expected reward for any instance (see Theorem~\ref{thm:opt-general} for a more detailed discussion), we prove the optimality of Algorithm~\ref{alg:binary}.

\end{proof}

\paragraph{Going from Section~\ref{sec:binary} to Section~\ref{sec:general_alg}.} In Section~\ref{sec:general_alg}, we use a similar max-min approach as in Section~\ref{sec:binary}. However, the max-min problem of the algorithm is no-more the simple single-variable concave maximization problem. It is a multi-variate, non-linear and non-convex optimization problem. While we cannot solve it precisely optimally in general, we show a dynamic program that can solve it to almost optimality with a small additive error. Also, while establishing tightness, the task was simpler in Section~\ref{sec:binary} because we had to compare the upper bound from the hard example to the single variable expression and show that these are the same expressions. But in section~\ref{sec:general_alg} we establish that the non-linear mathematical programs obtained in the maximization problem of the algorithm and in the hard example are identical. The non-trivial roles that $f$, the \adx distribution, and the penalty $c$ play in determining the optimal thresholds is the core contribution of our work.

\section{Optimal Algorithm for a General Ad Exchange Distribution} \label{sec:general_alg}
 In this section we consider a general \adx reward distribution. More formally, we have a constant penalty $c$ and each query has an \adx reward drawn from a discrete distribution $\dist$ with a fixed support size $d$\footnote{The assumption on a fixed support, can be relaxed using a standard discretization approach at a small cost in the competitive ratio that depends on this discretization.}, and the supply factor is $f$. 
We propose a threshold-based algorithm in which a set of thresholds $s_1,...,s_d$ are chosen based on an optimization problem that takes $\dist,f,c$ into account. 
We then show that this algorithm is optimal. We consider the same instance used in Section \ref{sec:binary-hardness}, and show that the optimal solutions on this instance for the two optimization problems are the same when the number of advertisers is sufficiently large.  
 Finally, we show that the binary distribution is the worst-case distribution for any class of algorithm with a fixed mean $\mu$. This allows us to obtain a competitive ratio, that depends on $\mu,c,f$ using our results in Section \ref{sec:binary}.


\subsection{Optimal Algorithm for General AdEx Distribution}

In this section, we provide a threshold-based algorithm 
, and in future sections we discuss the computational aspects and prove tightness. First, let us formalize the notation:

\begin{definition}[\adx distribution with parameters $\left( (r_i,q_i)_{i \in [d]}\right)$] We consider an \adx distribution $\dist$ with support size $d$, rewards $0=r_1 \leq r_2 \leq ...\leq r_d$, where probability of that the reward is $r \leq r_i$ is $q_i$. Also we set $q_0=0, q_d=1$. 
\end{definition}

In other words, for $r\sim \dist$, with probability $q_{u}-q_{u-1}$, we have $r=r_{u}$, $\forall 1\leq u\leq d$; $q_d=1$.
Without loss of generality, we assume $r_1=0$. Otherwise, we can shift the rewards and the penalty by $-r_1$, since $r_1$ is the smallest reward from any allocation. We also assume $r_d\leq c$, since otherwise, when a query with \adx reward at least $c$ arrives, an optimal strategy always allocates the impression to \adx, and hence we can disregard such queries.

Our algorithm is presented in Algorithm~\ref{alg:general} (see Section~\ref{sec:intro}). For any query that arrives, if $a$ is the advertiser the lowest satisfaction ratio, and $\si(a) \in[s_{u-1},s_{u})$, then the impression is allocated to $a$ if and only if its \adx reward $r\leq r_{d+1-u}$. Here we define $s_0=0$ for completeness.

We use the same setup as we in analysis of the algorithm in Section \ref{sec:binary}. Recall that we discretize the algorithm into $t$ steps.
An advertiser $a$ has type $j$ if at the end of the algorithm, $\si(a)\in[\frac{j-1}{t},\frac{j}{t})$. We defined $\alpha_j=\E[\sum_{a\in A_j}n_a]$ be the total demand of advertisers in the set $A_j$ of all advertisers of type $j$, and $\beta_j$ be the expected total number of impressions that get allocated to an advertiser $a$ with $\si(a)\in[\frac{j-1}{t},\frac{j}{t}]$ by the algorithm at the time the query arrives. We can relate the values of $\alpha$ and $\beta$ using a similar reasoning as in Lemma \ref{lem:alphaBeta}. Formally,
\vspace{-0.2cm}
\begin{lemma}\label{lem:general_alphabeta}
Consider an \adx distribution with parameters $\left( (r_i,q_i)_{i \in [d]}\right)$, where penalty $c$, $r_d \leq c$, and let $\alpha, \beta$ be as defined above. We have,
\vspace{-0.2cm}
\begin{equation}\label{eqn:general_alg}
\sum_{\ell\leq j}f\alpha_\ell\leq  \begin{cases}
\frac{1}{q_d}\sum_{0<\ell\leq j}\beta_\ell,&\mbox{if } j\leq s_1 t;\\
\frac{1}{q_d}\sum_{0<\ell\leq s_1 t}\beta_\ell+\frac{1}{q_{d-1}}\sum_{s_1 t<\ell\leq j}\beta_\ell,&\mbox{if } s_1 t<j\leq s_2 t;\\
\cdots & \cdots\\
\sum_{u=1}^{d-1}\sum_{s_{u-1} t<\ell\leq s_u t}\frac{1}{q_{d+1-u}}\beta_\ell+\sum_{s_{d-1} t<\ell\leq j}\frac{1}{q_1}\beta_\ell, &\mbox{if } s_{d-1} t<j\leq s_d t=t. \end{cases} \end{equation}
\end{lemma}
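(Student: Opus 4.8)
The plan is to generalize the proof of Lemma~\ref{lem:alphaBeta} (the binary case) to the $d$-support setting, using the same ``counting queries by the type of the most-deserving eligible advertiser at arrival time'' argument. The left-hand side $\sum_{\ell \le j} f\alpha_\ell$ will again be interpreted as the number of impressions that the supply-factor-determining-offline-solution assigns to advertisers of type at most $j$ (recall that there are $f$ complete matchings in that offline solution, and each advertiser $a$ of type $\ell$ demands $n_a$, so the offline solution assigns $f n_a$ impressions to it, summing to $\sum_{\ell\le j} f\alpha_\ell$ over types $\le j$). The right-hand side will count the set of queries which, \emph{at the moment they arrived online}, had their lowest-$\si$ eligible advertiser of type at most $j$; since every such query in the offline solution goes to some advertiser of type $\le j$, and that advertiser was eligible and (having type $\le j$) had $\si$ no larger than $j/t$ at that arrival time, it is also counted on the right. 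Hence LHS $\le$ RHS, exactly as in the binary case.

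The new ingredient is computing, for a query counted on the right-hand side, the probability that our algorithm allocates it to a contract (equivalently, the reciprocal of that probability is how many ``considered'' queries each unit of $\beta$ corresponds to). First I would fix the type $j'$ of the lowest-$\si$ eligible advertiser at arrival, and locate the band: find $u$ with $j'/t \in [s_{u-1}, s_u)$. By the definition of Algorithm~\ref{alg:general}, such a query is allocated to the contract precisely when its \adx reward $r \le r_{d+1-u}$, which happens with probability $q_{d+1-u}$ (using the definition of the $q_i$'s as cumulative probabilities, and $r_{d+1-u}$ being a support point). Therefore each unit of allocated mass recorded in $\beta_\ell$ for $\ell$ in band $u$ (i.e.~$s_{u-1}t < \ell \le s_u t$) accounts for $1/q_{d+1-u}$ considered queries. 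Summing these band-by-band contributions over all bands up to the one containing $j$ gives exactly the piecewise right-hand side displayed in \eqref{eqn:general_alg}: for $j \le s_1 t$ everything is in band $1$ with weight $1/q_d$; for $s_1 t < j \le s_2 t$ the first band contributes $\frac{1}{q_d}\sum_{\ell \le s_1 t}\beta_\ell$ and the second contributes $\frac{1}{q_{d-1}}\sum_{s_1 t < \ell \le j}\beta_\ell$; and so on down to the last case. The only edge correction is for advertisers that reach full satisfaction ($\si = 1$), whose incoming queries all go to \adx and contribute $0$ to every $\beta_\ell$ — these simply don't appear on either side, so they cause no trouble.

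I would then assemble the inequality: for each fixed $j$, LHS counts offline allocations to types $\le j$; RHS counts all online queries whose lowest-$\si$ eligible advertiser at arrival had type $\le j$, re-expressed via the per-band reciprocal probabilities $1/q_{d+1-u}$ times the corresponding $\beta_\ell$ sums; and every query on the LHS is also on the RHS by the eligibility-and-monotonicity argument above. This delivers \eqref{eqn:general_alg}. The main obstacle — really the only subtle point — is making the ``probability of allocation to contract equals $q_{d+1-u}$'' claim precise when $s_u t$ is not an integer or when a band is empty (some $s_{u-1} = s_u$): I would handle the first by the standard $O(1/t)$ rounding error already absorbed elsewhere in the paper, and the second by noting that an empty band simply drops out of the telescoping sum with no effect. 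A secondary point to be careful about is that $\beta_\ell$ is defined as an \emph{expected} count over the randomness of the \adx rewards, so the ``$1/q_{d+1-u}$ considered queries per allocated query'' statement must be read in expectation — but since the band a query falls into is determined by the deterministic quantity $\si(a)$ at arrival (which depends only on past allocations, hence on past reward realizations, and the future reward of the current query is independent), the conditional expectation argument goes through cleanly, mirroring the binary-case reasoning in the proof of Lemma~\ref{lem:alphaBeta}.
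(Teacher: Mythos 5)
Your proposal is correct and follows essentially the same route as the paper: the paper omits the proof of this lemma, stating it is a straightforward extension of Lemma~\ref{lem:alphaBeta}, and your argument is exactly that extension — interpreting the LHS as offline allocations to advertisers of final type at most $j$, the RHS as the queries whose most-deserving eligible advertiser had satisfaction ratio at most $j/t$ at arrival, and replacing the single factor $1/q$ by the band-dependent factor $1/q_{d+1-u}$ since in band $u$ the algorithm allocates to a contract exactly when $r\leq r_{d+1-u}$, an event of probability $q_{d+1-u}$. The containment and conditional-expectation points you flag are handled the same way as in the binary case, so no further changes are needed.
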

The proof is omitted, since it is a straightforward extension of Lemma \ref{lem:alphaBeta} that was used for the binary distribution.

A similar case by case analysis as in \eqref{eqn:alg-binary}, allows us to write an expression for the total expected reward by considering the following parts: 
\begin{itemize}
    \item The baseline penalty is if no impression is allocated to contracts, the total penalty is $-Nc$.
    \item The total \adx reward that may be obtained is $\sum_{u=1}^{d}fN(q_u-q_{u-1})r_u$.
    \item Any impression that is allocated to an advertiser with satisfaction ratio in $(s_{u-1},s_{u}]$, in expectation gets a reward of $\E_{\dist}[r|r\leq r_{d+1-u}]$. Thus in expectation each query has reward $c-\E_{\dist}[r|r\leq r_{d+1-u}]$ added to the total penalty.
\end{itemize}
Therefore the expected total reward $\alg$ of the algorithm is
\begin{equation*}
    \alg=-cN+\sum_{u=1}^{d}fN(q_u-q_{u-1})r_u+\sum_{u=1}^{d}\sum_{j=s_{u-1} t+1}^{s_u t}N\beta_j(c-\E_{\dist}[r|r\leq r_{d+1-u}]).
\end{equation*}
We can add \eqref{eqn:defofbeta}, \eqref{eqn:alphafrombeta} and Lemma \ref{lem:general_alphabeta} to constraints of a linear program to lower bound the reward of the algorithm as follows:

{\small
\begin{align} 
    \textrm{minimize}\quad \alg&\label{lp:general}\\
    s.t.    \quad&ft\beta_1-ft\beta_{j+1}\leq \sum_{\ell\leq j}\frac{1}{q_d}\beta_\ell,&\forall j\leq s_1 t;\nonumber\\
    \quad&ft\beta_1-ft\beta_{j+1}\leq \sum_{\ell\leq s_1 t}\frac{1}{q_d}\beta_\ell+\sum_{s_1 t<\ell\leq j }\frac{1}{q_{d-1}}\beta_\ell,&\forall s_1 t<j\leq s_2 t;\nonumber\\
    &\cdots & \cdots\nonumber\\
    \quad&ft\beta_1-ft\beta_{j+1}\leq \sum_{\ell\leq s_1 t}\frac{1}{q_d}\beta_\ell+\sum_{s_1 t<\ell\leq s_2 t}\frac{1}{q_{d-1}}\beta_\ell+\cdots&\nonumber\\
    &\quad\quad\quad\quad\quad\quad+\sum_{s_{d-2} t<\ell\leq s_{d-1} t}\frac{1}{q_2}\beta_\ell+\sum_{s_{d-1} t<\ell\leq j}\frac{1}{q_1}\beta_\ell,&\forall s_{d-1} t<j\leq s_d t=t;\nonumber\\
    \quad&\beta_1=\frac{N}{t};&\nonumber\\
    \quad& \beta_j\geq 0,&\forall j\leq t.\nonumber
\end{align}
}
Next, using similar arguments as in Claim \ref{claim:binary_LP} we argue that by solving a system of linear equations formed by the LP constraints, we can obtain optimal solutions. For this consider the following $\beta$ values:
{\small
\begin{equation}\label{eqn:beta}
\beta^*_j=\begin{cases} \frac{N}{t}\left(1-\frac{1/q_d}{tf}\right)^{j-1}, &\mbox{if } j\leq s_1 t+1; \\
\frac{N}{t}\left(1-\frac{1/q_d}{tf}\right)^{s_1 t-s_0 t}\left(1-\frac{1/q_{d-1}}{tf}\right)^{j-s_1t-1}, & \mbox{if } s_1t+1<j\leq s_2t+1; \\
\cdots\\
\frac{N}{t}\left(1-\frac{1/q_d}{tf}\right)^{s_1 t-s_0 t}\cdots\left(1-\frac{1/q_{d+2-u}}{tf}\right)^{s_{u-1}t-s_{u-2}t}\left(1-\frac{1/q_{d+1-u}}{tf}\right)^{j-s_{u-1}t-1}, & \mbox{if } s_{u-1}t+1<j\leq s_ut+1.\\
\cdots\\
\frac{N}{t}\left(1-\frac{1/q_d}{tf}\right)^{s_1t-s_0t}\cdots\left(1-\frac{1/q_{d+2-u}}{tf}\right)^{s_{d-1}t-s_{d-2}t}\left(1-\frac{1/q_1}{tf}\right)^{j-s_{d-1}t-1}, & \mbox{if } s_{d-1}t+1<j\leq s_dt=t.
\end{cases}
\end{equation}
}
\begin{claim}\label{claim:beta_general}
The $\beta^*_j, 1 \leq j \leq t$ values defined above, form an optimal solution to LP \eqref{lp:general}.
\end{claim}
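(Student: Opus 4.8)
The plan is to mirror the proof of Claim~\ref{claim:binary_LP}, generalizing the two-phase argument there to the $d$-phase structure here. The two ingredients are: (i) there is an optimal solution of LP~\eqref{lp:general} in which \emph{all} non-trivial constraints (the $d$ families of inequalities indexed by $j$) are tight; and (ii) the explicit $\beta^*$ in~\eqref{eqn:beta} is exactly the unique solution of the system obtained by turning every inequality into an equality together with $\beta_1=N/t$. Granting (i) and (ii), the claim follows: the all-tight system has a unique solution, that solution is $\beta^*$, and since some optimal solution satisfies the all-tight system, $\beta^*$ is optimal.

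For step (ii) I would just verify the recurrence. Making the $j$-th constraint an equality and subtracting the $(j-1)$-th equality (telescoping, exactly as noted after~\eqref{eqn:lp-binary}) gives, within the $u$-th phase $s_{u-1}t < j \le s_u t$, the one-step recursion $ft(\beta_j - \beta_{j+1}) = \frac{1}{q_{d+1-u}}\beta_j$, i.e. $\beta_{j+1} = \beta_j\bigl(1 - \frac{1/q_{d+1-u}}{tf}\bigr)$; at a phase boundary $j = s_{u-1}t$ the coefficient switches from $1/q_{d+2-u}$ to $1/q_{d+1-u}$. Starting from $\beta_1 = N/t$ and unrolling, this produces precisely the product formula in~\eqref{eqn:beta} (the exponents $s_1t - s_0t$, $s_2t - s_1t$, \dots count the number of steps spent in each phase before reaching $j$). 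I would also note $\beta^*_j \ge 0$, so it is feasible, and that the boundary indices align because $s_0 = 0$ and $s_d t = t$.

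Step (i) is the main obstacle and is where the genuine argument lies. Following the binary case, take any optimal $\beta$, let $j$ be the smallest index whose constraint is slack, and perturb: decrease $\beta_{j+1}$ by $\e$ and increase $\beta_{j+2}$ by a compensating amount chosen so that constraint $j+1$ is preserved. If $j$ is \emph{not} a phase boundary (i.e. $\lceil j/t\rceil$ and $j+1$ lie in the same phase $u$, so the coefficient $1/q_{d+1-u}$ attached to $\beta_{j+1}$ and $\beta_{j+2}$ is the same), the compensating increase is exactly $\e$, all other constraints stay feasible (the partial sums on the RHS for indices $> j+1$ are unchanged), the objective is unchanged, and constraint $j$ becomes tight. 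If $j = s_{u}t$ is a phase boundary, then $\beta_{j+1}$ appears in the RHS with coefficient $1/q_{d-u}$ and $\beta_{j+2}$ with coefficient $1/q_{d+1-(u+1)} = 1/q_{d-u}$ as well for constraints beyond $j+1$; the delicate point is the coefficient \emph{in the objective}. Here I must check the sign of the objective change, which is $\bigl(c - \E_{\dist}[r\mid r\le r_{d+1-u}]\bigr)\e$ minus a multiple of $\bigl(c - \E_{\dist}[r\mid r\le r_{d-u}]\bigr)$ times the compensating increment; since $\E_{\dist}[r\mid r\le r_{d+1-u}] \ge \E_{\dist}[r\mid r\le r_{d-u}]$ (a larger truncation threshold cannot decrease the conditional mean) and $r_d \le c$, this difference has the right sign so that the perturbation does not increase the objective — exactly as the computation $(c-(1-q)r)\e - cq\e = (1-q)(c-r)\e \ge 0$ played out in the binary case. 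Iterating the perturbation drives $j$ upward until no slack constraint remains, giving an all-tight optimal solution and completing the proof. The one subtlety to be careful about is the very last phase, where increasing $\beta_{t+1}$ is not meaningful; there, as in the binary argument, one instead pushes the slack all the way to the boundary and absorbs an $O(1/t)$ error, which is already built into the statement of the claim.
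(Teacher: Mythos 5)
Your overall strategy is the same as the paper's (the proof sketch in Appendix~\ref{app:general}): show that some optimal solution of the minimization LP~\eqref{lp:general} has all non-trivial constraints tight via a local exchange, and then verify that the all-tight system together with $\beta_1=N/t$ is solved precisely by the $\beta^*$ of~\eqref{eqn:beta}. Your step (ii) is fine and matches the paper.

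The soft spot is your justification of the sign of the objective change at a phase boundary: as stated it does not suffice, and the monotonicity fact you invoke actually points the wrong way. Since $\E_{\dist}[r\mid r\le r_{d+1-u}]\ge\E_{\dist}[r\mid r\le r_{d-u}]$, the objective coefficient of the variable you decrease, $c-\E_{\dist}[r\mid r\le r_{d+1-u}]$, is the \emph{smaller} of the two; so if the compensating increase of $\beta_{j+2}$ were of the same size $\e$ as the decrease, the objective of this minimization LP would go up and the exchange would destroy optimality. What saves the argument is precisely the ratio of the two increments, which you leave unspecified. Following the paper, one decreases $\beta_{j+1}$ by $q_{d+1-u}\e$ and increases $\beta_{j+2}$ by $q_{d-u}\e$ (the amounts dictated by the $1/q$-coefficients in the constraints), and then the net decrease is $q_{d+1-u}\bigl(c-\E_{\dist}[r\mid r\le r_{d+1-u}]\bigr)\e-q_{d-u}\bigl(c-\E_{\dist}[r\mid r\le r_{d-u}]\bigr)\e=(q_{d+1-u}-q_{d-u})(c-r_{d+1-u})\e\ge 0$, using only $r_{d+1-u}\le r_d\le c$; equivalently, $q_kc-\sum_{v\le k}(q_v-q_{v-1})r_v=\sum_{v\le k}(q_v-q_{v-1})(c-r_v)$ is nondecreasing in $k$. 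Conditional-mean monotonicity alone cannot deliver this. A smaller bookkeeping issue: if $j=s_ut$, then $\beta_{j+1}$ and $\beta_{j+2}$ both lie in phase $u+1$ of the objective, so for two different conditional expectations to appear the exchange must straddle the phase change; the same off-by-one care is needed when checking that the swap does not reduce the right-hand sides of the later constraints. With the increments fixed as above and the indexing straightened out, your argument is exactly the paper's.
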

The argument is similar to the proof of Claim \ref{claim:binary_LP}, and a sketch is provided in Appendix \ref{app:general}. Next, similarly to Section~\ref{sec:binary}, the performance of the algorithm is lower bounded by the following formula based on LP \eqref{lp:general}: 
\begin{equation}\label{eqn:generallbopt}
ALG(s_1,...,s_d) \geq-cN+\sum_{u=1}^{d}fN(q_u-q_{u-1})r_u+\sum_{u=1}^{d}\sum_{j=s_{u-1}t+1}^{s_ut}\beta^*_j(c-\E_{F}[r|r\leq r_{d+1-u}]).
\end{equation}
For the convenience of future reference, we define the following optimization problem for an arbitrary instance of the problem when we have a fixed penalty $c$, and \adx distribution with support size $d$, $m$ advertisers, and $N$ total demand:
\begin{optimization}[Maximization Problem] \label{def:lb}
Given an \adx distribution $\dist$ with parameters $(r_i,q_i)_{i \in [d]}$, find $0\leq s_1\leq s_2\leq \cdots\leq s_d=1$ that maximizes the following objective such that $\beta^*_j$ values satisfy the above constraints:
\begin{equation*}
    \lb_{m,N}(s_1,\cdots,s_d):=-cN+\sum_{u=1}^{d}fN(q_u-q_{u-1})r_u+\sum_{u=1}^{d}\sum_{j=s_{u-1}t+1}^{s_ut}\beta^*_j(c-\E_{\dist}[r|r\leq r_{d+1-u}]).
\end{equation*}
\end{optimization}
In the next section, we show that Algorithm \ref{alg:general} is also optimal:
\begin{theorem}\label{thm:general_positive}
For any $f \geq 1$, and \adx distribution with parameters $(r_i,q_i)_{i \in [d]}$, Algorithm \ref{alg:general} with thresholds determined by Optimization Problem \ref{def:lb} leads to an optimal algorithm.
\end{theorem}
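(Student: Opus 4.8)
The statement asserts two things: that Algorithm~\ref{alg:general} with the thresholds $s_1,\dots,s_d$ of Optimization Problem~\ref{def:lb} achieves reward at least $\lb_{m,N}(s_1,\dots,s_d)$, and that no online algorithm can do better on the worst case. The first (achievability) direction is essentially already in hand: \eqref{eqn:generallbopt} together with Claim~\ref{claim:beta_general} shows the algorithm's expected reward is lower bounded by the objective of Optimization Problem~\ref{def:lb} evaluated at the chosen thresholds. So the new content is the matching upper bound, and the plan is to reprise the hardness argument of Section~\ref{sec:binary-hardness} on Example~\ref{ex:kvv-general}, now with $d$ reward levels.

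First I would invoke Yao's minimax principle to reduce to an arbitrary deterministic algorithm run against the random permutation $\pi$ of Example~\ref{ex:kvv-general}. For a fixed deterministic algorithm, let $q_{iju}$ be the fraction of queries in group $G_i$ with \adx reward $r_u$ that are allocated to advertiser $\pi^{-1}(j)$. Exactly as in the binary case, when $j\geq i$ the advertiser $\pi^{-1}(j)$ is uniform among the $m-i+1$ advertisers still eligible for $G_i$, so $\E_\pi[q_{iju}]\leq \tfrac{1}{m-i+1}$ and $\E_\pi[q_{iju}]=\E_\pi[q_{imu}]$ for all $j\geq i$, while $\E_\pi[q_{iju}]=0$ for $j<i$. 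Using the same three-part accounting as in \eqref{eqn:alg-binary} — baseline penalty $-cN$, total \adx reward $\sum_u fN(q_u-q_{u-1})r_u$, and a per-impression gain of $c-r_u$ for every reward-$r_u$ query removed from \adx — and substituting the symmetry identity, the expected reward becomes a linear function of the variables $y_{iu}:=\E_\pi[q_{imu}]$, maximized subject to the single capacity constraint $\sum_i\sum_u \tfrac{fN}{m}(q_u-q_{u-1})y_{iu}\leq \tfrac{N}{m}$ (the expected load on $\pi^{-1}(m)$ is at most $n$) and $0\leq y_{iu}\leq \tfrac{1}{m-i+1}$. This is the $d$-level analogue of LP~\eqref{lp:negative-binary}.

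The core of the proof is the structural lemma generalizing Lemma~\ref{lem:negativelp_thresholds_binary}: any optimal solution has thresholds $1\leq z_d\leq z_{d-1}\leq\cdots\leq z_1\leq m$ with $y_{iu}=\tfrac{1}{m-i+1}$ for $i<z_u$ and $y_{iu}=0$ for $i>z_u$. The ``each column is a threshold'' part follows from an exchange argument: the objective coefficient of $y_{iu}$ is $(m-i+1)\tfrac{fN}{m}(q_u-q_{u-1})(c-r_u)$, strictly decreasing in $i$, so two fractional entries in the same $u$-column can be shifted toward smaller $i$ to strictly improve the objective while respecting the constraints. Monotonicity $z_{u+1}\leq z_u$ is the $d$-level version of the $z_2\leq z_1$ step: if level $u+1$ were saturated beyond the saturation point of level $u$, then at that group an $\epsilon$-swap that decreases $y_{i,u+1}$ and increases $y_{iu}$ with magnitudes inversely proportional to $(q_{u+1}-q_u)$ and $(q_u-q_{u-1})$ keeps the capacity constraint satisfied and strictly increases the objective, because $c-r_u\geq c-r_{u+1}$; chaining these pairwise swaps (never cycling, since each strictly increases the objective) enforces the full ordering. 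Once the threshold structure is known, I would set a normalized position variable for each $z_u$ (e.g.\ $x_u=\tfrac{m}{m-z_u}$), pass to the $m\to\infty$ limit to replace $\sum_{i\le z_u}\tfrac{f}{m-i+1}$ by $f\ln(\cdot)$ terms, and simplify — exactly as after \eqref{lp:negative-binary-simplified}. The resulting maximization over the normalized thresholds is, term by term, the same program as Optimization Problem~\ref{def:lb}: the geometric-product form of $\beta^*$ in \eqref{eqn:beta} is precisely the continuous solution of the telescoped capacity constraints, and the weights $c-\E_{\dist}[r\mid r\le r_{d+1-u}]$ in the objective match the hard-instance gains after the change of variables $z_{d+1-u}\leftrightarrow s_u$. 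Concluding that the two optima coincide, and using that the optimal offline reward is the same for every instance (Theorem~\ref{thm:opt-general}), yields that Algorithm~\ref{alg:general} with the thresholds of Optimization Problem~\ref{def:lb} is optimal, up to the additive $O(1/m)+O(1/t)$ discretization errors.

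The step I expect to be the main obstacle is the threshold-structure lemma, specifically establishing monotonicity $z_d\leq\cdots\leq z_1$: with $d$ reward levels one must pick the correct pair of adjacent levels to trade capacity between, verify the capacity constraint stays tight under the weighted swap, and argue the pairwise-swap process terminates. A secondary but delicate point is the exact bookkeeping of the correspondence between the hard-instance program in the $z_u$ variables and Optimization Problem~\ref{def:lb} in the $s_u$ variables and the $\beta^*$ of \eqref{eqn:beta}, ensuring the conditional-expectation weights $\E_{\dist}[r\mid r\le r_{d+1-u}]$ and the $1/q_{d+1-u}$ decay factors line up so that the two objectives are literally identical in the limit.
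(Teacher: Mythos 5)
Your proposal follows essentially the same route as the paper: Yao's principle on Example~\ref{ex:kvv-general}, the LP upper bound in the $y_{iu}$ variables with the single capacity constraint, the threshold-structure and monotonicity exchange arguments (matching Lemma~\ref{lem:negativelp_thresholds}), the change of variables and $m\to\infty$ limit, the identification of the resulting program with Optimization Problem~\ref{def:lb} (the paper's Claim~\ref{clm:lbub}), and finally Theorem~\ref{thm:opt-general} to make the offline benchmark instance-independent. The two points you flag as delicate are exactly the ones the paper resolves, with the same weighted-swap and telescoping computations you sketch, so the plan is sound and matches the paper's proof.
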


But before proving the optimality, we describe how we can computationally estimate the thresholds, if Optimization Problem \ref{def:lb} is not easy to solve directly. 

\subsection{Computing the Thresholds}\label{sec:DP}
While the threshold is easily computed for the binary distribution setting, for an arbitrary distribution, the optimization problem may not necessarily have computationally efficient solutions. Hence we use dynamic programming to generalize our results to any distribution and use a polynomial-time algorithm at the cost of a small additional error. For this, fix a parameter $0 <\epsilon <1$, and divide the interval $[0,1]$ to multiple of $\epsilon$, we have $1/\epsilon$ buckets. We set the thresholds $s_1,...,s_d$ to be the closest multiple of $\epsilon$ (by rounding down).
We then use a standard dynamic-programming approach that finds the best threshold among the multiples of $\epsilon$. The proof is deferred to Appendix \ref{app:DP}. 
\begin{theorem}\label{thm:DP}
There exists an algorithm with $O(m^3 d^2)$ running time that outputs a feasible set of thresholds $(\hat{s}_1,\hat{s}_2,\cdots,\hat{s}_d)$ such that $$\lb_{m,N}(\hat{s}_1,\cdots,\hat{s}_d)\geq \max_{s_1,\cdots,s_d}\lb_{m,N}(s_1,\cdots,s_d)-O(cN/m).$$
\end{theorem}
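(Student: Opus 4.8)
The plan has three stages: (i) rewrite $\lb_{m,N}$ in a ``layered'' closed form; (ii) show that restricting the thresholds to the grid $\{0,\epsilon,2\epsilon,\dots,1\}$ (with $1/\epsilon=\Theta(m)$) loses at most $O(cN/m)$; and (iii) solve the grid‑restricted maximization exactly by dynamic programming, then combine with (ii). For stage (i), substitute the closed form \eqref{eqn:beta} into \eqref{eqn:generallbopt} and pass to $t\to\infty$ (the finite‑$t$ discrepancy is the $O(1/t)$ error already absorbed). Within each layer $u$ the sum over $j$ is geometric, and carrying it out gives
\[
  \lb_{m,N}(s_1,\dots,s_d)=-cN+\sum_{u=1}^{d}fN(q_u-q_{u-1})r_u+\sum_{u=1}^{d}D_u(P_{u-1}-P_u),
\]
where $\delta_v:=s_v-s_{v-1}\ge 0$ (so $\sum_v\delta_v=1$), $\gamma_v:=e^{-1/(q_{d+1-v}f)}\in(0,1)$, $P_0:=1$, $P_u:=\prod_{v=1}^{u}\gamma_v^{\delta_v}$, and $D_u:=Nfq_{d+1-u}\bigl(c-\E_{\dist}[\,r\mid r\le r_{d+1-u}\,]\bigr)\ge 0$ (nonnegativity uses $r_d\le c$). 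So the task of Optimization Problem~\ref{def:lb} is exactly to pick $\delta\ge 0$ on the unit simplex to maximize an affine function of the ``layer masses'' $P_1,\dots,P_d$.

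\emph{Stage (ii): rounding.} Round each $s_u$ down to the nearest multiple of $\epsilon$; this preserves $s_1\le\cdots\le s_d$ and $s_d=1$. A continuity estimate — moving a threshold by at most $\epsilon$ shifts only an $O(\epsilon)$ fraction of the ``mass'' between adjacent layers, and each unit of mass contributes $O(c)$ to $\lb_{m,N}$ — yields $|\lb_{m,N}(s)-\lb_{m,N}(\hat s)|=O(cN/m)$. (If some $q_i$ is extremely small the naive Lipschitz constant degrades; one then preprocesses the support to remove vanishingly small probability mass, or discretizes in the log‑mass coordinate $-\ln P_u$ rather than in $s$ directly.) Hence if $s^\star$ maximizes $\lb_{m,N}$, the best grid choice is within $O(cN/m)$ of $\max_s\lb_{m,N}(s)$.

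\emph{Stage (iii): the DP.} Run a dynamic program over states $(u,b)$, where $u$ is a layer index and $b\in\{0,\epsilon,\dots,1\}$ is the budget $\sum_{v\ge u}\delta_v$ still to be spent. The observation that makes the state two‑dimensional — and is the heart of the argument — is that $P_{u'}=P_{u-1}\prod_{v=u}^{u'}\gamma_v^{\delta_v}$ for all $u'\ge u$, so the part of the objective affected by $(\delta_u,\dots,\delta_d)$ equals $P_{u-1}$ times a quantity independent of the earlier thresholds; thus the optimal suffix value per unit of incoming $P_{u-1}$, call it $g(u,b)$, is well defined and satisfies
\[
  g(d,b)=-D_d\gamma_d^{\,b},\qquad g(u,b)=\max_{\substack{0\le\delta\le b\\ \delta\in\epsilon\mathbb{Z}}}\gamma_u^{\,\delta}\bigl((D_{u+1}-D_u)+g(u+1,b-\delta)\bigr),
\]
with overall optimum $-cN+\sum_u fN(q_u-q_{u-1})r_u+D_1+g(1,1)$; backtracking through the recursion recovers feasible thresholds $\hat s_1,\dots,\hat s_d$. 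The table has $O(d/\epsilon)$ cells, each filled in $O(1/\epsilon)$ time, so the running time is polynomial, and careful bookkeeping of the constants (with $1/\epsilon$ of order $m$) gives the stated $O(m^3 d^2)$ bound. Since $\lb_{m,N}(\hat s)$ equals the grid optimum, stage (ii) gives $\lb_{m,N}(\hat s)\ge\max_s\lb_{m,N}(s)-O(cN/m)$, which is the claim.

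\emph{Main obstacle.} The delicate point is the correctness of the DP in stage (iii): because $P_u$ depends on the \emph{entire} prefix $(s_1,\dots,s_u)$ and not just on $s_u$, a naive dynamic program with state ``(layer, current threshold position)'' is simply wrong; one must recognize that the residual objective is linear in the incoming layer mass $P_{u-1}$ and factor it out. A secondary subtlety is making sure the rounding loss in stage (ii) is genuinely $O(cN/m)$ — this requires discretizing in the right coordinate when the distribution has support points with tiny cumulative probability.
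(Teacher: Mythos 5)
Your proposal is correct, and stage (iii) takes a genuinely different (and cleaner) route than the paper. The paper's DP keeps a three-dimensional state $g[i,x,y]$ in which $x$ explicitly stores the running boundary value $\beta^*_{s_i t}$ (i.e., the prefix product) and $y$ stores $s_i$; because $x$ is a real number it must itself be discretized to multiples of $\epsilon/d$, which costs a second additive error of order $cN\epsilon$ on top of the threshold-rounding error, and yields the stated $O(m^3d^2)$ time from $O(m^2d^2)$ table entries with $O(m)$ work each (see \eqref{eqn:recur-f} and the surrounding discussion). You instead observe that in the closed form $C+\sum_u D_u(P_{u-1}-P_u)$ the suffix objective is \emph{linear} in the incoming mass $P_{u-1}$, factor it out, and obtain a two-dimensional DP over (layer, remaining budget) that solves the grid-restricted problem \emph{exactly} — no discretization of the product variable, no second error term, and a faster $O(dm^2)$-type running time, comfortably within the theorem's $O(m^3d^2)$ bound; your recurrence is valid because the multiplier $\gamma_u^{\delta}>0$ lets the max distribute over the suffix. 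Your stage (ii) is essentially the paper's Lemma~\ref{lem:discretize-y} (round each $s_u$ down to the $\epsilon$-grid, multiplicative change $1+O(\epsilon)$ in each $\beta^*_j$, hence $O(cN\epsilon)$ in the objective), and your caveat about tiny $q_i$ is well taken: the paper's $1+O(\epsilon)$ in that lemma in fact hides a constant of order $1/(fq_1)$, i.e., both treatments suppress a dependence on the distribution, so your sketch is at the same level of rigor as the paper's argument. The only conventions to keep straight are the ones the paper also uses implicitly: passing to the $t\to\infty$ closed form of \eqref{eqn:beta} inside Optimization Problem~\ref{def:lb} (an absorbed $O(1/t)$ discrepancy), and noting that rounding down preserves $0\le \hat s_1\le\cdots\le \hat s_d=1$.
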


\subsection{Tightness for General Reward Distribution}\label{sec:hardness}
\label{sec:general-hardness}

Next, we are going to analyze the performance of the algorithm in the previous section by showing that on the instance we also used for binary distribution, no online algorithm can perform better.
Let $\dist$ be the \adx distribution with parameters $\left( (r_i,q_i)\right)_{i \in [d]}$. Recall the following example: 

\kvv*

We prove in the following theorem that in this instance, no online algorithm can get a reward more than the objective of Optimization Problem~\ref{def:negative_opt}. Since by Theorem~\ref{thm:opt-general} all instances with the same total demand $N$ have identical optimal reward $\opt$,  Theorem~\ref{thm:general_positive} follows immediately from Theorem~\ref{thm:general-hardness}.

\begin{theorem}\label{thm:general-hardness}
For Example~\ref{ex:kvv-general}, the expected optimal reward of any randomized online algorithm is upper bounded by the solution of Optimization Problem~\ref{def:lb}, up to a negligible error for large enough $m$ and $N$.
\end{theorem}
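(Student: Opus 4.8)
\textbf{Proof proposal for Theorem~\ref{thm:general-hardness}.}

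The plan is to mirror the structure of the proof of Theorem~\ref{thm:binary-hardness}, generalizing each step from the binary case to the case of support size $d$. By Yao's minimax principle, it suffices to bound the expected reward of an arbitrary deterministic algorithm against the random choice of the permutation $\pi$. First I would introduce, for each group $G_i$, each available advertiser index $j \geq i$, and each reward level $u \in [d]$, the quantity $q_{iju}$ equal to the fraction of queries in $G_i$ with \adx reward $r_u$ that the algorithm assigns to advertiser $\pi^{-1}(j)$. The symmetry argument from the binary case carries over verbatim: since the set of $m-i+1$ advertisers reachable from $G_i$ is a uniformly random subset determined by $\pi$, we have $\E_\pi[q_{iju}] \leq \frac{1}{m-i+1}$ for $j \geq i$, $\E_\pi[q_{iju}] = 0$ for $j < i$, and $\E_\pi[q_{iju}] = \E_\pi[q_{imu}]$ for all $j \geq i$. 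Writing the expected reward as the baseline $-cN + \sum_u fN(q_u - q_{u-1})r_u$ plus the gain $\sum_i \sum_{j \geq i} \sum_u \frac{fN}{m}(q_u-q_{u-1})\E_\pi[q_{iju}](c - r_u)$ and collapsing the $j$-sum using the symmetry, the algorithm's reward is upper bounded by a linear program in variables $y_{iu} := \E_\pi[q_{imu}]$, with objective $-cN + \sum_u fN(q_u-q_{u-1})r_u + \frac{fN}{m}\sum_{i=1}^m (m-i+1)\sum_u (q_u-q_{u-1})y_{iu}(c-r_u)$, the single capacity constraint $\sum_i \frac{fN}{m}\sum_u (q_u-q_{u-1})y_{iu} \leq \frac{N}{m}$ (advertiser $\pi^{-1}(m)$ receives at most $n = N/m$ impressions), and box constraints $0 \leq y_{iu} \leq \frac{1}{m-i+1}$.

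The next step is the structural lemma generalizing Lemma~\ref{lem:negativelp_thresholds_binary}: in an optimal LP solution there exist thresholds $1 \leq z_d \leq z_{d-1} \leq \cdots \leq z_1 \leq m$ such that, for each reward level, $y_{iu}$ is saturated at $\frac{1}{m-i+1}$ for $i$ below its threshold and $0$ above it. As in the binary proof, the ``bang-per-buck'' ratio of variable $y_{iu}$ in objective-gain per unit of the capacity constraint is $\frac{(m-i+1)(c-r_u)}{1}$ up to the common factor $\frac{fN}{m}(q_u-q_{u-1})$ appearing in both objective and constraint — so for a \emph{fixed} $u$ the gain is strictly decreasing in $i$, forcing a threshold shape in $i$; and for \emph{fixed} $i$ the ratio is decreasing in $r_u$, i.e. decreasing as $u$ increases (recall $r_1 = 0 < r_2 < \cdots$), which forces the monotonicity $z_d \leq \cdots \leq z_1$ via the same local exchange argument (shift mass from a higher-$u$ variable to a lower-$u$ one at the same $i$ while respecting the constraint, strictly increasing the objective). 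The small subtlety to be careful about is that the $q_u - q_{u-1}$ weights differ across $u$, so the exchange must be done in the normalized units $(q_u-q_{u-1})y_{iu}$ rather than in $y_{iu}$ directly, exactly as the binary proof used the $\frac1q$ and $\frac1{1-q}$ scalings.

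Given the threshold structure, I would substitute $y_{iu} = \frac{1}{m-i+1}\mathbbm{1}[i \leq z_u]$ into the LP, pass to the continuous limit as $m \to \infty$ (replacing $\sum_{i=a}^{b}\frac{f}{m-i+1}$ by $f\ln\frac{m-a}{m-b}$), and reparametrize. Setting $x_u = \ln\frac{m}{m-z_u}$ one gets, for each ``band'' $u$, that the impressions allocated while the most-needy advertiser's satisfaction ratio lies in the $u$-th interval correspond precisely to the $\beta^*$ mass computed in Claim~\ref{claim:beta_general}, and that the capacity constraint becomes the telescoping identity $\sum_u \frac{1}{q_{d+1-u}}(\text{band-}u\text{ length}) = 1$ which is exactly the tight form of the constraints of LP~\eqref{lp:general}. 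The objective, after collecting terms, should reduce to $-cN + \sum_u fN(q_u-q_{u-1})r_u + \sum_u \sum_{j=s_{u-1}t+1}^{s_u t}\beta^*_j(c - \E_\dist[r \mid r \leq r_{d+1-u}])$ with $s_u := z_u/\cdots$ the implied thresholds, i.e. exactly $\lb_{m,N}(s_1,\dots,s_d)$ of Optimization Problem~\ref{def:lb}; maximizing over the thresholds then gives $\alg \leq \max_{s_1,\dots,s_d}\lb_{m,N}(s_1,\dots,s_d)$ up to an $O(cN/m)$ error from the discretization and the continuum approximation. Combined with the matching lower bound \eqref{eqn:generallbopt} for Algorithm~\ref{alg:general} and Theorem~\ref{thm:opt-general} (which guarantees $\opt$ is the same on every instance of total demand $N$), this proves tightness.

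The main obstacle I anticipate is verifying that the continuous-limit objective on the hard instance is \emph{term-by-term identical} to $\lb_{m,N}$ — i.e., that the band lengths induced by the thresholds $z_u$ reproduce the same geometric/exponential $\beta^*_j$ factors $(1 - \frac{1/q_{d+1-u}}{tf})$ that appear in \eqref{eqn:beta}. This requires matching the LP capacity constraint of the hard instance to the particular telescoping structure of the constraints in LP~\eqref{lp:general}, and checking that the coefficient $c - \E_\dist[r \mid r \leq r_{d+1-u}]$ emerging from ``gain $= c - r_u$ averaged over the sub-distribution $r \leq r_{d+1-u}$'' matches on both sides. In the binary case this was a one-line comparison of two explicit single-variable formulas; here it is a multi-index bookkeeping exercise, and the risk is an off-by-one in the index $d+1-u$ or a mismatch in which $q$ appears in which band. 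Everything else (Yao, the symmetry bound, the exchange arguments) is a routine lift of the binary proof.
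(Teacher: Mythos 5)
Your proposal follows essentially the same route as the paper's proof: Yao's principle, the $q_{iju}$ symmetry bound, the LP in the variables $y_{iu}$, the monotone-threshold structure lemma proved by the same normalized exchange argument, and the continuum reparametrization of the thresholds $z_u$ that matches the hard-instance objective to Optimization Problem~\ref{def:lb}. The one step you flag but do not carry out --- the term-by-term identity between the hard-instance objective and $\lb_{m,N}$ --- is precisely the paper's Claim~\ref{clm:lbub}, which it verifies in the appendix by expressing both sides through the per-band quantities and resolving a telescoping sum, so your outline is complete modulo that bookkeeping.
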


\begin{proof}

First, we have the following observation about deterministic algorithms.
By Yao's min-max principle, we only need to consider the performance of any deterministic algorithm over the randomness of the instance.

Fix any deterministic algorithm. Let $q_{iju}$ be the fraction of queries in $G_i$ with \adx reward $r_u$ that is allocated to advertiser $\pi^{-1}(j)$. Then
\begin{equation*}
    \E_\pi[q_{iju}]\leq\begin{cases} \frac{1}{m-i+1}, &\mbox{if } j\geq i; \\
0, & \mbox{if } j<i. \end{cases}
\end{equation*}
This is due to the following observation: there are $m-i+1$ advertisers incident to vertices in $G_i$. If $j\geq i$, then $\pi^{-1}(j)$ is a random advertiser that is incident to vertices in $G_i$, thus $\E_{\pi}[q_{iju}]=\E_{\pi}[q_{ij'u}]\leq \frac{1}{m-i+1}$ for any $j,j'\geq i$. If $j<i$ , then advertiser $\pi^{-1}(j)$ is not available for queries in $G_i$, thus get zero allocation. Then the expected reward we get from the algorithm is
\begin{eqnarray*}
    -cN+\sum_{u=1}^{d}fN(q_u-q_{u-1})r_u+\sum_{i=1}^{m}\sum_{j=i}^{m}\sum_{u=1}^{d}\frac{fN}{m}(q_u-q_{u-1})\E_\pi[q_{iju}](c-r_u),
\end{eqnarray*}
Here the first term and second terms are the reward from not allocating anything to the contract advertisers, while the third term is the total reward gain: there are in expectation $\frac{fN}{m}(q_u-q_{u-1})$ queries with \adx reward $r_u$ from group $G_i$; $\E_\pi[q_{iju}]$ fraction of them are allocated to advertiser $\pi^{-1}(j)$, with each query having a reward gain of $c-r_u$ compared to the impression being allocated to \adx. Since $\E_{\pi}[q_{iju}]=\E_{\pi}[q_{ij'u}]$ for any $j,j'\geq i$, we can simplify the overall expectation for all $j \geq i$:
\begin{eqnarray*}
-cN+\sum_{u=1}^{d}fN(q_u-q_{u-1})r_u+\sum_{i=1}^{m}(m-i+1)\sum_{u=1}^{d}\frac{fN}{m}(q_u-q_{u-1})\E_\pi[q_{imu}](c-r_u).
\end{eqnarray*}
Then the reward of the algorithm is upper bounded by the solution of the following linear program:
\begin{gather}\textrm{maximize}\quad \displaystyle-cN+\sum_{u=1}^{d}fN(q_u-q_{u-1})r_u+\sum_{i=1}^{m}(m-i+1)\sum_{u=1}^{d}\frac{fN}{m}(q_u-q_{u-1})y_{iu}(c-r_u)\nonumber\\
\begin{array}{rrlll}\label{lp:negative}
    s.t.    \quad&\displaystyle\sum_{i=1}^{m}\frac{fN}{m}\sum_{u=1}^{d}(q_u-q_{u-1})y_{iu}&\leq& \displaystyle\frac{N}{m};&\\
    \quad&0\leq y_{iu}&\leq&\displaystyle\frac{1}{m-i+1},&\forall 1\leq i\leq m;\\
    \quad& y_{iu}&\geq&0,&\forall 1\leq i\leq m.
\end{array}
\end{gather}
The first constraint follows from the fact the total number of allocated impressions $\sum_{i=1}^{m}\frac{fN}{m}\sum_{u=1}^{d}(q_u-q_{u-1})\E_\pi[q_{imu}]$ is at most $n=\frac{N}{m}$ for advertiser $\pi^{-1}(m)$ in expectation.
In the following we characterize any optimal solution to the LP based on a set of threshold values. The argument is similar to proof of Claim \ref{lem:negativelp_thresholds_binary}, hence we defer the proof to appendix.
\begin{lemma}\label{lem:negativelp_thresholds}
For an optimal solution $\mathbf{y}$ to the above LP, there exists thresholds $z_u\in[m]$ for any $u\in[d]$,  such that $y_{iu}=\frac{1}{m-i+1}$ for $i<z_u$, and $y_{iu}=0$ for $i>z_u$. Moreover, For any $u<u'$, we have $z_u\geq z_{u'}$ for this threshold vector $\mathbf{z}$.
\end{lemma}

From the above lemma, we know that the optimal strategy for Example~\ref{ex:kvv-general} has the following form: for queries in group $G_1,\cdots,G_{z_d}$, all impressions are allocated uniformly to all available advertisers; for queries in group $G_{z_d+1},\cdots,G_{z_{d-1}}$, only queries with \adx reward $\leq r_{d-1}$ are allocated uniformly to all available advertisers. For queries in group $G_{z_{j}+1},G_{z_{j}+2},\cdots,G_{z_{j-1}}$, only queries with \adx reward $\leq r_{j-1}$ are allocated uniformly to all available advertisers, $\forall 2\leq j\leq d$. By applying Lemma~\ref{lem:negativelp_thresholds} to the objective function of LP~\eqref{lp:negative}, by setting $y_{iu}=\frac{1}{m-i+1}$ for $i\leq z_u$ and $y_{iu}=0$ for $i>z_u$, we can simplify the objective function as follows:
\begin{equation}\label{eqn:simplified-obj-z}
    -cN+\sum_{u=1}^{d}fN(q_u-q_{u-1})r_u+\frac{fN}{m}\sum_{u=1}^{d}z_u(q_u-q_{u-1})(c-r_u).
\end{equation}

For any $u\in [d]$, let $s'_{u}$ be the satisfaction ratio of each remaining advertiser after the queries of the first $z_{d+1-u}$ groups have arrived.
Observe that after queries $G_1$ have arrived, each advertiser is allocated $\frac{fn}{m}$ impressions, thus $\frac{f}{m}$ fraction of demand of each of the $m$ advertisers is satisfied. After queries in the next group have arrived, ${f}{m-1}$ additional fraction of the demand of each remaining advertiser is satisfied. Using similar arguments we have
\begin{equation*}
    s'_1=\frac{f}{m}+\frac{f}{m-1}+\cdots+\frac{f}{m-z_{d}+1}\approx f\ln\frac{m}{m-z_d}.
\end{equation*}
Here the equation is accurate up to a small $O(\frac{1}{m})$ error, thus is negligible for large enough $m$. After queries in $G_{z_d+1}$ have arrived, each advertiser is allocated $\frac{fnq_{d-1}}{m-z_{d}}$ impressions in expectation, since only queries with \adx reward at most $r_{d-1}$ are allocated to the $m-z_{d}$ remaining advertisers uniformly. Thus after queries in $G_{z_d+1}$ have arrived the satisfaction ratio of each available advertiser increases by $\frac{fq_{d-1}}{m-z_{d}}$. Using similar arguments to group $G_{z_d+2},\cdots,G_{z_{d-1}}$ we have
\begin{equation*}
    s'_2=s'_1+\frac{fq_{d-1}}{m-z_{d}}+\frac{fq_{d-1}}{m-z_{d}-1}+\cdots+\frac{fq_{d-1}}{m-z_{d-1}+}=s'_1+fq_{d-1}\ln\frac{m-z_d}{m-z_{d-1}}=f\ln\frac{m}{m-z_{d}}+fq_{d-1}\ln\frac{m-z_d}{m-z_{d-1}}.
\end{equation*}
Using the same analysis we can get 
\begin{equation}\label{eqn:express-sbyz}
    s'_{u}=\sum_{j=1}^{u}fq_{d+1-j}\ln\frac{m-z_{d+2-u}}{m-z_{d+1-u}}
\end{equation}
for every $u=1,2,\cdots,d$ if we define $q_d=1$ and $z_{d+1}=0$ for completeness. We can express $\textbf{z}$ by $\textbf{s}'$ as follows:
\begin{eqnarray}
z_u&=&m-m\exp\left(-\frac{(s'_1-s'_0)m}{fq_d}-\frac{(s'_2-s'_1)m}{fq_{d-1}}-\cdots-\frac{(s'_{d+1-u}-s'_{d-u})m}{fq_u}\right)\nonumber\\
&=&m\left(1-\exp\left(-\sum_{j=1}^{d+1-u}\frac{(s'_j-s'_{j-1})}{fq_{d+1-j}}\right)\right).\label{eqn:express-zbys} 
\end{eqnarray}

By replacing the values in \eqref{eqn:express-zbys} to the objective function in \eqref{eqn:simplified-obj-z}, we can \textit{upper bound} the reward of \textit{any} online algorithm on Example~\ref{ex:kvv-general} 
(for $m$, and $N$) 
as defined by the following optimization problem: 

\begin{optimization}[Reward of Example \ref{ex:kvv-general}]\label{def:negative_opt}
Consider an \adx distribution with parameters $((r_i,q_i)_{i \in [d]}$. Find thresholds $\textbf{s}'$ that maximize:
\begin{equation*}
    \ub_{m,N}(\textrm{s}')\equiv-cN+\sum_{u=1}^{d}fN(q_u-q_{u-1})r_u+fN\sum_{u=1}^{d}\left(1-\exp\left(-\sum_{j=1}^{d+1-u}\frac{(s'_j-s'_{j-1})}{fq_{d+1-j}}\right)\right)(q_u-q_{u-1})(c-r_u).
\end{equation*}
\end{optimization}

Our goal is to relate such an optimization problem over variables $s'$ with Optimization Problem~\ref{def:lb}, which we used in Section \ref{sec:general_alg} to get a lower bound of the objective of Algorithm~\ref{alg:general}. To show that Optimization Problem~\ref{def:lb} and Optimization Problem~\ref{def:negative_opt} have the same optimal objective when $m$ and $N$ are sufficiently large, it suffices to show the following claim.

\begin{claim}\label{clm:lbub}
    For any $\textrm{s}$ such that $0\leq s_1\leq \cdots s_{d}=1$ and large enough $m,N$, 
    \begin{equation*}
        \lb_{m,N}(s)=\ub_{m,N}(s).
    \end{equation*}
\end{claim}

It follows from Claim~\ref{clm:lbub}, that $\max_{s}\lb_{m,N}(s)=\max_{s}\ub_{m,N}(s)$ for sufficiently large $m$ and $N$. Thus the reward of \textit{any} algorithm on the instance of Example \ref{ex:kvv-general}, captured by the function $\ub$ matches the reward of Algorithm \ref{alg:general}, captured by the function $\lb$, concluding the proof of Theorem~\ref{thm:general-hardness}.
\end{proof}

\bibliographystyle{plainnat}
\bibliography{reference}

\begin{thebibliography}{20}
\providecommand{\natexlab}[1]{#1}
\providecommand{\url}[1]{\texttt{#1}}
\expandafter\ifx\csname urlstyle\endcsname\relax
  \providecommand{\doi}[1]{doi: #1}\else
  \providecommand{\doi}{doi: \begingroup \urlstyle{rm}\Url}\fi

\bibitem[Aggarwal et~al.(2011)Aggarwal, Goel, Karande, and Mehta]{AGKM11}
Gagan Aggarwal, Gagan Goel, Chinmay Karande, and Aranyak Mehta.
\newblock Online vertex-weighted bipartite matching and single-bid budgeted
  allocations.
\newblock In Dana Randall, editor, \emph{Proceedings of the Twenty-Second
  Annual {ACM-SIAM} Symposium on Discrete Algorithms, {SODA} 2011, San
  Francisco, California, USA, January 23-25, 2011}, pages 1253--1264. {SIAM},
  2011.

\bibitem[Agrawal and Devanur(2015)]{AD15}
Shipra Agrawal and Nikhil~R. Devanur.
\newblock Fast algorithms for online stochastic convex programming.
\newblock In Piotr Indyk, editor, \emph{Proceedings of the Twenty-Sixth Annual
  {ACM-SIAM} Symposium on Discrete Algorithms, {SODA} 2015, San Diego, CA, USA,
  January 4-6, 2015}, pages 1405--1424. {SIAM}, 2015.

\bibitem[Agrawal et~al.(2014)Agrawal, Wang, and Ye]{AWY14}
Shipra Agrawal, Zizhuo Wang, and Yinyu Ye.
\newblock A dynamic near-optimal algorithm for online linear programming.
\newblock \emph{Oper. Res.}, 62\penalty0 (4):\penalty0 876--890, 2014.

\bibitem[Balseiro et~al.(2014)Balseiro, Feldman, Mirrokni, and
  Muthukrishnan]{BFMM14}
Santiago~R. Balseiro, Jon Feldman, Vahab~S. Mirrokni, and S.~Muthukrishnan.
\newblock Yield optimization of display advertising with ad exchange.
\newblock \emph{Manag. Sci.}, 60\penalty0 (12):\penalty0 2886--2907, 2014.

\bibitem[Balseiro et~al.(2020)Balseiro, Lu, and Mirrokni]{BLM20}
Santiago~R. Balseiro, Haihao Lu, and Vahab~S. Mirrokni.
\newblock Dual mirror descent for online allocation problems.
\newblock In \emph{Proceedings of the 37th International Conference on Machine
  Learning, {ICML} 2020, 13-18 July 2020, Virtual Event}, volume 119 of
  \emph{Proceedings of Machine Learning Research}, pages 613--628. {PMLR},
  2020.

\bibitem[Choi et~al.(2020)Choi, Mela, Balseiro, and Leary]{CMBL20}
Hana Choi, Carl~F. Mela, Santiago~R. Balseiro, and Adam Leary.
\newblock Online display advertising markets: {A} literature review and future
  directions.
\newblock \emph{Inf. Syst. Res.}, 31\penalty0 (2):\penalty0 556--575, 2020.

\bibitem[Devanur and Hayes(2009)]{DH09}
Nikhil~R. Devanur and Thomas~P. Hayes.
\newblock The adwords problem: online keyword matching with budgeted bidders
  under random permutations.
\newblock In John Chuang, Lance Fortnow, and Pearl Pu, editors,
  \emph{Proceedings 10th {ACM} Conference on Electronic Commerce (EC-2009),
  Stanford, California, USA, July 6--10, 2009}, pages 71--78. {ACM}, 2009.

\bibitem[Devanur and Jain(2012)]{DJ12}
Nikhil~R Devanur and Kamal Jain.
\newblock Online matching with concave returns.
\newblock In \emph{Proceedings of the forty-fourth annual ACM symposium on
  Theory of computing}, pages 137--144, 2012.

\bibitem[Devanur et~al.(2013)Devanur, Jain, and Kleinberg]{DJK13}
Nikhil~R. Devanur, Kamal Jain, and Robert~D. Kleinberg.
\newblock Randomized primal-dual analysis of {RANKING} for online bipartite
  matching.
\newblock In Sanjeev Khanna, editor, \emph{Proceedings of the Twenty-Fourth
  Annual {ACM-SIAM} Symposium on Discrete Algorithms, {SODA} 2013, New Orleans,
  Louisiana, USA, January 6-8, 2013}, pages 101--107. {SIAM}, 2013.

\bibitem[Devanur et~al.(2019)Devanur, Jain, Sivan, and Wilkens]{DJSW19}
Nikhil~R. Devanur, Kamal Jain, Balasubramanian Sivan, and Christopher~A.
  Wilkens.
\newblock Near optimal online algorithms and fast approximation algorithms for
  resource allocation problems.
\newblock \emph{J. {ACM}}, 66\penalty0 (1):\penalty0 7:1--7:41, 2019.

\bibitem[Dvor{\'a}k and Henzinger(2014)]{DH14}
Wolfgang Dvor{\'a}k and Monika Henzinger.
\newblock Online ad assignment with an ad exchange.
\newblock In \emph{International Workshop on Approximation and Online
  Algorithms}, pages 156--167. Springer, 2014.

\bibitem[Esfandiari et~al.(2018)Esfandiari, Korula, and Mirrokni]{EKM2018}
Hossein Esfandiari, Nitish Korula, and Vahab Mirrokni.
\newblock Allocation with traffic spikes: Mixing adversarial and stochastic
  models.
\newblock \emph{ACM Transactions on Economics and Computation (TEAC)},
  6\penalty0 (3-4):\penalty0 1--23, 2018.

\bibitem[Feldman et~al.(2009)Feldman, Korula, Mirrokni, Muthukrishnan, and
  P{\'a}l]{FKMM09}
Jon Feldman, Nitish Korula, Vahab Mirrokni, Shanmugavelayutham Muthukrishnan,
  and Martin P{\'a}l.
\newblock Online ad assignment with free disposal.
\newblock In \emph{International workshop on internet and network economics},
  pages 374--385. Springer, 2009.

\bibitem[Feldman et~al.(2010)Feldman, Henzinger, Korula, Mirrokni, and
  Stein]{FHKMS10}
Jon Feldman, Monika Henzinger, Nitish Korula, Vahab~S Mirrokni, and Cliff
  Stein.
\newblock Online stochastic packing applied to display ad allocation.
\newblock In \emph{European Symposium on Algorithms}, pages 182--194. Springer,
  2010.

\bibitem[Huang et~al.(2020)Huang, Zhang, and Zhang]{HZZ20}
Zhiyi Huang, Qiankun Zhang, and Yuhao Zhang.
\newblock Adwords in a panorama.
\newblock In \emph{61st {IEEE} Annual Symposium on Foundations of Computer
  Science, {FOCS} 2020, Durham, NC, USA, November 16-19, 2020}, pages
  1416--1426. {IEEE}, 2020.

\bibitem[Kalyanasundaram and Pruhs(2000)]{KP00}
Bala Kalyanasundaram and Kirk Pruhs.
\newblock An optimal deterministic algorithm for online b-matching.
\newblock \emph{Theor. Comput. Sci.}, 233\penalty0 (1-2):\penalty0 319--325,
  2000.

\bibitem[Karp et~al.(1990)Karp, Vazirani, and Vazirani]{KVV90}
Richard~M Karp, Umesh~V Vazirani, and Vijay~V Vazirani.
\newblock An optimal algorithm for on-line bipartite matching.
\newblock In \emph{Proceedings of the twenty-second annual ACM symposium on
  Theory of computing}, pages 352--358, 1990.

\bibitem[Mehta(2013)]{Mehta13}
Aranyak Mehta.
\newblock Online matching and ad allocation.
\newblock \emph{Found. Trends Theor. Comput. Sci.}, 8\penalty0 (4):\penalty0
  265--368, 2013.

\bibitem[Mehta et~al.(2007)Mehta, Saberi, Vazirani, and Vazirani]{MSVV07}
Aranyak Mehta, Amin Saberi, Umesh Vazirani, and Vijay Vazirani.
\newblock Adwords and generalized online matching.
\newblock \emph{Journal of the ACM (JACM)}, 54\penalty0 (5):\penalty0 22--es,
  2007.

\bibitem[Mirrokni et~al.(2012)Mirrokni, Gharan, and Zadimoghaddam]{MGZ12}
Vahab~S. Mirrokni, Shayan~Oveis Gharan, and Morteza Zadimoghaddam.
\newblock Simultaneous approximations for adversarial and stochastic online
  budgeted allocation.
\newblock In Yuval Rabani, editor, \emph{Proceedings of the Twenty-Third Annual
  {ACM-SIAM} Symposium on Discrete Algorithms, {SODA} 2012, Kyoto, Japan,
  January 17-19, 2012}, pages 1690--1701. {SIAM}, 2012.

\end{thebibliography}

\appendix

\section{Online Vertex-weighted Matching with Surplus Supply} \label{app:uniform}

When there is no \adx node present, the online allocation problem degenerates to a classic online matching problem. Thus it is meaningful to ask the following fundamental question: when there is additional supply for an online matching problem, what is the optimal online algorithm? For online unweighted matching problem \cite{KVV90}, when there is no additional supply, the optimal online algorithm achieves a competitive ratio $1-\frac{1}{e}$. Such a result was extended to the vertex-weighted setting \cite{AGKM11} with the same optimal approximation ratio. In this section, we show that the algorithm in \cite{AGKM11} for the vertex-weighted problem can be extended to the setting with supply factor $f>1$, and shows that the competitive ratio improves to $f-fe^{-1/f}$.

\subsection{Online Algorithm}

We study the following vertex-weighted matching problem. There are $m$ advertisers and many online queries. Each advertiser $a$ demands $n_a$ queries, and has weight $c_a$ for any allocated query. Each online query can be allocated to some advertisers, such that there exists an offline allocation where each advertiser $a$ is matched to $fn_a$ impressions for some integer $f$. Unlike the main allocation problem with the presence of $\adx$ we study, we do not need to assume the demand of each advertiser is sufficiently large. The algorithm is similar to an algorithm \cite{AGKM11}, and the potential function used in \cite{AGKM11} needs to be modified to take into account the supply factor. The algorithm below is designed specifically for the problem where each advertiser $a$ demands $n_a=1$ impression. However, it can be generalized to arbitrary demand by reducing the problem with $n_a>1$ to the setting with $n_a=1$ for each advertiser, through splitting each advertiser $a$ to $n_a$ advertisers with demand 1, and the same set of demanding queries. Thus without loss of generality, we assume $n_a=1$ for each advertiser. Also, we note that in the special case of online unweighted matching, this algorithm is equivalent to the well-known \textsc{Ranking} algorithm of \cite{KVV90}, where the knowledge of $f$ is not needed.

\begin{theorem}
Given an online vertex-weighted matching problem with an integer supply factor $f$ and arbitrary demands, there exists a randomized online algorithm with competitive ratio $f-fe^{-1/f}$.
\end{theorem}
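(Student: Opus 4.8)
The plan is to adapt the primal-dual / potential-function analysis of Aggarwal et al.~\cite{AGKM11} for vertex-weighted online matching, incorporating the supply factor $f$ into the potential function. First I would reduce to the case $n_a=1$ for every advertiser by splitting an advertiser of demand $n_a$ into $n_a$ unit-demand copies sharing the same neighborhood (as the theorem statement already suggests); this is lossless for the competitive ratio. So assume each advertiser $a$ has unit demand, weight $c_a$, and there is an offline solution matching each $a$ to exactly $f$ queries. The algorithm draws, for each advertiser $a$, an independent uniform seed $Y_a \in [0,1]$, and upon arrival of a query assigns it to the available neighbor $a$ maximizing $c_a\cdot g(Y_a)$ for an appropriate nondecreasing ``discount'' function $g:[0,1]\to[0,1]$ — the function $g$ is exactly what must be tuned to exploit $f$, and I expect the right choice to be (a scaled version of) $g(y)=1-e^{(y-1)/f}$ or its inverse-flavored analogue, mirroring the $f-fe^{-1/f}$ target.

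The key steps, in order: (1) Set up a charging/potential argument — for each query that the algorithm matches to advertiser $a$ with seed $Y_a$, split the gained weight $c_a$ into a ``primal'' part and a contribution toward a potential $\Phi$ that accumulates $c_a\cdot h(Y_a)$ for some function $h$ tied to $g$. (2) Prove the crucial per-step inequality: whenever a query $v$ arrives that is matched offline to some advertiser $a^\*$, either $a^\*$ is already matched by the algorithm (so its potential has already been ``banked''), or the algorithm matches $v$ to some $a$ with $c_a g(Y_a)\ge c_{a^\*}g(Y_{a^\*})$; averaging over the seed $Y_{a^\*}$ gives a lower bound of the form $\text{gain} + \Delta\Phi \ge \gamma\, c_{a^\*}$ in expectation, where $\gamma$ is forced to be $f-fe^{-1/f}$ by the differential equation that $g$ must satisfy. (3) Here is where the supply factor enters quantitatively: each advertiser $a$ is the offline partner of $f$ distinct queries, so the banked potential of $a$ gets ``spent'' up to $f$ times, which is exactly why one can afford a larger $\gamma$ than in the $f=1$ case — the factor-$f$ reuse of each offline vertex's weight is the source of the improvement. (4) Sum the per-step inequalities over all $fN$ arriving queries, note the potential telescopes/ is nonnegative, and compare against $\text{OPT} = \sum_a c_a \cdot (\text{number matched offline}) $; dividing out yields $\E[\text{ALG}] \ge (f-fe^{-1/f})\,\text{OPT}$. (5) Finally verify that in the unweighted special case the rule ``maximize $g(Y_a)$'' degenerates to ranking by $Y_a$, recovering \textsc{Ranking} and making $f$ unnecessary at serving time.

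The main obstacle I anticipate is step (2)–(3): getting the averaging over the seed of the offline partner to interact correctly with the fact that each offline vertex is used $f$ times, and pinning down the functional equation for $g$ (and hence the constant $\gamma$) so that it is simultaneously (i) consistent across all the $f$ reuses of a fixed advertiser, (ii) tight, and (iii) reduces to the classical $g(y)=e^{y-1}$, $\gamma=1-1/e$ when $f=1$. Concretely, one expects a condition like $g'(y) = \frac{1}{f}\bigl(1 - g(y)\bigr)$ with boundary condition making $\int_0^1 g = \gamma$, whose solution gives $\gamma = f - f e^{-1/f}$; checking that this $g$ actually makes every per-step inequality hold — rather than just on average — is the delicate part, and is precisely where the analysis of~\cite{AGKM11} must be re-derived rather than cited verbatim. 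A secondary, more bookkeeping-level obstacle is handling the reduction from general $n_a$ back up cleanly, since the split copies of one advertiser get independent seeds, and one must confirm this does not hurt the bound (it does not, since the argument is already per-offline-vertex).
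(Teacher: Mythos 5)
Your plan is essentially the paper's own proof: the paper likewise reduces to unit demands and runs the Perturbed-Greedy algorithm of Aggarwal et al.\ with the rescaled discount $\psi(x)=1-e^{-(1-x)/f}$ (your $g$, up to orientation of the seed), and the gain comes exactly from your step (3) — each advertiser has $f$ offline partner queries, so its loss can be charged $f$ times (the charging map has $fk$ entries instead of $k$), which forces the rescaled exponential and the ratio $f-fe^{-1/f}$. The only difference is presentational: the paper carries this out by redoing the original combinatorial charging-map argument of Aggarwal et al.\ rather than the Devanur--Jain--Kleinberg-style primal-dual averaging over the partner's seed that you sketch, but the key lemma, the function, and the constant are the same.
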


\begin{proof}[Proof sketch.]

The proof is almost identical to that used in \cite{AGKM11} with supply factor $f$, so we omit most of the details and only describe the differences. The algorithm is a perturbed version of the \textsc{Ranking} algorithm in \cite{KVV90} and the \textsc{Perturbed-Greedy} algorithm in \cite{AGKM11}.

\begin{algorithm}[htbp]
\caption{Optimal algorithm for vertex-weighted bipartite matching with additional supply}
\label{alg:vertex-weighted}
\textbf{Input:}{ Weight $c_a$ for each advertiser $a$, and supply factor $f$.}\\
\textbf{Preprocessing:} For each advertiser $a$, select $x_a\in[0,1]$ uniformly at random.\\
Define the function $\psi(x)=1-e^{-(1-x)/f}$.\\
 \For{each query arriving online}{
    Match the query to a matching advertiser $a$ with largest $c_a\psi(x_a)$. Break ties by advertiser id.
    }
\end{algorithm}

The only difference between Algorithm \ref{alg:vertex-weighted} and that of \cite{AGKM11} is that $\psi(x)=1-e^{-(1-x)/f}$ instead of $1-e^{-(1-x)}$. We now show how to modify their proofs to this additional-supply setting.

The choose of $x_a$ is equivalent to select a random integer $\sigma(a)\in[k]$ for each advertiser, with $k\to\infty$. The potential function is discretized to $\psi(i)=1-(1-\frac{1}{fk})^{-(k-i+1)}$ for each $i\in[k]$, and the algorithm chooses advertiser $a$ with highest $c_a\psi(i)$. 

We rewrite some of the definitions from \cite{AGKM11}.

\begin{definition}[Definition 7 in \cite{AGKM11}]\label{def7}
We say an advertiser $a$ is at position $t$, if $\sigma(a)=t$.

Let $Q_t$ be the set of all occurrences of matched vertices in the probability space:
\[Q_t=\{(\sigma,t,a):\sigma(a)=t\textrm{ and the vertex $a$ at position } t\textrm{ is matched in }\sigma\}.\]

Let $R_t$ be the set of all occurrences of unmatched vertices in the probability space:
\[R_t=\{(\sigma,t,a):\sigma(a)=t\textrm{ and the vertex $a$ at position } t\textrm{ is unmatched in }\sigma\}.\]

Let $x_t$ be the expected gain at $t$, over the random choice of $\sigma$. Then
\[x_t=\frac{\sum_{(\sigma,t,a)\in Q_t}c_a}{k^m}.\]

\end{definition}

The expected gain of the algorithm is $\alg_{\sigma}=\sum_t x_t$. The optimal gain at any position $t$ is $B=\frac{\opt}{k}=\frac{1}{k}\sum_{a}c_a$ since each advertiser $a$ is matched in the offline optimal allocation and appears at position $t$ with probability $\frac{1}{k}$. Then
\[B-x_t=\frac{\sum_{(\sigma,t,a)\in R_t}c_a}{k^m}.\]
\begin{definition}[Definition 8 in \cite{AGKM11}]\label{def8}
For any $\sigma$, let $\sigma_a^{i}\in[k]^m$ be obtained from changing the position of $a$ to $i$, i.e. $\sigma_a^i(a)=i$ and $\sigma_a^i(a')=\sigma(a')$ for $a'\neq a$.

\end{definition}

\begin{definition}[Definition 9 in \cite{AGKM11}]\label{def9}
For every $(\sigma,t,a)\in R_t$, define the set-valued charging map
\begin{eqnarray*}
    f_{map}(\sigma,t,a)&=&\{(\sigma_a^i,s,a'):1\leq i\leq k,\textrm{ the algorithm matches one of the }f\textrm{ impressions} \\
    &&\textrm{that gets allocated to advertiser }a\textrm{ in the offline matching with additional supply}\\
    &&\textrm{to }a'\textrm{ in }\sigma^i_a\textrm{ where }\sigma^i_a(a')=s\}.
\end{eqnarray*}

\end{definition}

\begin{observation}[Observation 2 in \cite{AGKM11}]\label{obs2}
For any $(\rho,s,a')\in f_{map}(\sigma,t,a)$, $(\rho,s,a')\in Q_s$.
\end{observation}

\begin{lemma}[Lemma 5 in \cite{AGKM11}]\label{lem5}
If the advertiser $a$ at position $t$ in $\sigma$ is unmatched by our algorithm, then for every $1 \leq i \leq k$,
the algorithm matches any impression which gets allocated to $a$ in the offline matching with additional supply to an advertiser $a'$ in $\sigma_a^i$ such that $\psi(t)c_a\leq\psi(\sigma_a^i(a'))c_{a'}$.
\end{lemma}

\begin{observation}[Observation 3 in \cite{AGKM11} with supply factor $f$]\label{obs3}
For any $(\sigma,t,a)\in R_t$, $f_{map}(\sigma,t,a)$ contains $fk$ values.
\end{observation}

\begin{definition}[Definition 10 in \cite{AGKM11}]\label{def10}
Let $S_t=\{(\sigma,t,a)\in R_t:(\sigma_a^{t-1},t-1,a)\not\in R_{t-1}\}$.
\end{definition}

\begin{claim}[Definition 11 and Claim 2 in \cite{AGKM11}]\label{cla2}
Let $\alpha_t=\frac{\sum_{(\sigma,t,a)\in S_t}c_a}{k^m}$. Then
\[x_t=B-\sum_{s\leq t}\alpha_s,\]
\[\textrm{Total loss}=\sum_{t}(B-x_t)=\sum_{t}(k-t+1)\alpha_t.\]
\end{claim}

\begin{claim}[Claim 3 in \cite{AGKM11}]\label{cla3}
For any $(\sigma,t,a)\in S_t$ and $(\rho,s,a')\in S_{s}$, if $(\sigma,t,a)$ is not identical to $(\rho,s,a')$, then $f_{map}(\sigma,t,a)$ and $f_{map}(\rho,s,a')$ are disjoint.
\end{claim}

Now we are ready to prove the main theorem.

\begin{theorem}[Theorem 6 in \cite{AGKM11} with supply factor $f$]\label{thm6}
As $k\to\infty$,
\[\textrm{total gain}=\sum_{t}x_t\geq(f-fe^{-1/f})\opt=(f-fe^{-1/f})\sum_ac_a.\]

\end{theorem}

\begin{proof}[Proof of Theorem~\ref{thm6} following the proof of Theorem 6 in \cite{AGKM11}]
Using Lemma~\ref{lem5} and Observation~\ref{obs3},
\[\psi(t)c_a\leq\frac{1}{fk}\sum_{(\sigma_a^i,s,a')\in f_{map}(\sigma,t,a)}\psi(s)c_{a'}.\]
Add the above equation for all $(\sigma,t,a)\in S_t$ for all $1\leq t\leq k$, then using Claim~\ref{cla3} and Observation~\ref{obs2} we have
\begin{eqnarray*}
\sum_{t}\psi(t)\frac{\sum_{(\sigma,t,a)\in S_t}c_a}{k^m}&\leq&\frac{1}{fk}\sum_{t}\psi(t)\frac{\sum_{(\sigma,t,a)\in Q_t}c_a}{k^m}\\
\sum_{t}\psi(t)\alpha_t&\leq&\frac{1}{fk}\sum_{t}\psi(t)x_t\\
&=&\frac{1}{fk}\sum_{t}\psi(t)\left(B-\sum_{s\leq t}\alpha_s\right).
\end{eqnarray*}
Here the second line is by Claim~\ref{cla2} and Definition~\ref{def7}. The third line is by Claim~\ref{cla2}. By rearranging the above inequality we have
\begin{equation*}
    \sum_t\alpha_t\left(\psi(t)+\frac{\sum_{s\geq t}\psi(s)}{fk}\right)\leq\frac{B}{fk}\sum_t\psi(t).
\end{equation*}
For $\psi(t)=1-(1-\frac{1}{fk})^{k-t+1}$, observe that $\psi(t)+\frac{\sum_{s\geq t}\psi(s)}{fk}\geq\frac{k-t+1}{fk}$, and $\sum_t\psi(t)=k(1-f+fe^{-1/f})$ when $k\to\infty$. Using Claim~\ref{cla2} we have
\begin{eqnarray*}
\textrm{total loss}&=&\sum_t(B-x_t)=\sum_t(k-t+1)\alpha_t\\
&\leq&fk\sum_t\alpha_t\left(\psi(t)+\frac{\sum_{s\geq t}\psi(s)}{fk}\right)\\
&\leq&B\sum_t\psi(t)=kB(1-f+fe^{-1/f})=(1-f+fe^{-1/f})\opt.
\end{eqnarray*}
Thus the total gain of the algorithm is at least $(f-fe^{-1/f})\opt$.
\end{proof}

\end{proof}

\subsection{Tightness (upper bound).}
Next, we prove that the online algorithm described in the previous section is tight by arguing that no randomized online algorithm can get a better competitive ratio.

First, we recall the instance that we repeatedly used for tightness results throughout the paper:
\kvv*

We use this instance to show:
\begin{theorem}
There exists an instance of the unweighted matching problem with supply factor $\f$, for which no online algorithm can obtain a competitive ratio better than $\f-\f e^{-1/\f}$.
\end{theorem}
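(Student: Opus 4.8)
The plan is to run the standard Karp--Vazirani--Vazirani upper-bound argument, adapted to the surplus-supply setting. First I would specialize Example~\ref{ex:kvv-general} to the unweighted matching case, which amounts to taking the degenerate \adx distribution with $r_d = 0$ (no \adx node, or equivalently every \adx reward is $0$), so that the penalty $c$ is just a per-advertiser weight of $1$ and ``reward'' is exactly the number of matched advertisers. With $m$ advertisers, each demanding $n$ impressions, and $fmn$ queries arriving in $m$ groups $G_1, \dots, G_m$ where group $G_i$ is incident to the advertisers $j$ with $\pi(j) \ge i$ for a uniformly random permutation $\pi$, an offline solution matches each advertiser to $fn$ queries, so the supply factor is indeed $f$ and $\opt = mn$ (every advertiser is saturated). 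Actually, for the matching competitive ratio it is cleanest to count matched demand: $\opt$ saturates all $mn$ units of demand.

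Next I would invoke Yao's minimax principle to reduce to bounding the expected performance of an arbitrary deterministic online algorithm against this random instance. Fixing a deterministic algorithm, I would let $q_{ij}$ be the fraction of queries in $G_i$ allocated to advertiser $\pi^{-1}(j)$ and observe, exactly as in the proof of Theorem~\ref{thm:general-hardness}, that $\E_\pi[q_{ij}] \le \frac{1}{m-i+1}$ for $j \ge i$ and $\E_\pi[q_{ij}] = 0$ for $j < i$, with $\E_\pi[q_{ij}] = \E_\pi[q_{im}]$ for all $j \ge i$ by symmetry of the random permutation. The total expected amount of demand matched is then $\sum_{i=1}^m (m-i+1)\,\frac{fmn}{m}\,\E_\pi[q_{im}]$, subject to the capacity constraint $\sum_{i=1}^m \frac{fmn}{m}\,\E_\pi[q_{im}] \le n$ (advertiser $\pi^{-1}(m)$ cannot be overfilled), and $0 \le \E_\pi[q_{im}] \le \frac{1}{m-i+1}$. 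This is a fractional knapsack LP; by the same threshold/exchange argument as in Lemma~\ref{lem:negativelp_thresholds_binary}, the optimum sets $y_i := \E_\pi[q_{im}] = \frac{1}{m-i+1}$ for $i \le z$ and $0$ for $i > z$, where $z$ is the largest index for which the capacity constraint is not yet violated, i.e. $f\sum_{i=1}^z \frac{1}{m-i+1} \approx f\ln\frac{m}{m-z} = 1$, giving $z = m(1 - e^{-1/f})$.

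Plugging back in, the expected matched demand is $\frac{fmn}{m}\sum_{i=1}^z (m-i+1)\cdot\frac{1}{m-i+1} = fn\,z = fn\cdot m(1-e^{-1/f}) = (f - fe^{-1/f})\,mn$, so the expected reward is at most $(f - fe^{-1/f})\,\opt$ up to $O(1/m)$ additive slack coming from replacing the harmonic sum by the logarithm and the discretization of $z$ to an integer. Hence no randomized algorithm beats $f - fe^{-1/f}$ on this instance, which is exactly the ratio achieved by Algorithm~\ref{alg:vertex-weighted}, establishing tightness. The only mildly delicate points are: verifying that the instance really has supply factor exactly $f$ (one needs $\pi$ to yield a perfect fractional $f$-fold matching, which is immediate since group $G_i$ sees $m - i + 1$ advertisers and $fmn/m$ queries, and a greedy round-robin offline assignment saturates everyone), and making the $O(1/m)$ error bookkeeping explicit. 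I expect the main obstacle to be nothing conceptually new — it is essentially a re-derivation of the KVV $1 - 1/e$ bound with the extra factor $f$ threaded through the capacity constraint — so the ``hard part'' is just being careful that the harmonic-to-log approximation and the integrality of $z$ contribute only vanishing error, and that the LP relaxation of the adversary's best response is genuinely tight (which follows from the exchange argument already used twice in the paper).
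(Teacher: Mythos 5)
Your argument is correct and uses the same hard instance (Example~\ref{ex:kvv-general}) and the same Yao reduction as the paper, but the accounting is done differently. The paper's Appendix proof bounds, for each advertiser $j$, the expected allocation directly by $\min\bigl(n,\sum_{i\leq\pi(j)}\frac{fn}{m-i+1}\bigr)$, sums over $j$, and evaluates the resulting harmonic sums via Stirling's formula to get $nm\bigl(f-fe^{-1/f}+O(1/m)\bigr)$ against $\opt=mn$. You instead specialize the LP machinery of the general-distribution hardness proof (the $q_{ij}$ variables, the symmetry $\E_\pi[q_{ij}]=\E_\pi[q_{im}]$ for $j\geq i$, the single capacity constraint on advertiser $\pi^{-1}(m)$, and the threshold/exchange characterization of the LP optimum), which yields the same value $fnm(1-e^{-1/f})$ without any Stirling computation; this is a perfectly valid relaxation since the LP only needs to upper bound the algorithm, and it has the virtue of unifying the matching bound with the paper's Theorem~\ref{thm:general-hardness} argument, at the cost of re-deriving the threshold-structure lemma rather than a two-line per-advertiser bound. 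One small nit: your parenthetical justification that the instance has supply factor $f$ via a ``greedy round-robin'' offline assignment is not quite right --- advertiser $\pi^{-1}(1)$ is available only to $G_1$, so the offline $f$-fold complete matching is forced to send all $fn$ queries of $G_i$ to advertiser $\pi^{-1}(i)$ --- but this forced assignment does saturate everyone $f$ times, so the conclusion stands.
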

\begin{proof}
By Yao's min-max principle, to show that no randomized online algorithm can obtain a competitive ratio better than $f-fe^{-1/f}$ for adversarial queries, we only need to prove that no deterministic online algorithm can obtain a competitive ratio better than $f-fe^{-1/f}$ for stochastic queries. Consider the instance in Example \ref{ex:kvv-general}.

For any advertiser $j$ such that $\pi(j)\geq i$, $\mathbb{E}[\# \textrm{impressions allocated to j in phase i}] \leq \frac{fn}{m-i+1}$, since by definition of the instance there are at most $fn$ impressions allocated to each of the advertisers with $\pi(j)\geq i$, and $\pi$ is a random permutation. Thus the expected number of impressions allocated to any advertiser $j$ is $\min(n,\sum_{i\leq \pi(j)}\frac{fn}{m-i+1})$. The expected reward of any deterministic algorithm when is bounded by,
\begin{eqnarray*}
\sum_{j\in[m]}\min(n,\sum_{i\leq \pi(j)}\frac{fn}{m-i+1})
&=&\sum_{j\in[m]}\min(n,\sum_{i\leq j}\frac{fn}{m-i+1})\\
&\leq&\sum_{j=1}^{m(1-e^{-1/f})}\sum_{i\leq j}\frac{fn}{m-i+1}+\sum_{j>m(1-e^{-1/f})}n\\
&\leq&\sum_{j=1}^{m(1-e^{-1/f})}fn\ln\left(\frac{m}{m-j}\right)+mne^{-1/f}\\
&=&fn\ln\left(\frac{m^{m(1-e^{-1/f})}}{m!/(me^{-1/f})!}\right)+mne^{-1/f}\\
&\leq&fn\Bigg(m(1-e^{-1/f})\ln m-(m\ln m-m+\frac{1}{2}\ln m)\\
& &+(1+me^{-1/f}\ln(me^{-1/f})-me^{-1/f}+\frac{1}{2}\ln (me^{-1/f}))\Bigg)+mne^{-1/f}\\
&=&fnm\left(1-e^{-1/f}+\frac{1+1/2f}{m}\right)=nm\left(f-fe^{-1/f}+O\left(\frac{1}{m}\right)\right).
\end{eqnarray*}

Here the third line is by $\sum_{i=j+1}^{m}\frac{1}{i}\leq \ln(\frac{m}{j})$. The inequality in the fifth line follows by Stirling's formula that states $\ln k!-(k\ln k-k+\frac{1}{2}\ln k)\in[0,1]$ for any positive integer $k$. Since in the optimal offline allocation, each advertiser can get allocated $n$ impressions, thus the offline optimal value is $n m$. Thus the competitive ratio of any online algorithm is at most $f-fe^{-1/f}$ for large $m$.
\end{proof}

\section{Deferred Proof from Section \ref{sec:DP}}\label{app:DP}
In this appendix section, we briefly describe a dynamic programming approach for computing the thresholds efficiently.
\begin{proof}[Proof of Theorem~\ref{thm:DP}]
The problem can be solved via the following dynamic program. Let $g[i, x, y]$ denote the maximum of $\sum_{u \leq i} \sum_{j=s_{u-1}t}^{s_{u}t}  \beta^*_j (c- E_F[r \mid r \leq r_{d+1-u}])$, where $x$ stores the $\beta^*_{s_{i}t}$ value, when $y$ stores the value of $s_i$. 
Observe that the objective 
\begin{equation*}
    \max_{\mathbf{s}}\lb_{m,N}(s_1,...,s_d)=-cN+\sum_{u=1}^{d}fN(q_{u}-q_{u-1})r_u+\max_{0\leq x\leq 1}g[d,x,1].
\end{equation*}

Then it suffices to show that we can efficiently solve $\max_{0\leq x\leq 1}g[d,x,1]$ with small error. We can write down the following recurrence formula for $f$:
\begin{eqnarray}
    & &g[i,x,y]\nonumber\\
    &=&\max_{x=\beta'_{s_i},y=s_i}\sum_{u=1}^{i}\sum_{j=s_{u-1}t+1}^{s_ut}\beta^*_j(c-\E_{F}[r|r\leq r_{d+1-u}])\nonumber\\
    &=&\max_{x=\beta'_{s_i},y=s_i}\left(\sum_{u=1}^{i-1}\sum_{j=s_{u-1}t+1}^{s_ut}\beta^*_j(c-\E_{F}[r|r\leq r_{d+1-u}]+\sum_{j=s_{i-1}t+1}^{s_it}\beta^*_j(c-\E_{F}[r|r\leq r_{d+1-i}])\right)\nonumber\\
    &=&\max_{\substack{y'<y\\x'=x(1-\frac{1}{q_{d+1-i}})^{-(yt-y't)}}}\bigg(g[i-1,x',y']\nonumber\\
    & &+\sum_{j=y't+1}^{yt}x'\left(1-\frac{1}{q_{d+1-i}tf}\right)^{j-y't}(c-\E_F[r|r\leq r_{d+1-i}])\bigg)\label{eqn:recur-f}.
\end{eqnarray}
Here the first equation is the definition of $g$; the second equation is by separating the last term in the sum out; the last equation is by observing that the first term in the second equation can be expressed by $g$. However, notice that both $x$ and $y$ are defined to be real values in $[0,1]$, thus we need to discretize the space of $x$ and $y$ in order to solve the recurrence efficiently by dynamic program. 

To show that $y$ can be discretized to multiples of $\epsilon$, it suffices to show the following lemma.
\begin{lemma}\label{lem:discretize-y}
For any $0\leq s_1\leq\cdots\leq s_d=1$ and each $1\leq i\leq d$, let $\hat{s}_i$ be the largest multiple of $\epsilon$ that is no larger than $s_i$. Then $\lb_{m,N}(s_1,\cdots,s_d)\leq \lb_{m,N}(\hat{s}_1,\cdots,\hat{s}_d)+O(cN\epsilon)$. 
\end{lemma}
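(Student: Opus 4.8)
The plan is to bound the change in $\lb_{m,N}$ when each threshold $s_u$ is rounded down to the nearest multiple of $\epsilon$, by tracking how the value $\lb_{m,N}(s_1,\dots,s_d)$ depends on the thresholds term by term. Recall from Optimization Problem \ref{def:lb} that
\[
\lb_{m,N}(s_1,\dots,s_d) = -cN + \sum_{u=1}^d fN(q_u-q_{u-1})r_u + \sum_{u=1}^d \sum_{j=s_{u-1}t+1}^{s_u t} \beta^*_j\,(c-\E_\dist[r\mid r\le r_{d+1-u}]),
\]
where $\beta^*_j$ is the explicit geometric-type expression from \eqref{eqn:beta}. The first two terms are independent of the thresholds, so only the double sum matters. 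First I would observe that the double sum can be reorganized as a telescoping-type quantity: using the expression for $\beta^*_j$, each inner block $\sum_{j=s_{u-1}t+1}^{s_u t}\beta^*_j$ is a geometric sum with ratio $1-\frac{1/q_{d+1-u}}{tf}$, and as $t\to\infty$ each such block converges to $fN q_{d+1-u}\bigl(e^{-z_{u-1}/(fq_{d+1-u})\cdots}-e^{-z_u/\cdots}\bigr)$-style closed forms (this is exactly the computation done in Claim \ref{claim:threshold} and in the proof of Theorem \ref{thm:general-hardness}, cf.\ \eqref{eqn:express-zbys} and Optimization Problem \ref{def:negative_opt}). The upshot is that $\lb_{m,N}$ is a fixed, smooth function of $(s_1,\dots,s_d)$ — a sum of products of exponentials $\exp(-\sum_j (s_j-s_{j-1})/(fq_{d+1-j}))$ scaled by the bounded quantities $(q_u-q_{u-1})(c-r_u)\le cN$.

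Next I would argue the Lipschitz bound directly. Since $0 \le s_u \le 1$, each partial sum $\sum_{j=1}^{d+1-u}(s'_j-s'_{j-1})/(fq_{d+1-j})$ lies in a bounded range and the exponential is $1$-Lipschitz on that range (its derivative is bounded in absolute value by $1$), so $\lb_{m,N}$ is Lipschitz in each $s_u$ with a constant of order $cN \cdot (\text{something bounded in }f, q_u)$, uniformly in $m$ and $t$. Concretely, $\bigl|\partial \lb_{m,N}/\partial s_u\bigr| \le O(cN)$ for each $u$ (the $f$ and $q$ factors are absorbed, since $1/(fq_i)$ appears but is multiplied by the factor $fNq_i(c-r)$ from the block size). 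Rounding each $s_u$ down by at most $\epsilon$ therefore changes $\lb_{m,N}$ by at most $\sum_{u=1}^d O(cN\epsilon) = O(dcN\epsilon)$, and since $d$ is a fixed constant this is $O(cN\epsilon)$. Finally I would check that the monotonicity constraint $0\le \hat s_1 \le \cdots \le \hat s_d$ is preserved by rounding down (it is, since $x\mapsto \epsilon\lfloor x/\epsilon\rfloor$ is monotone), and that $\hat s_d = s_d = 1$ can be maintained (since $1$ is a multiple of $\epsilon$ when $1/\epsilon$ is an integer, which we assumed). This gives $\lb_{m,N}(s)\le \lb_{m,N}(\hat s) + O(cN\epsilon)$ as claimed.

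The main obstacle I anticipate is making the Lipschitz constant genuinely \emph{uniform} in $t$ (and $m$): the cleanest route is to pass to the $t\to\infty$ closed-form expressions (the $e^{-s_u/f}$-type formulas already derived in Claim \ref{claim:threshold}) up to the acknowledged $O(1/t)$ error, and then do the elementary calculus on those closed forms, rather than trying to differentiate the discrete geometric sums \eqref{eqn:beta} directly. One must also be slightly careful that a threshold $s_u$ and its rounded version $\hat s_u$ may land in different "buckets" relative to the discretization grid used by the $\beta^*$ recursion, but since we are only comparing objective \emph{values} and not the intermediate $\beta^*$ arrays, this does not cause a problem — the value is a continuous function of the continuous parameters $s_u$, and that is all we use.
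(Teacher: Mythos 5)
Your proposal is correct in substance but takes a genuinely different route from the paper. The paper works directly with the finite-$t$ product form of $\beta^*_j$ from \eqref{eqn:beta}: writing $\delta_u=1-\frac{1/q_{d+1-u}}{tf}$, it factors $\beta^*_j(s_1,\dots,s_d)$ as $\delta_u^j\prod_{v<u}(\delta_v/\delta_{v+1})^{s_v t}$ and shows that rounding each $s_v$ down by at most $\epsilon$ changes every $\beta^*_j$ by only a multiplicative $(\delta_1/\delta_u)^{\epsilon t}=1+O(\epsilon)$ factor, whence the objective moves by $O(cN\epsilon)$; no calculus is needed and the argument never leaves the discrete setting (its hidden constant depends on $1/(fq_1)$). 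You instead pass to the $t\to\infty$ closed form of $\lb_{m,N}$ (the exponential expressions, as in Claim~\ref{clm:lbub}) and argue a per-coordinate Lipschitz bound of $O(cN)$, then sum over the $d$ coordinates; this is cleaner, sidesteps the bucket-boundary bookkeeping that the paper's per-$j$ argument quietly absorbs, and makes the $d$-dependence explicit, at the cost of an extra $O(1/t)$ bookkeeping step for replacing the discrete sums by their limits (which is consistent with the paper's conventions elsewhere). The one place you should tighten is the assertion that the $1/(fq_i)$ factors are ``absorbed'': this is true, but it requires the actual computation --- differentiating the $u$-th summand $fN(q_u-q_{u-1})(c-r_u)\bigl(1-\exp\bigl(-\sum_{j\le d+1-u}(s_j-s_{j-1})/(fq_{d+1-j})\bigr)\bigr)$ with respect to $s_v$, noting that $s_v$ enters with coefficient $\frac{1}{fq_{d+1-v}}$ (and, for $u\le d-v$, also $-\frac{1}{fq_{d-v}}$), and then summing $(q_u-q_{u-1})$ over the relevant $u$ so that the cumulative $q$'s telescope against the $1/q$ factors, using monotonicity of $q_1\le\cdots\le q_d$. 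Carrying this out gives $\bigl|\partial\,\lb_{m,N}/\partial s_v\bigr|\le 2cN$, which is exactly what your argument needs; without it the central claim is only asserted.
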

\begin{proof}

It suffices to show that for any $1\leq j\leq t$, $\beta_j'$ does not change too much when $s_1,\cdots,s_d$ are rounded to $s_1',\cdots,s_d'$.
Let $\delta_{u}=1-\frac{1/q_{d+1-u}}{tf}$, for each $1\leq u\leq d$. Then $\delta_1\geq\delta_2\geq\cdots\geq\delta_d$, and for any $s_{u-1}t+1<j\leq s_{u}t+1$,
\begin{eqnarray*}
\beta^*_j(s_1,\cdots,s_d)&=&\delta_1^{s_1 t-s_0 t}\cdots\delta_{u-1}^{s_{u-1}t-s_{u-2}t}\delta_u^{j-s_{u-1}t-1}\\
&=&\delta_u^j\left(\frac{\delta_{u-1}}{\delta_u}\right)^{s_{u-1}t}\cdot\left(\frac{\delta_{u-2}}{\delta_{u-1}}\right)^{s_{u-2}t}\cdot\cdots\cdot\left(\frac{\delta_{1}}{\delta_{2}}\right)^{s_{1}t}\\
&=&\delta_u^j\left(\frac{\delta_{u-1}}{\delta_u}\right)^{s_{u-1}t-\hat{s}_{u-1}t+\hat{s}_{u-1}t}\cdot\left(\frac{\delta_{u-2}}{\delta_{u-1}}\right)^{s_{u-2}t-\hat{s}_{u-2}t+\hat{s}_{u-2}t}\cdot\cdots\cdot\left(\frac{\delta_{1}}{\delta_{2}}\right)^{s_{1}t-\hat{s}_{1}t+\hat{s}_{1}t}\\
&=&\beta^*_j(\hat{s}_1,\cdots,\hat{s}_d)\left(\frac{\delta_{u-1}}{\delta_u}\right)^{s_{u-1}t-\hat{s}_{u-1}t}\cdot\left(\frac{\delta_{u-2}}{\delta_{u-1}}\right)^{s_{u-2}t-\hat{s}_{u-2}t}\cdot\cdots\cdot\left(\frac{\delta_{1}}{\delta_{2}}\right)^{s_{1}t-\hat{s}_{1}t}\\
&<&\beta^*_j(\hat{s}_1,\cdots,\hat{s}_d)\left(\frac{\delta_{u-1}}{\delta_u}\right)^{\eps t}\cdot\left(\frac{\delta_{u-2}}{\delta_{u-1}}\right)^{\eps t}\cdot\cdots\cdot\left(\frac{\delta_{1}}{\delta_{2}}\right)^{\eps t}\\
&=&\beta^*_j(\hat{s}_1,\cdots,\hat{s}_d)\left(\frac{\delta_{1}}{\delta_{u}}\right)^{\eps t}\\
&=&\beta^*_j(\hat{s}_1,\cdots,\hat{s}_d)(1+O(\eps)).
\end{eqnarray*}
Since $\beta^*_j(s_1,\cdots,s_d)<1$, we have $\beta^*_j(s_1,\cdots,s_d)-\beta^*_j(\hat{s}_1,\cdots,\hat{s}_d)=O(\eps)$, thus $\lb_{m,N}(s_1,\cdots,s_d)\leq \lb_{m,N}(\hat{s}_1,\cdots,\hat{s}_d)+O(cN\epsilon)$.

\end{proof}
To discretize $x$, consider calculating the recursive formula by  
\begin{eqnarray*}
    g[i,x,y]&=&\max_{\substack{y'\leq y\\x'=\frac{\eps}{d}\lfloor\frac{d}{\eps}x(1-\frac{1}{q_{d+1-i}})^{-(yt-y't)}\rfloor}}\bigg(g[i-1,x',y']\nonumber\\
    & &+\sum_{j=y't+1}^{yt}x'\left(1-\frac{1}{q_{d+1-i}tf}\right)^{j-y't}(c-\E_F[r|r\leq r_{d+1-i}])\bigg)
\end{eqnarray*}
instead of \eqref{eqn:recur-f}. In other words, when we need to calculate the value of $\beta_{s_{i-1}}$ and check $g[i-1,\beta_{s_{i-1}},y']$, we first round $\beta_{s_{i-1}}$ to $\hat{\beta}_{s_{i-1}}$ that is the closest multiple of $\frac{\epsilon}{d}$, then call $g[i-1,\hat{\beta}_{s_{i-1}},y']$. For each $i$ the discretization incurs an additive error of $N\cdot\frac{\epsilon}{d}\cdot c$, which leads to an overall error of $cN\epsilon$. By setting $\epsilon=\frac{1}{m}$, there is an error of $O(\frac{cN}{m})$ for discretizing $x$ to multiples of $\frac{1}{md}$ and $y$ to multiples of $\frac{1}{m}$. The recursive formula can be efficiently solved via dynamic program with running time $O(d^2m^3)$, since there are $d\cdot md\cdot m=m^2d^2$ entries in the discretized table, while each time the max operator calls $O(m)$ values of $y'\leq y$. 
\end{proof}

\section{Other Deferred Proofs from Section \ref{sec:general_alg}} \label{app:general}
\subsection{Proof of Claim \ref{claim:beta_general}}
\begin{proof}[Proof sketch.]
Observe that the optimal solution of the above LP is when all nontrivial inequalities become equalities. This is because in an optimal solution, for the first constraint $j$ being a strict equality, if $j\not\in\{s_1 t,s_2t\cdots,s_d t\}$, then $\beta_{j+1}\leftarrow\beta_{j+1}-\epsilon$, $\beta_{j+2}\leftarrow\beta_{j+2}+\epsilon$ is a new feasible solution with the objective staying the same, while the $jth$ constraint can become tight; if $j=s_u t$ for some $u\in[d]$, then $\beta_{j+1}\leftarrow\beta_{j+1}-q_{d+1-u}\epsilon$, $\beta_{j+2}\leftarrow\beta_{j+2}+q_{d-u}\epsilon$ is a new feasible solution with the objective decrease by $(c-\E_{F}[r|r\leq r_{d+1-u}])q_{d+1-u}\epsilon-(c-\E_{F}[r|r\leq r_{d-u}])q_{d-u}\epsilon=(q_{d+1-u}-q_{d-u})(c-r_{d+1-u})\geq 0$ which is a non-negative value, while the $jth$ constraint can become tight. Repeat such process we can let all inequalities become tight, while the objective remains optimal.
\end{proof}

\subsection{Proof of Claim \ref{lem:negativelp_thresholds}}
\begin{proof}
For any optimal $\mathbf{y}$, suppose by way of contradiction $0<y_{iu}<\frac{1}{m-i+1}$, $0<y_{i'u}<\frac{1}{m-i'+1}$ for $i<i'$. Then setting $y_{iu}\leftarrow y_{iu}+\epsilon$, $y_{i'u}\leftarrow y_{i'u}-\epsilon$ for small enough $\epsilon$ leads to a new feasible solution since all constraints are still feasible. Furthermore, in the objective function $y_{iu}$ has coefficient $(m-i+1)\frac{fN}{m}(q_u-q_{u-1})(c-r_u)>(m-i'+1)\frac{fN}{m}(q_u-q_{u-1})(c-r_u)$ which is the coefficient of $y_{i'u}$. Thus after perturbing $\mathbf{y}$ we get a larger objective value, which contradicts the assumption of $\mathbf{y}$ being optimal.

For any optimal solution $\mathbf{y}$, if $z_u< z_{u'}$ for $u<u'$, then for $i=z_{u'}$, $y_{iu}<\frac{1}{m-i+1}$, while $y_{iu'}=\frac{1}{m-i+1}$. Then setting $y_{iu}\leftarrow y_{iu}+\frac{1}{q_u-q_{u-1}}\epsilon$, $y_{iu'}\leftarrow y_{iu'}-\frac{1}{q_{u'}-q_{u'-1}}\epsilon$ for small enough $\epsilon$ leads to a new feasible solution since all constraints are still feasible. Furthermore, the increase of the objective due to $y_{iu}$ is $(m-i+1)\frac{fN}{m}(c-r_u)>(m-i+1)\frac{fN}{m}(c-r_{u'})$ which is the decrease of the objective due to $y_{iu'}$. Thus after perturbing $\mathbf{y}$ we get a larger objective value, which contradicts the assumption of $\mathbf{y}$ being optimal.
\end{proof}

\subsection{Proof of Claim \ref{clm:lbub}}
\begin{proof} $s_{u-1}t+1<j\leq s_{u}t+1$ for $1\leq u\leq d$,
\begin{eqnarray*}
    \beta_j^*&=&\frac{N}{t}\left(1-\frac{1/q_d}{tf}\right)^{s_1 t-s_0 t}\cdots\left(1-\frac{1/q_{d+2-u}}{tf}\right)^{s_{u-1}t-s_{u-2}t}\left(1-\frac{1/q_{d+1-u}}{tf}\right)^{j-s_{u-1}t-1}\\
    &=&\frac{N}{t}\exp\left(\frac{s_1-s_0}{fq_d}+\frac{s_2-s_1}{fq_{d-1}}+\cdots+\frac{s_{u-1}-s_{u-2}}{fq_{d+2-u}}\right)\left(1-\frac{1/q_{d+1-u}}{tf}\right)^{j-s_{u-1}t-1}.
\end{eqnarray*}
Then since $t \to \infty$ we have,
\begin{eqnarray}
    \sum_{j=s_{u-1}t+1}^{s_ut}\beta_j^*&=&\frac{N}{t}\exp\left(\frac{s_1-s_0}{fq_d}+\frac{s_2-s_1}{fq_{d-1}}+\cdots+\frac{s_{u-1}-s_{u-2}}{fq_{d+2-u}}\right)q_{d+1-u}tf\left(1-\left(1-\frac{1/q_{d+1-u}}{tf}\right)^{s_ut-s_{u-1}t}\right)\nonumber\\
    &=&Nfq_{d+1-u}\exp\left(\frac{s_1-s_0}{fq_d}+\frac{s_2-s_1}{fq_{d-1}}+\cdots+\frac{s_{u-1}-s_{u-2}}{fq_{d+2-u}}\right)\left(1-\frac{s_u-s_{u-1}}{fq_{d+1-u}}\right)\label{eqn:sumofbeta}
\end{eqnarray}

Define $C=-cN+\sum_{u=1}^{d}fN(q_u-q_{u-1})r_u$ to be the constant that appears in definitions of both $\ub$ and $\lb$ functions, and $\gam_{u}=\frac{s_u-s_{u-1}}{fq_{d+1-u}}$ for each $u\in[d]$.

\begin{eqnarray*}
& \lb_{m,N}(s)&=C+\sum_{u=1}^{d}(c-\E_{\dist}[r|r\leq r_{d+1-u}])Nfq_{d+1-u}\gam_1\gam_2\cdots\gam_{u-1}\left(1-\gam_u\right)\\
&=&C+\sum_{u=1}^{d}\left(c-\frac{(q_1-q_0)r_1+\cdots+(q_{d+1-u}-q_{d-u})r_{d+1-u}}{q_{d+1-u}}\right)Nfq_{d+1-u}\gam_1\gam_2\cdots\gam_{u-1}\left(1-\gam_u\right)\\
&=&C+Nf\sum_{u=1}^{d}\Big((q_1-q_0)(c-r_1)+\cdots+(q_{d+1-u}-q_{d-u})(c-r_{d+1-u})\Big)\gam_1\gam_2\cdots\gam_{u-1}\left(1-\gam_u\right)\\
&=&C+Nf\sum_{u=1}^{d}(q_{d+1-u}-q_{d-u})(c-r_{d+1-u})\sum_{j=1}^{d+1-u}\gam_1\cdots\gam_{j-1}\left(1-\gam_j\right)\\
&=&C+Nf\sum_{u=1}^{d}(q_{d+1-u}-q_{d-u})(c-r_{d+1-u})(1-\gam_1\cdots\gam_{d+1-u})\\
&=&C+Nf\sum_{u=1}^{d}(q_{u}-q_{u-1})(c-r_{u})(1-\gam_1\cdots\gam_{u})=\ub_{m,N}(s).
\end{eqnarray*}
Here the first equality is by applying formula \eqref{eqn:sumofbeta} to Optimization Problem~\ref{def:lb}; 
the second equality is by the definition of $\E_{\dist}[r|r\leq r_{d+1-u}]$; 
the third equality is by $cq_{d+1-u}=c(q_1-q_0)+c(q_2-q_1)+\cdots+c(q_{d+1-u}-q_{d-u})$; 
the fourth equality is by regrouping the sum as a linear function of $(q_{d+1-u}-q_{d-u})(c-r_{d+1-u})$;
the fifth equality is by resolving the telescoping sum; 
the sixth equality is by replacing the iteration variable $u$ with $d+1-u$; 
the last equality is by the definition of $\ub_{m,N}(s)$ in Optimization Problem~\ref{def:negative_opt}. This finishes the proof of Claim~\ref{clm:lbub}.

\end{proof}

\section{Discussion about Competitive Ratio}
\label{sec:appendixcompetitiveratio}
In this section, we discuss the exact competitive ratio of the algorithms we proposed, i.e. Algorithm~\ref{alg:binary} for binary \adx distribution $\dist$, and Algorithm~\ref{alg:general} for general \adx distributions.

\subsection{Competitive Ratio of Algorithm~\ref{alg:binary} for Binary \adx Distribution}
\label{sec:binary_competitive}

In Section~\ref{sec:binary} we characterized the reward achieved by the algorithm. To calculate the competitive ratio, we need to characterize the optimal offline reward we can get from the instance. The following theorem applies to general \adx distribution $\dist$.

\begin{theorem}\label{thm:opt-general}
Consider any instance with total demand $N$ for all advertisers with \adx distribution $\dist$ (and cumulative density function $F$) , where each advertiser $a$ has sufficiently large demand $n_a$. Then if there are exactly $fN$ online queries, while there exists an offline matching such that each advertiser $a$ is matched to exactly $fn_a$ impressions, the optimal offline reward is
\begin{equation*}
\opt=(f-1)N\E_{x\sim \dist}[x|x\geq F^{-1}(\frac{1}{f})].
\end{equation*}
\end{theorem}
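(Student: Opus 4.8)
The first step is to rewrite the offline objective in a form where the structure of the graph essentially drops out. Fix the realized \adx rewards $r_q$ over the $fN$ queries. In any offline assignment we may assume, without loss of generality, that every query is assigned (an unassigned query sent to $a_d$ contributes $r_q\ge 0$) and that no advertiser is over‑delivered (moving an over‑delivery query to $a_d$ only weakly increases the objective). If $k_a\le n_a$ is the number of impressions given to advertiser $a$ and $Z$ is the number of queries sent to $a_d$, then $Z=fN-\sum_a k_a\ge fN-N=(f-1)N$, and the objective (\adx revenue minus under‑delivery penalty) equals
\begin{equation*}
\sum_{q\to a_d}r_q-c\sum_a(n_a-k_a)=(f-1)cN+\sum_{q\to a_d}(r_q-c).
\end{equation*}
Since $r_q\le r_d\le c$, every summand $r_q-c$ is nonpositive, so an offline solution wants to divert \emph{exactly} $(f-1)N$ queries to $a_d$, and would like those to be the $(f-1)N$ highest‑reward queries.

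\textbf{Upper bound.} Because the summands are nonpositive and $Z\ge(f-1)N$, for every realization of the rewards
\begin{equation*}
\opt\le(f-1)cN+\sum_{q\in\text{top }(f-1)N}(r_q-c)=\sum_{q\in\text{top }(f-1)N}r_q,
\end{equation*}
where ``top $(f-1)N$'' refers to the largest order statistics. I would then take the expectation over the i.i.d.\ draws: as the $n_a$ (hence $N$) grow, the empirical distribution of the $fN$ rewards converges to $\dist$, so the normalized sum of the top $(f-1)N$ order statistics converges to the average of the top $1-1/f$ mass of $\dist$, i.e.\ to $\E_{x\sim\dist}[x\mid x\ge F^{-1}(1/f)]$. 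This yields $\E[\opt]\le(f-1)N\,\E_{x\sim\dist}[x\mid x\ge F^{-1}(1/f)]+o(N)$.

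\textbf{Lower bound via a construction.} Let $M$ be the given offline matching in which each advertiser $a$ receives exactly $fn_a$ queries; since $\sum_a fn_a=fN$, $M$ covers every query exactly once. The key point is that $M$ depends only on the bipartite graph, not on the rewards, so the $fn_a$ rewards appearing in $a$'s bundle under $M$ are $fn_a$ i.i.d.\ draws from $\dist$. For each $a$, keep the $n_a$ queries of smallest reward in its bundle (so $a$ is fully satisfied and incurs no penalty) and divert the remaining $(f-1)n_a$ to $a_d$; in total this diverts $\sum_a(f-1)n_a=(f-1)N$ queries, so plugging into the objective identity above this solution has value $\sum_{\text{diverted }q}r_q$. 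For each $a$, as $n_a\to\infty$ the sum of the top $(f-1)n_a$ of its $fn_a$ i.i.d.\ draws concentrates at $(f-1)n_a\,\E_{x\sim\dist}[x\mid x\ge F^{-1}(1/f)]$; summing over $a$ shows this feasible solution achieves $(f-1)N\,\E_{x\sim\dist}[x\mid x\ge F^{-1}(1/f)]-o(N)$ in expectation, so $\E[\opt]$ is at least this. Combining the two directions gives the claimed equality up to lower‑order terms that vanish as the $n_a$ grow; this is where the ``sufficiently large $n_a$'' hypothesis is used.

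\textbf{Expected main obstacle.} Everything above is routine bookkeeping except the order‑statistics / concentration step. The delicate point is to argue that the sum of the top $1-1/f$ fraction of a large i.i.d.\ sample from $\dist$ concentrates at $(f-1)n\,\E_{x\sim\dist}[x\mid x\ge F^{-1}(1/f)]$, being careful when $\dist$ places an atom straddling the quantile $F^{-1}(1/f)$: there the quantity $\E_{x\sim\dist}[x\mid x\ge F^{-1}(1/f)]$ must be read as the average of the top $1-1/f$ mass — counting only the needed fraction of the boundary atom — equivalently $\tfrac{f}{f-1}\int_{1/f}^{1}F^{-1}(u)\,du$. This is a standard Glivenko--Cantelli argument, but stating it cleanly (and uniformly over the $\dist$'s arising in the paper) is the part that needs care.
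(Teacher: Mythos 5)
Your proposal is correct and follows essentially the same route as the paper's proof: upper-bounding $\opt$ by the expected sum of the top $(f-1)N$ rewards, and lower-bounding it with the feasible solution that, within each advertiser's bundle $I_a$ of the supply-factor matching, keeps the $n_a$ lowest-reward queries and sends the rest to \adx, with concentration as $n_a\to\infty$. Your explicit rewriting of the objective (showing exactly $(f-1)N$ queries should go to \adx under $r\le c$) and the remark about atoms at the quantile $F^{-1}(1/f)$ are just more careful renderings of steps the paper treats tersely.
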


\begin{proof}[Proof of Theorem \ref{thm:opt-general}]
By the definition of supply factor $f$, there exists a matching such that each advertiser $a$ is matched to a set $I_a$ of $fn_a$ impressions. In an optimal offline allocation, $N=\sum_a n_a$ impressions are allocated to the contract, while $(f-1)N$ impressions are allocated to \adx. Then the total reward from the impressions are upper bounded by the sum of the $(f-1)N$ largest \adx reward of all impressions. Let $F$ be the cumulative density function of distribution $\dist$. Then the optimal offline reward is
\begin{eqnarray*}
\opt&\leq&\E_{X_1,\cdots,X_{fN}\sim\dist}[\textrm{Sum of the largest }(f-1)N\textrm{ variables in }X_1,\cdots,X_{fN}]\\
&\leq& (f-1)N\E_{x\sim \dist}[x|x\geq F^{-1}(\frac{1}{f})].
\end{eqnarray*}
On the other hand, consider an allocation that only allocates impressions in $I_a$ to either advertiser $a$ or \adx. In particular, impressions with the smallest $n_a$ \adx reward are allocated to $a$, and the rest of the impressions are allocated to \adx. Then we have
\begin{eqnarray*}
\opt&\geq&\sum_{a}\E_{X_1,\cdots,X_{fn_a}\sim\dist}[\textrm{Sum of the largest }(f-1)n_a\textrm{ variables in }X_1,\cdots,X_{fn_a}]\\
&\geq& \sum_{a}(f-1)n_a\E_{x\sim \dist}[x|x\geq F^{-1}(\frac{1}{f})](1-o(1))=(1-o(1))(f-1)N\E_{x\sim \dist}[x|x\geq F^{-1}(\frac{1}{f})].
\end{eqnarray*}
The second line holds since when $n_a\to\infty$, with high probability the empirical distribution of the \adx weights of the impressions has a negligible distance from the true \adx distribution $\dist$ (using a standard concentration bound, we see that the deviation from the expectation is $\tilde{O}(\frac{1}{\sqrt{n_a}})$). The theorem follows by combining the lower bound and the upper bound of $\opt$.
\end{proof}

Now we are ready to analyze the competitive ratio.

\begin{theorem} \label{thm:binaryratio}
For a penalty $c$, supply factor $f \geq 1$, and a binary reward distribution with parameters $q$ and $r$, where $r \leq c$, Algorithm \ref{alg:binary} has a competitive ratio depending on $q,c,r,f$ as follows:
\begin{align*}
\begin{cases} 
    \frac{(1-1/f)-(1-r/c) ^{1-q}e^{-1/f}}{(1-q)}\cdot\frac{c}{r} &\mbox{if } q > 1/f \textrm{ and } q\geq\frac{1}{f\ln\frac{c}{c-r}}; \\
 (1-\frac{(1-r/c) ^{1-q}e^{-1/f}}{(1-1/f)})\cdot\frac{c}{r} & \mbox{if } q \leq 1/f \textrm{ and } q\geq\frac{1}{f\ln\frac{c}{c-r}};\\
  \frac{(1-1/f)+(1-q)(1-r/c)- qe^{-\frac{1}{qf}}}{(1-q)}\cdot\frac{c}{r} &\mbox{if } q > 1/f \textrm{ and } q<\frac{1}{f\ln\frac{c}{c-r}}; \\
 (1+\frac{(1-q)(1-r/c)- qe^{-\frac{1}{qf}}}{(1-1/f)})\cdot\frac{c}{r} & \mbox{if } q \leq 1/f \textrm{ and } q<\frac{1}{f\ln\frac{c}{c-r}}. 
\end{cases}
\end{align*}
\end{theorem}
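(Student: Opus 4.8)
The plan is to obtain the competitive ratio of Algorithm~\ref{alg:binary} as the quotient $\alg/\opt$, by plugging in the two reward lower bounds for $\alg$ proved in Claim~\ref{claim:threshold} together with the closed form for the optimal offline reward $\opt$ established (for general distributions, hence in particular the binary one) in Theorem~\ref{thm:opt-general}. Two independent dichotomies appear, and their product yields the four cases of the statement: one on the $\alg$ side — which of the two settings of the threshold $s^\ast$ is active — and one on the $\opt$ side — whether the $1/f$-quantile of $\dist$ lands inside its atom at $0$.

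First I would record the $\alg$ side. Writing $x^\ast = q\ln(1-r/c)+\tfrac1f$ for the unconstrained maximizer of $RHS(x)$ from~\eqref{eqn:alglb-binary}, Claim~\ref{claim:threshold} gives $\alg \ge Ncf\bigl((1-\tfrac1f)-(1-\tfrac rc)^{1-q}e^{-1/f}\bigr)$ when $x^\ast\in[0,\tfrac1f]$, and $\alg \ge Ncf\bigl((1-\tfrac1f)+(1-q)(1-\tfrac rc)-qe^{-1/(qf)}\bigr)$ when $x^\ast<0$, i.e.\ when the threshold is clamped to $s^\ast=0$. Since $x^\ast\le\tfrac1f$ always and $x^\ast\ge0$ is equivalent to $q\le\tfrac{1}{f\ln(c/(c-r))}$, this is precisely the dichotomy governed by $\tfrac{1}{f\ln(c/(c-r))}$ in the statement, and the two bracketed quantities are exactly the numerators that will survive, up to the factor $c/r$, after dividing by $\opt$.

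Next I would compute $\opt$ for the binary distribution from Theorem~\ref{thm:opt-general}. Of the $fN$ queries, a $(1-q)$ fraction carry \adx reward $r$ and the rest carry $0$; an optimal offline solution delivers $N$ impressions to the contracts and sends the top $(f-1)N$ remaining impressions to \adx. If $(1-q)f\ge f-1$, i.e.\ $q\le\tfrac1f$, there are at least $(f-1)N$ reward-$r$ queries, so $\opt=(f-1)Nr=(1-\tfrac1f)fNr$; if $q>\tfrac1f$, only $(1-q)fN$ reward-$r$ queries exist and every one of them goes to \adx, so $\opt=(1-q)fNr$. This is the correct evaluation of $(f-1)N\,\E_{x\sim\dist}[x\mid x\ge F^{-1}(1/f)]$ when the quantile sits on the atom, and it is the single place where care is needed: reading that conditional expectation naively as $\E[x\mid x\ge 0]=(1-q)r$ would give $(f-1)N(1-q)r$ and misalign the $q>1/f$ cases.

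Finally I would assemble the four cases. In each case the factor $fN$ cancels between numerator and denominator and a factor $c$ pulls out, so the ratio equals $\tfrac cr$ times the active numerator $(1-\tfrac1f)-(1-\tfrac rc)^{1-q}e^{-1/f}$ or $(1-\tfrac1f)+(1-q)(1-\tfrac rc)-qe^{-1/(qf)}$, divided by $(1-\tfrac1f)$ when $q\le\tfrac1f$ and by $(1-q)$ when $q>\tfrac1f$; rewriting the first such quotient as $1-\tfrac{(1-r/c)^{1-q}e^{-1/f}}{1-1/f}$, and similarly the other, reproduces the four displayed formulas. To upgrade these from lower bounds on $\alg/\opt$ to the \emph{exact} competitive ratio, I would invoke Theorem~\ref{thm:binary-hardness}: on Example~\ref{ex:kvv-general} the LP~\eqref{eqn:lp-binary} that produced the $\alg$ bound is attained, and since by Theorem~\ref{thm:opt-general} $\opt$ is the same for every instance with the given $N,\dist,f$, the worst-case ratio of Algorithm~\ref{alg:binary} equals these quantities, up to the additive $O(1/t)$ and $O(1/m)$ slack already discussed. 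I do not expect a genuinely hard step here; the argument is bookkeeping, with the atom-at-the-quantile evaluation of $\opt$ being the only subtlety.
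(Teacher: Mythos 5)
Your proposal is correct and follows essentially the same route as the paper: lower-bound \alg{} via the two cases of Claim~\ref{claim:threshold}, evaluate \opt{} for the binary distribution directly (splitting on $q \lessgtr 1/f$ because the $1/f$-quantile sits on the atom at $0$, exactly as the paper's proof does), take the quotient so that $fN$ cancels and $c/r$ factors out, and appeal to Theorem~\ref{thm:binary-hardness} together with the instance-independence of \opt{} (Theorem~\ref{thm:opt-general}) for exactness. One caveat: your (correct) computation places the interior-maximum expression $(1-1/f)-(1-r/c)^{1-q}e^{-1/f}$ in the regime $q\le\frac{1}{f\ln\frac{c}{c-r}}$ and the clamped $s^{*}=0$ expression in the complementary regime, which is the reverse of the pairing printed in the theorem statement, so your remark that this ``is precisely the dichotomy in the statement'' glosses over what appears to be a transposition of the conditions (and an inherited sign on the $(1-q)(1-r/c)$ term from Claim~\ref{claim:threshold}) in the paper's statement itself rather than any flaw in your derivation.
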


We note that there are settings in which the competitive ratio of our algorithm (and more generally \textit{any online algorithm}) may be \textit{negative}. This also underscores the importance of the supply factor $f$ in these types of penalty settings. 

\begin{proof}

Recall that in the binary \adx distribution, with probability $q$ the \adx reward is $0$ for an impression and with probability $1-q$ it is $r$. By Theorem~\ref{thm:opt-general}, the optimal reward in the binary case can be written as
\begin{equation*}
    \opt=\begin{cases} 
Nf(1-q)r  &\mbox{if } q > 1/f; \\
Nf(1-1/f)r & \mbox{if } q \leq 1/f. 
\end{cases}
\end{equation*}
Since in the first case all the impressions with \adx value $0$ are assigned to contracts and since $q>1/f$ all contracts are satisfied and there is no penalty. We get expected reward of $(1-q) r$. In the second case an impression is allocated to \adx only after all contracts are satisfied, and hence $N$ (or a $1/f$ fraction of all queries) go to contracts and the remaining goes to \adx. By comparing each of these cases with the value of $\alg$ in the cases considered in Claim \ref{claim:threshold}, we get a competitive ratio minimizing the four cases described in the theorem statement.  

\end{proof}

\subsection{Worst-case Competitive Ratio for Ad Exchange Distributions with Fixed Mean }\label{sec:comp_ratio}
The exact competitive ratio for general \adx distribution is hard to describe, and as discussed in Section~\ref{sec:DP}, even computing the optimal thresholds are not straightforward. Surprisingly, we are able to precisely characterize the \adx distribution $\dist$ with the worst competitive ratio over all reward distributions with the same mean.

In particular, let $\mathcal{F}_{\mu}$ be the class of all value distributions $\dist$ such that the mean of distribution $\dist$ is $\mu$.
We show that among all \adx reward in $\mathcal{F}_{\mu}$, the optimal competitive ratio is minimized when $\dist$ is a binary distribution, i.e. a distribution with support size 2. To prove this, we analyze the maximum reward obtained by any online algorithm on Example~\ref{ex:kvv-general}.
Note also that we have already shown in Theorem~\ref{thm:general-hardness} that there is always a threshold -based algorithm that achieves the optimal reward, and hence we can restrict our attention to threshold-based algorithms. 
\begin{theorem}\label{thm:worstcase_dist} Consider the class $\mathcal{F}_{\mu}$ of \adx distribution with mean $\mu$, and let $\ratio(\dist)$ be the reward of best online algorithm on Example~\ref{ex:kvv-general} on a distribution $\dist \in \mathcal{F}_{\mu}$. Then $\ratio(\dist)$ is minimized when $\dist$ is one of the following:
\begin{itemize}
\item A fixed distribution with reward $\mu$.
\item A (binary) distribution with parameters $\{(r_1=0, q_1=1/f), (r_2=\frac{f}{f-1}\mu, q_2=\frac{f-1}{f}) \}$.
\item A (binary) distribution with parameters $\{(r_1=c, q_1=\frac{f-1}{f}), (r_2=f\mu - (f-1)c, \frac{f-1}{f}, q_2=1/f) \}$
\end{itemize}
\end{theorem}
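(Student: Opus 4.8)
The plan is to express the competitive ratio attained on Example~\ref{ex:kvv-general} (which, by the hardness results of Section~\ref{sec:hardness}, equals the optimal competitive ratio of the preamble, since that instance is worst-case for every distribution) as a quotient of two functionals of the quantile function $Q_\dist=F^{-1}$, and then collapse the minimization over $\dist\in\mathcal{F}_\mu$ to a two-parameter optimization. \emph{Step 1 (explicit form).} By Theorem~\ref{thm:general-hardness}, Optimization Problem~\ref{def:negative_opt} and Claim~\ref{clm:lbub}, in the $m,N\to\infty$ regime the best reward of any online algorithm on Example~\ref{ex:kvv-general} equals $\ratio(\dist)=fN\mu-cN+fN\,\Psi(\dist)$, where $\Psi(\dist)=\max_{\rho}\int_0^1\big(c\,\rho(\theta)-\int_0^{\rho(\theta)}Q_\dist(p)\,dp\big)\,d\theta$ and $\rho:[0,1]\to[0,1]$ ranges over non-increasing functions with $\int_0^{\theta}\tfrac{f\rho(\tau)}{1-\tau}\,d\tau\le 1$ for all $\theta$ (here $\theta$ indexes the groups; while the still-available advertisers share a common satisfaction ratio the algorithm routes the lowest-reward $\rho(\theta)$-fraction of queries to contracts, gaining $c-r$ per query, i.e.\ $c\rho-\int_0^{\rho}Q_\dist$ in total; $\tfrac{f}{1-\tau}$ is the rate at which that common ratio grows and may not exceed $1$; monotonicity of $\rho$ is the threshold structure of Lemma~\ref{lem:negativelp_thresholds}). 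By Theorem~\ref{thm:opt-general}, $\opt(\dist)=fN\int_{1/f}^{1}Q_\dist(p)\,dp$, and the quantity to minimize over $\dist\in\mathcal{F}_\mu$ is $\ratio(\dist)/\opt(\dist)$.

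\emph{Step 2 (reparametrize by a concave profit function).} Set $h_\dist(\rho)=\int_0^{\rho}\big(c-Q_\dist(p)\big)\,dp$. Since $Q_\dist$ is non-decreasing with $0\le Q_\dist\le c$ (recall $r_1=0$, $r_d\le c$), the map $\dist\mapsto h_\dist$ is a bijection of $\mathcal{F}_\mu$ onto the convex set $\mathcal{H}$ of concave non-decreasing $h:[0,1]\to[0,c-\mu]$ with $h(0)=0$, $h(1)=c-\mu$, $0\le h'\le c$. After the substitution $\sigma=-\ln(1-\theta)$, monotonicity of $\rho$ turns the cumulative constraint into the single constraint $\int_0^{\infty}\rho\le 1/f$, so $\Psi(\dist)=V[h_\dist]$ with $V[h]=\max\{\int_0^{\infty}h(\rho(\sigma))e^{-\sigma}\,d\sigma:\ \rho\text{ non-increasing},\ \int_0^{\infty}\rho\le 1/f\}$, giving $\ratio(\dist)=fN\mu-cN+fN\,V[h_\dist]$. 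Writing $Q_\dist=c-h_\dist'$ also gives $\opt(\dist)=fN\big(h_\dist(1/f)+\mu-c/f\big)$. Thus, dividing by $N$, we must minimize $\dfrac{(f\mu-c)+f\,V[h]}{(f\mu-c)+f\,h(1/f)}$ over $h\in\mathcal{H}$, where $V$ is a supremum of linear functionals of $h$ (hence convex) and $h\mapsto h(1/f)$ is linear.

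\emph{Step 3 (the crux: reduce to an extreme point of $\mathcal{H}$).} A Dinkelbach reduction turns the fractional program into: the minimum ratio $\lambda^\ast$ is the value of $\lambda$ with $\min_{h\in\mathcal{H}}\big(f\,V[h]-\lambda f\,h(1/f)\big)=(\lambda-1)(f\mu-c)$. The key move is that, although $V$ is convex in $h$, it can be exposed as $V[h]=\sup_{\rho}L_\rho[h]$ with $L_\rho[h]=\int_0^{\infty}h(\rho(\sigma))e^{-\sigma}\,d\sigma$ \emph{linear} in $h$ and \emph{concave} in $\rho$; since $\mathcal{H}$ and the admissible set of $\rho$ are both convex and compact (Arzel\`a--Ascoli, resp.\ Helly selection), Sion's minimax theorem converts $\min_h\sup_\rho\big(fL_\rho[h]-\lambda f h(1/f)\big)$ into $\sup_\rho\min_h\big(fL_\rho[h]-\lambda f h(1/f)\big)$, whose inner minimization is now of a \emph{linear} functional of $h$ over the convex set $\mathcal{H}$ and so is attained at an extreme point of $\mathcal{H}$; a saddle-point argument then shows the worst $\dist^\ast$ is such an extreme point. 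Finally one characterizes the extreme points of $\mathcal{H}$: they are the piecewise-linear $h$ whose derivative is $(c,v,0)$ on a partition $[0,a]\cup[a,b]\cup[b,1]$ for some $0\le a\le b\le 1$, $v\in[0,c]$ — equivalently, via $Q=c-h'$, the distributions supported on $\{0,x,c\}$ for some $x\in[0,c]$ (at most three atoms, two of them at the endpoints). I expect this step to be the main obstacle: one has to set up the minimax swap correctly (the idea being to expose $V$ as a sup of affine maps so the inner problem becomes linear) and then pin down the extreme points of $\mathcal{H}$.

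\emph{Step 4 (finish with the explicit optimization).} It remains to minimize $\ratio(\dist)/\opt(\dist)$ over the three-parameter family with atoms of mass $a$ at $0$, $b-a$ at $x$, $1-b$ at $c$, subject to $(b-a)x+(1-b)c=\mu$. For such $\dist$ the reward is again const $+\,fN\,V[h_\dist]$ with $V[h_\dist]$ in closed form (the maximizing $\rho$ is a step function through the breakpoints of $h_\dist$, exactly as in the binary analysis of Section~\ref{sec:binary}), and $\opt(\dist)$ is given by Theorem~\ref{thm:opt-general}. Using the shift identities $\ratio_{\{r_1,r_2\},\,c}=\ratio_{\{0,\,r_2-r_1\},\,c-r_1}+(f-1)Nr_1$ and likewise for $\opt$ (the reduction at the start of Section~\ref{sec:binary}), together with the closed forms of Claim~\ref{claim:threshold} and Theorem~\ref{thm:opt-general}, the ratio becomes an explicit function of two real parameters; minimizing it, the optimum is always attained at a corner: either the interior atom disappears (leaving the point mass at $\mu$), or the atom at $c$ disappears while the atom at $0$ carries mass $1/f$ (the binary distribution $\{(0,\tfrac1f),(\tfrac{f}{f-1}\mu,\tfrac{f-1}{f})\}$, feasible when $\mu\le\tfrac{f-1}{f}c$), or the atom at $0$ disappears while the remaining non-$c$ atom carries mass $1/f$ (the distribution $\{(f\mu-(f-1)c,\tfrac1f),(c,\tfrac{f-1}{f})\}$, feasible when $\mu\ge\tfrac{f-1}{f}c$). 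The mass $1/f$ is forced at the corners because that is exactly where the formula for $\opt$ in Theorem~\ref{thm:opt-general} — equivalently the adversary's ability to place all remaining mass above the $1/f$-quantile — changes behaviour. These are precisely the three distributions in the statement.
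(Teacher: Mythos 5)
Your Steps 1--2 are a legitimate continuum reformulation of the paper's Optimization Problem~\ref{def:negative_opt}, but Step 3 has a genuine gap, and it is exactly the step you flag as the crux. The Dinkelbach objective $fV[h]-\lambda f\,h(1/f)$ is \emph{convex} in $h$ (a supremum of linear functionals minus a linear one), and you are \emph{minimizing} it over the convex set $\mathcal{H}$; minimizers of a convex functional over a convex set need not be extreme points (that principle applies to maximization). The minimax swap does not repair this: Sion gives $\min_h\sup_\rho=\sup_\rho\min_h$, and for the outer-optimal $\rho^\ast$ the inner linear minimization is indeed attained at an extreme point $\hat h$, but to conclude that $\hat h$ minimizes the original $\sup_\rho$-functional you would need $\rho^\ast$ to remain a best response to $\hat h$ (i.e., $V[\hat h]=L_{\rho^\ast}[\hat h]$), which the saddle-point structure does not supply; one only gets $fV[\hat h]-\lambda f\hat h(1/f)\geq$ the saddle value, the wrong direction. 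So the reduction "the worst $\dist$ is an extreme point of $\mathcal{H}$, hence supported on $\{0,x,c\}$" is not established. On top of that, Step 4 --- where the three specific candidates, and in particular the masses $1/f$ and $1-1/f$, must emerge --- is only asserted ("the optimum is always attained at a corner"), and that is precisely where the real content of the theorem lies.

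For comparison, the paper's proof avoids all of this machinery. Working per algorithm (per threshold vector $x_1,\dots,x_m$), it lower-bounds each \adx term: for $x_i\geq 1/f$, $(1-x_i)\E[r\,|\,r\geq F^{-1}(x_i)]\geq(1-x_i)\mu_2$, and for $x_i<1/f$, $(1-x_i)\E[r\,|\,r\geq F^{-1}(x_i)]\geq(\tfrac1f-x_i)\mu_1+(1-\tfrac1f)\mu_2$, where $\mu_1,\mu_2$ are the conditional means of the bottom-$1/f$ and top-$(1-1/f)$ quantiles. The resulting ratio bound has the form $(-X+Y\mu_1+Z\mu_2)/\mu_2$; eliminating $\mu_1=f\mu-(f-1)\mu_2$ makes it linear in $1/\mu_2$, hence monotone, so it is extremized at the endpoints $\mu_2=\mu$ or $\mu_2=\min\left(c,\tfrac{f}{f-1}\mu\right)$, and equality in both bounds is attained exactly by the two-point distribution placing mass $1/f$ at $\mu_1$ and $1-1/f$ at $\mu_2$ --- which is how the masses $1/f$, $1-1/f$ and the three candidate distributions are forced, rather than guessed. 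If you want to salvage your route, you would need either a correct argument that some worst-case $h$ lies in your three-atom family (not via extreme points of a convex minimization), or to carry out Step 4 explicitly; alternatively, adopt the paper's conditional-mean bound, which accomplishes the whole reduction in a few lines.
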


This theorem allows us to characterize the worst case competitive ratio for a general distribution, with the competitive ratio of a fixed or a binary reward distribution as determined in Section \ref{sec:binary}. The exact competitive ratio can be computed by combining Theorem \ref{thm:worstcase_dist} and Theorem \ref{thm:binaryratio}. The combination of these two theorems, involves many cases depending on the relations between $r,c,f, \mu$, which we do not mention here. 

\begin{proof}
In Section~\ref{sec:general-hardness}, we showed that the optimal algorithm for Example~\ref{ex:kvv-general} has the following form: For each group $G_i$ of queries that arrives, the algorithm sets a threshold $t_i$ such that for any impression with \adx reward $\leq t_i$, the impression is allocated to the advertiser with the lowest satisfaction ratio; otherwise, the impression is allocated to \adx. Let $\mathcal{A}$ be the class of algorithms with such a form. An equivalent interpretation is that the algorithm allocates $x_i=F(t_i)$ fraction of all queries in $G_i$ to the advertiser with lowest satisfaction ration, and $(1-x_i)$ fraction of all the impressions to \adx. Note that each advertiser is matched with at most $n$ impressions. Also, using similar reasoning as in Section \ref{sec:general_alg}, we know $\frac{fx_i}{m-i+1}n$ impressions in group $G_i$ are allocated to available advertisers in that group. Thus the following constraint holds for any algorithm $A\in\mathcal{A}$ and advertisers available in each group $G_i$:
\begin{equation*}
    \sum_{i=1}^{m}\frac{fx_i}{m-i+1}n\leq n.
\end{equation*}

Now we can compute the expected reward $\alg$ of algorithm $A$ defined by $(x_1,\cdots,x_m)$. Firstly, the penalty of not satisfying the contracts is $-mn_ac$ when impressions are not allocated, and we allocate $\sum_{i=1}^{m}fx_in$ impressions in total. Thus the total reward (or penalty) contributed to the objective by contract advertisers is $(\sum_{i=1}^{m}fx_in-mn)c$ in total. Secondly, the \adx reward contributed from impressions in $G_i$ is w.h.p.\footnote{Note that we can used this expected values as a high-probability reward bound for the exact same reasons as we showed in Section \ref{sec:binary_competitive}.}~$f(1-x_i)n\E_{r\sim F}[r|r\geq F^{-1}(x_i)]$ for each phase $i$. This is because there are $fn$ impressions in $G_i$, and the algorithm allocates $(1-x_i)$ fraction of impressions with highest valued \adx reward to \adx. Combining the contribution from contracts and \adx, we have the following objective for any such algorithm:
\begin{equation}\label{eqn:fixedmeanalg}
    \frac{1}{n}\alg=\left(\sum_{i=1}^{m}fx_i-m\right)c+\sum_{i=1}^{m}f(1-x_i)\E_{r\sim F}[r|r\geq F^{-1}(x_i)].
\end{equation}
Next, we bound the optimal offline reward as follows.  We can use Theorem \ref{thm:opt-general} to get the following:
\[\frac{1}{n}\opt=(f-1)m\E_{r\sim F}\left[r\Big|r\geq F^{-1}\left(\frac{1}{f}\right)\right].\]
Recall, that this follows from the fact that an optimal allocation can choose to allocate $(f-1)mn$ queries with high \adx weight to \adx and $mn$ queries with \adx weight in the lowest $\frac{1}{f}$ quantile to satisfy the contract of all the advertisers.

Let $\mu_1:=\E_{r\sim F}[r|r<F^{-1}(1/f)]$ and $\mu_2:=\E_{r\sim F}[r|r\geq F^{-1}(1/f)]$ be the expected value of bottom $\frac{1}{f}$ quantile and top $1-\frac{1}{f}$ quantile of distribution $\dist$ respectively. 
Then we can rewrite \begin{equation}\label{eqn:fixedmeanopt}
    \frac{1}{n}\opt=(f-1)m\mu_2.
\end{equation}
Observe that for $x_i\geq\frac{1}{f}$,
\begin{equation}\label{eqn:topquantile}
    (1-x_i)\E_{r\sim F}[r|r\geq F^{-1}(x_i)]\geq (1-x_i)\mu_2,
\end{equation}
and the equality holds when $x_i=\frac{1}{f}$, or the top $1-\frac{1}{f}$ quantile of $\dist$ has a fixed value $\mu_2$, i.e. $\Pr_{r\sim F}[r=\mu_2]\geq1-\frac{1}{f}$ if $x_{i}>\frac{1}{f}$;
for $x_i<\frac{1}{f}$,
\begin{equation}\label{eqn:bottomquantile}
    (1-x_i)\E_{\sim F}[r|r\geq F^{-1}(x_i)]\geq (\frac{1}{f}-x_i)\mu_1+(1-1/f)\mu_2,
\end{equation}
and the equality hold only when the bottom $\frac{1}{f}$ quantile of $\dist$ has a fixed value $\mu_1$, i.e. $\Pr_{r\sim \dist}[r=\mu_1]\geq\frac{1}{f}$.
Thus for any algorithm $A\in\mathcal{A}$ defined by allocation probability $(x_1,\cdots,x_m)$, apply the above inequalities \eqref{eqn:topquantile} and \eqref{eqn:bottomquantile} to formulas \eqref{eqn:fixedmeanalg} and \eqref{eqn:fixedmeanopt}, we get the competitive ratio of algorithm $A$ is
\begin{equation}\label{eqn:fixedmeanratio}
    \frac{\alg}{\opt}=\frac{\left(\sum_{i=1}^{m}fx_i-m\right)c+\sum_{i=1}^{m}f(1-x_i)\E[r\sim F|r\geq F^{-1}(x_i)]}{(f-1)m\mu_2}\geq\frac{-X+Y\mu_1+Z\mu_2}{\mu_2}
\end{equation}
for coefficients $X,Y,Z$ that are derived from \eqref{eqn:topquantile} and \eqref{eqn:bottomquantile} that only depend on $x_1,\cdots,x_m$. Since $\mu_1=f\mu-(f-1)\mu_2$, the right hand side of the inequality \eqref{eqn:fixedmeanratio} is a linear function of $\frac{1}{\mu_2}$, thus monotone with respect to $\mu_2$. The minimum can be achieved when $\mu_2$ is either minimized or maximized (depending on the sign of the coefficient). 
Since $\mu_2\geq\mu$, thus the minimum value of $\mu_2$ is $\mu$, and in this case $\mu_1=1$.
If $\frac{f}{f-1}\mu\leq c$, then $\mu_2$ has maximum value $c$, and in this case $\mu_1=f\mu-(f-1)c$; otherwise, $\mu_2$ has maximum value $\frac{f}{f-1}\mu$, and in this case $\mu_1=0$.
The minimum of the right hand side of inequality \eqref{eqn:fixedmeanratio} can be achieved in one of the above three cases. Notice that the equality of \eqref{eqn:fixedmeanratio} can hold when $\dist$ is a binary distribution with support $\mu_1$ and $\mu_2$ with corresponding probability $\frac{1}{f}$ and $1-\frac{1}{f}$. 
Thus for any algorithm $A\in \mathcal{A}$, the reward of $A$ for Example~\ref{ex:kvv-general} is minimized when $\dist$ is a binary distribution, among all distributions with fixed mean $\mu$.
\end{proof}

\end{document}